\newtheorem{lemma}{\textbf{Lemma}}
\newtheorem{corollary}{\textbf{Corollary}}
\newtheorem{definition}{\textbf{Definition}}
\newtheorem{theorem}{\textbf{Theorem}}
\newtheorem{proposition}{\textbf{Proposition}}
\newtheorem{observation}{\textbf{Observation}}
\newcommand{\ie}{\emph{i.e.}}
\newcommand{\eg}{\emph{e.g.}}
\newcommand{\etal}{\emph{et al. }}
\newcommand{\gaussnum}[2]{{#1 \brack #2}}
\newcommand{\sspan}[1]{\left\langle #1 \right\rangle}
\newcommand{\Prob}[1]{{\ensuremath{\mathbb{P}\left[ #1 \right]} }}
\newcommand{\Expc}[2]{{\ensuremath{\mathbb{E}_{#1}\left[ #2 \right]} }}
\renewcommand{\=}{\doteq}
\newcommand{\1}{\mathds{1}}
\newcommand{\col}{\mbox{col}}
\newcommand{\Fbb}{\mathbb{F}}
\newcommand{\Cca}{\mathcal{C}}
\newcommand{\Rca}{\mathcal{R}}
\newcommand{\Sca}{\mathcal{S}}
\newcommand{\Xca}{\mathcal{X}}
\newcommand{\Yca}{\mathcal{Y}}
\newcommand{\aleq}{\stackrel{.}{\leq}}
\newcommand{\wt}{\widetilde}
\title{On the Capacity of Non-Coherent\\ Network Coding}
\author{M.~Jafari~Siavoshani, S.~Mohajer, C.~Fragouli, S
  N.~Diggavi\\ \vspace{0.2in} Ecole Polytechnique F\'{e}d\'{e}rale de Lausanne
  (EPFL), Switzerland\\
University of California, Los Angeles (UCLA), USA  
\thanks{S N. Diggavi was at EPFL and now is at the
    University of California, Los Angeles (UCLA). The work of
    M.~Jafari~Siavoshani and C.~Fragouli was supported in part by the
    Swiss National Science Foundation through the grant
    \#~PP002-110483. The work of S.~Mohajer and C.~Fragouli was
    supported in part by the ERC Starting Investigator grant
    \#~240317. %This work was also funded in part by the ERC Starting
    %Grant Project NOWIRE ERC-2009-StG-240317 and the Swiss National
    %Science Foundation Award \#~PP00P2-128639.
    }}
\begin{document}
\maketitle

\begin{abstract}

  We consider the problem of multicasting information from a source to
  a set of receivers over a network where intermediate network nodes
  perform randomized network coding operations on the source packets.
  We propose a channel model for the non-coherent network coding
  introduced by Koetter and Kschischang in
  \cite{KoetKsch-IT08-erasure}, that captures the essence of such a
  network operation, and calculate the capacity as a function of
  network parameters. We prove that use of subspace coding is optimal,
  and show that, in some cases, the capacity-achieving distribution
  uses subspaces of several dimensions, where the employed dimensions
  depend on the packet length.  This model and the results also allow
  us to give guidelines on when subspace coding is beneficial for the
  proposed model and by how much, in comparison to a coding vector
  approach, from a capacity viewpoint. We extend our results to the
  case of multiple source multicast that creates a virtual multiple
  access channel\footnote{Some parts of the work in this paper was
    presented at ISIT'08, ISIT'09, and ITW'09.}.
\end{abstract}

\centerline{\textbf{Keywords}}
Network coding, non-coherent communication, subspace coding, channel capacity,
multi-source multicast, randomized network coding.

%--------------------------------------------------------------------
% Section -----------------------------------------------------------
%\input{Introduction}
\section{Introduction}

The network coding techniques for information transmission in networks
introduced in \cite{AhlCaiLiYeu} have attracted significant interest
in the literature, both because of posing theoretically interesting
questions, as well as because of potential impact in applications. The
first fundamental result proved in network coding, and perhaps still
the most useful from a practical point of view today, is that, using
linear network coding \cite{LiCaYe-IT03,KoMe-NetTr03}, one can achieve
rates up to the common min-cut value when multicasting to $N_r\geq 1$
receivers. In general this may require operations over a field of size
approximately $\sqrt{N_r}$, which translates to communication using
packets of length $\frac{1}{2}\log N_r$ bits \cite{information_flow}.

However, this result assumes that the receivers know perfectly the
operations that the network nodes perform. In large dynamically
changing networks, collecting network information comes at a cost, as
it consumes bandwidth that could instead have been used for
information transfer. In practical networks, where such deterministic
knowledge is not sustainable, the most popular approach is to perform
randomized network coding \cite{HoKoMeEfShKa2006} and to append coding
vectors at the headers of the packets to keep track of the linear
combinations of the source packets they contain (see, \emph{e.g.},
\cite{ChWuJa-Alert03}). The coding vectors have an overhead of $h \log
N_r$ bits, where $h$ is the total number of packets to be linearly
combined.  This results in a loss of information rate that can be
significant with respect to the min-cut value. In particular, in
wireless networks such as sensor networks where communication is
restricted to short packet lengths, the coding vector overhead can be
a significant fraction of the overall packet length \cite{TinyOs,KeJaFrArDi-InfoCom09}.

Use of coding vectors is akin to use of training symbols to learn the
transformation induced by a network. A different approach is to assume
a non-coherent scenario for communication, as proposed in \cite{KoetKsch-IT08-erasure}, where neither the source(s)
nor the receiver(s) have any knowledge of the network topology or the
network nodes operations. Non-coherent communication allows for
creating end-to-end systems completely oblivious to the network state.
Several natural questions arise considering this non-coherent framework: 
{\sf (i)} what are the fundamental limits on the rates that can be 
achieved in a network where the intermediate node operations
are unknown, {\sf (ii)} how can they be achieved, and {\sf (iii)} how do 
they compare to the coherent case.

In this work we address such questions for two different cases. First,
we consider the scenario where a single source aims to transmit
information to one or multiple receiver(s) over a network under the
non-coherence assumption using fixed packet length. Because network
nodes only perform linear operations, the overall network behavior
from the source(s) to a receiver can be represented as a matrix
multiplication of the sent source packets.  We consider operation in
time-slots, and assume that the channel transfer matrices are
distributed uniformly at random and i.i.d. over different time-slots.
Under this probabilistic model, we characterize the asymptotic 
capacity behavior of the introduced channel and show that using
\emph{subspace coding} we can achieve the optimal performance. We extend our
model for the case of multiple sources and characterize the asymptotic
behavior of the optimal rate region for the case of two sources. We
believe that this result can be extended to the case of more
than two sources using the same method that is applied in \S\ref{sec:ProofMultiSource}. 
For the multi-source case we prove as well that
encoding information using subspaces is sufficient to achieve the
optimal rate region.

The idea of non-coherent modeling for randomized network coding was first
proposed in the seminal work by Koetter and Kschischang in
\cite{KoetKsch-IT08-erasure}. In that work, the authors focused on
algebraic subspace code constructions over a
Grassmannian. Independently and in parallel to our work
in \cite{JaFrDi-ISIT08}, Montanari \etal \cite{MoUr-07} introduced a
different probabilistic model to capture the end-to-end functionality
of non-coherent network coding operation, with a focus on the case of
error correction capabilities. Their model does not examine subsequent
time slots, but instead, allows the packets block length (in this
paper terminology; packet length $T$) to increases to infinity, with
the result that the overhead of coding vectors becomes negligible,
very fast.

Silva \etal \cite{SiKschKo-CapRandNetCod} independently and subsequent
to our works in \cite{JaFrDi-ISIT08} and \cite{JaMoFrDi-ISIT09}, also considered a probabilistic
model for non-coherent network coding, which is an extension of the
model introduced in \cite{MoUr-07} over multiple time-slots. In their
model the transfer matrix is constrained to be square as well as full
rank.  This is in contrast to our model, where the transfer matrix can
have arbitrary dimensions, and the elements of the transfer matrix are
chosen uniformly at random, with the result that the transfer matrix
itself may not have full rank (this becomes more pronounced for small
matrices).  
Moreover, we extend our work to multiple source multicast, which
corresponds to a virtual non-coherent multiple access channel (MAC).  Our
results coincide for the case of a single source, when the packet
length and the finite field of operations are allowed to grow
sufficiently large. Another difference is that the work in
\cite{SiKschKo-CapRandNetCod} focuses on additive error with constant
dimensions; in contrast, we focus on  packet erasures.

An interpretation of our results is that it is the finite field analog
of the Grassmannian packing result for non-coherent MIMO channels as
studied in the well known work in \cite{ZhTse-IT02}. In particular, we
show that for the non-coherent model over finite fields, the capacity
critically depends on the relationship between the ``coherence time''
(or packet length $T$ in our model) and the min-cut of the network. In fact the number
of active subspace dimensions depend on this relationship; departing
from the non-coherent MIMO analogy of \cite{ZhTse-IT02}.

The paper is organized as follows. We define our notation and channel
model in \S\ref{sec:ChannelModel-Notations}; we state and discuss our
main results in \S\ref{sec:MainResults}; we prove the capacity results
for the single and multiple sources in sections
\S\ref{sec:TheChanlCap-SnglSrc} and \S\ref{sec:ProofMultiSource}
respectively; and conclude the paper in \S\ref{sec:Conclusion}. 

All the missing proofs for lemmas, theorems, and etc., are given in
Appendix~\ref{sec:apndx1} unless otherwise stated.

%--------------------------------------------------------------------
% Section -----------------------------------------------------------
%\input{Preliminaries}
\section{Channel Model and Notation}\label{sec:ChannelModel-Notations}
%\section{Preliminaries}\label{sec:preliminaries}

%--------------------------------------------------------------------
\subsection{Notation}\label{subsec:Notations}

We here introduce the notation and definitions we use in the 
following sections. Let $q\ge 2$ be a power of a prime. In this 
paper, all vectors and matrices have elements in a finite field 
$\Fbb_q$. We use $\Fbb^{n\times m}_q$ to denote the set of all 
$n\times m$ matrices over $\Fbb_q$, and $\Fbb^T_q$ to denote 
the set of all row vectors of length $T$.  The set $\Fbb_q^T$ 
forms a $T$-dimensional vector space over the field $\Fbb_q$.

Throughout the paper, we  use capital letters, \eg, $X$, to 
denote random objects, including random variables, random 
matrices, or random subspaces, and corresponding lower-case 
letters, \eg, $x$ to denote their realizations. For example, we 
denote by $\Pi$ a ``random subspace'' which takes as values  
the subspaces in a vector space according to some distribution, 
and by $\pi$ a specific realization. 
Also, bold capital letters, \eg, $\mathbf{A}$, are reserved 
for deterministic matrices and bold lower-case 
letters, \eg, $\mathbf{v}$, are used for deterministic vectors.

For subspaces $\pi_1$ and $\pi_2$, $\pi_1 \sqsubseteq \pi_2$  
denotes that $\pi_1$ is a subspace of $\pi_2$. Recall that for 
two subspaces $\pi_1$ and $\pi_2$, $\pi_1\cap \pi_2$ is the 
intersection of these subspaces which itself is a subspace. 
We use $\pi_1 + \pi_2$ to denote the smallest subspace that 
contains both $\pi_1$ and $\pi_2$, namely,
\begin{align*}
\pi_1 + \pi_2=\left\{\mathbf{v}_1+\mathbf{v}_2| \mathbf{v}_1\in \pi_1, \mathbf{v}_2\in \pi_2 \right\}. 
\end{align*}
It is well known that
\begin{align*}
\dim(\pi_1+\pi_2)=\dim(\pi_1)+\dim(\pi_2)-\dim(\pi_1 \cap \pi_2).
\end{align*}

For a set of vectors $\{\mathbf{v}_1,\ldots,\mathbf{v}_k\}$ we denote their 
linear span by $\sspan{\mathbf{v}_1,\dots,\mathbf{v}_k}$. For a matrix 
$\mathbf{X}$, $\sspan{\mathbf{X}}$ is the subspace spanned by the rows 
of $\mathbf{X}$ and $\sspan{\mathbf{X}}_c$ is the subspace spanned by 
the columns of $\mathbf{X}$. We then  have 
$\mathrm{rank}(\mathbf{X})=\dim(\sspan{\mathbf{X}})=\dim(\sspan{\mathbf{X}}_c)$.

We use the calligraphic symbols, \ie, $\Xca$ or $\Yca$ to denote a set 
of matrices. To denote a set of subspaces we use the same calligraphic 
symbols but with a ``$\sim$'', \ie, $\wt{\Xca}$ or $\wt{\Yca}$.

We use the symbols ``$\succ$'' and ``$\prec$'' to denote the element-wise 
inequality between vectors and matrices of the same size.

For two real valued functions $f(x)$ and $g(x)$ of $x$, we use $f(x)\=g(x)$ 
to denote that\footnote{One has to specify the growing variable whenever ``$\doteq$'' 
is used for multi-variate functions.  However, since in this work the growing 
variable is always $q$, the field size, we will not repeat it for sake of brevity.}

\[
\lim_{x\rightarrow\infty}\frac{\log f(x)}{\log g(x)} \rightarrow 1.
\]
Note that the definition of ``$\=$'' is different from the more 
standard definition which is 
\mbox{$\lim_{x\to\infty}\frac{1}{x}\log\frac{f(x)}{g(x)}\to 0$.}
We also use a similar definition for $f\aleq g$ to denote that
\[
\lim_{x\rightarrow\infty}\frac{\log f(x)}{\log g(x)} \rightarrow c\le 1,
\]
where $c$ is a constant.

We use the big-$O$ notation which is defined as follows. Let $f(x)$ and $g(x)$ be two 
functions defined on some subset of the real numbers. We write
$f(x)=O\left(g(x)\right)\mbox{ as } x\to\infty,$
if there exists a positive real number $M$ and a real number $x_0$ such that
$|f(x)| \le M |g(x)| \mbox{ for all } x>x_0.$ 
For the little $o$ notation we use the following definition. We write
$f(x)=o(g(x))\mbox{ as } x\to\infty,$
if for all $\epsilon>0$ there exists a real number $x_0$ such that
$|f(x)| \le \epsilon \cdot |g(x)| \mbox{ for all } x>x_0.$
We use also the big-$\Omega$ notation which is defined as follows. We write
$f(x)=\Omega\left(g(x)\right)\mbox{ as } x\to\infty,$ if we have
$g(x)=O\left(f(x)\right)\mbox{ as } x\to\infty$.
Finally, we use the big-$\Theta$ notation to denote that a function is bounded both above 
and below by another function asymptotically. Formally, we write
$f(x)=\Theta\left(g(x)\right)\mbox{ as } x\to\infty,$ if and only if we have
$f(x)=O\left(g(x)\right)$ and $f(x)=\Omega\left(g(x)\right)\mbox{ as } x\to\infty$.

\begin{definition} [{Grassmannian and Gaussian coefficient \cite{VanLintWilson-Combinatorics,And-Book76}}]
The Grassmannian $\mathrm{Gr}(T,d)_q$ is the set of all $d$-dimensional 
subspaces of the $T$-dimensional space over a finite field  $\Fbb_q$, 
namely,
\begin{align*}
\mathrm{Gr}(T,d)_q\triangleq\{\pi\sqsubseteq \Fbb_q^T: \dim(\pi)=d\}.
\end{align*}
The cardinality of $\mathrm{Gr}(T,d)_q$ is the Gaussian coefficient, namely, 
\begin{align}\label{eq:GaussianNumber}
\gaussnum{T}{d}_q &\triangleq |\mathrm{Gr}(T,d)_q| = \frac{(q^T-1)\cdots(q^{T-d+1}-1)}{(q^d-1)\cdots(q-1)}.
\end{align}
\end{definition}

\begin{definition}[{The set $\mathrm{Sp}(T,m)_q$}]
We define $\mathrm{Sp}(T,m)_q$ to be the set (sphere) of all subspaces 
of dimension at most $m$ in the $T$-dimensional space $\Fbb_q^T$, namely
\begin{align}\nonumber
\mathrm{Sp}(T,m)_q\triangleq\bigcup_{d=0}^{\min[m,T]} \mathrm{Gr}(T,d)_q = \{\pi\sqsubseteq \Fbb_q^T: \dim(\pi)\leq \min[m,T]\}.
\end{align}
The cardinality of $\mathrm{Sp}(T,m)_q$ equals
\[ \Sca(T,m)_q \triangleq |\mathrm{Sp}(T,m)_q| =\sum_{d=0}^{\min[m,T]} |\mathrm{Gr}(T,d)_q|.\]
\end{definition}

\begin{definition}[{The number  $\psi(T,n,\pi_d)_q$} ]\label{def:psi}
We denote by $\psi(T,n,\pi_d)_q$ the number of different $n\times T$ 
matrices with elements from a field $\Fbb_q$, such that their rows 
span a specific subspace $\pi_d\sqsubseteq\Fbb_q^T$ of dimension 
$0\le d\le\min[n,T]$.
\end{definition}

For simplicity, in the rest of the paper we will drop the subscript 
$q$ in the previous definitions whenever it is obvious from the context.

%--------------------------------------------------------------------
\subsection{Preliminary Lemmas}\label{sec:PrelimLemma}

We here state some preliminary lemmas related to the definitions 
introduced in \S\ref{subsec:Notations}.

Existing bounds in the literature allow to approximate the Gaussian number,
for example, we have from \cite[Lemma~4]{KoetKsch-IT08-erasure} that
\cite[Section~III]{GadYan-IT08}
\begin{equation}\label{eq:GaussianNumber_bound_1}
q^{d(T-d)} < \gaussnum{T}{d} < \frac{q^{d(T-d)}}{\prod_{j=1}^{\infty} (1-q^{-j})} < 4 q^{d(T-d)},\quad \forall d: 0<d<T.
\end{equation}
Using Definition~\ref{eq:GaussianNumber} and \eqref{eq:GaussianNumber_bound_1}
we have Lemma~\ref{lem:gauss_Approximation}.
\begin{lemma}\label{lem:gauss_Approximation}
For large $q$ we can approximate the Gaussian number as follows
\[
\gaussnum{T}{d} = q^{d(T-d)}(1+O(q^{-1})) \= q^{d(T-d)}.
\]
\end{lemma}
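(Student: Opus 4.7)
The plan is to prove the stronger form $\gaussnum{T}{d} = q^{d(T-d)}(1+O(q^{-1}))$ directly from the closed-form definition in \eqref{eq:GaussianNumber}, and then deduce the $\=$ relation as an immediate corollary. I would not route the argument through the bounds in \eqref{eq:GaussianNumber_bound_1}, because that crude upper bound only gives a multiplicative constant and not the sharp $1+O(q^{-1})$ correction; going back to the product formula is just as easy.

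The key algebraic step is to pull out the dominant power of $q$ from every factor. Write $q^{T-i}-1 = q^{T-i}(1-q^{-(T-i)})$ for $i=0,1,\ldots,d-1$ in the numerator, and $q^{j}-1 = q^{j}(1-q^{-j})$ for $j=1,\ldots,d$ in the denominator. Summing exponents gives $\sum_{i=0}^{d-1}(T-i) - \sum_{j=1}^{d} j = dT - \tfrac{d(d-1)}{2} - \tfrac{d(d+1)}{2} = d(T-d)$, so that
\[
\gaussnum{T}{d}_q = q^{d(T-d)} \cdot \frac{\prod_{j=T-d+1}^{T}(1-q^{-j})}{\prod_{j=1}^{d}(1-q^{-j})}.
\]

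The next step is to control the remaining rational factor. Each of the $2d$ terms $(1-q^{-m})$ with $m\ge 1$ satisfies $1-q^{-m} = 1+O(q^{-1})$ as $q\to\infty$, and $T$ and $d$ are fixed, so a finite product (or ratio) of such factors is also $1+O(q^{-1})$. A one-line formal justification is to expand $\log(1-q^{-m}) = -q^{-m} + O(q^{-2m}) = O(q^{-1})$, sum over the finitely many factors, and exponentiate. This establishes the claimed equality $\gaussnum{T}{d} = q^{d(T-d)}(1+O(q^{-1}))$.

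Finally, to verify the $\=$ statement using the definition given in the paper, I would take logarithms: $\log\gaussnum{T}{d} = d(T-d)\log q + \log(1+O(q^{-1})) = d(T-d)\log q + O(q^{-1})$, so that
\[
\frac{\log \gaussnum{T}{d}}{\log q^{d(T-d)}} = 1 + \frac{O(q^{-1})}{d(T-d)\log q} \longrightarrow 1 \quad\text{as } q\to\infty,
\]
which is exactly $\gaussnum{T}{d} \= q^{d(T-d)}$. There is no real obstacle here; the only mild bookkeeping subtlety is making sure the exponent accounting yields exactly $d(T-d)$, and stating explicitly that $T$ and $d$ are held fixed while $q$ grows so that the number of factors in the correction product does not depend on the asymptotic variable.
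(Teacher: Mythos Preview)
Your proof is correct and is essentially the same factorization underlying the paper's one-line reference to \eqref{eq:GaussianNumber} and \eqref{eq:GaussianNumber_bound_1}. Your dismissal of \eqref{eq:GaussianNumber_bound_1} is slightly unfair, though: the intermediate bound $q^{d(T-d)}/\prod_{j\ge 1}(1-q^{-j})$ there already gives the sharp $1+O(q^{-1})$ correction, since $\prod_{j\ge 1}(1-q^{-j}) = 1 - q^{-1} + O(q^{-2})$; only the final ``$<4q^{d(T-d)}$'' is the crude constant-factor bound.
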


\begin{lemma}\label{lem:psi_value}
For $\psi(T,n,\pi_d)$ given in Definition~\ref{def:psi}, we have that \cite{Gab-PIT85}
\begin{align*}
\psi(T,n,\pi_d) = \prod_{i=0}^{d-1} (q^n-q^i) = q^{\binom{d}{2}} \prod_{i=0}^{d-1} (q^{n-i}-1),
\end{align*}
\ie, it does not depend on $T$.
\end{lemma}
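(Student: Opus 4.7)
The plan is to reduce the counting problem to counting full-column-rank matrices of size $n \times d$, by fixing a basis of $\pi_d$ and expressing every admissible matrix in terms of that basis.

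First, I would fix any basis $\{\mathbf{b}_1, \dots, \mathbf{b}_d\}$ of $\pi_d$ and let $\mathbf{B} \in \Fbb_q^{d \times T}$ be the matrix whose rows are these basis vectors. Any $n \times T$ matrix $\mathbf{X}$ whose row span is contained in $\pi_d$ has each row expressible as a unique linear combination of $\mathbf{b}_1,\dots,\mathbf{b}_d$; equivalently, there is a unique $\mathbf{C} \in \Fbb_q^{n \times d}$ with $\mathbf{X} = \mathbf{C}\mathbf{B}$. Since $\mathbf{B}$ has full row rank, this establishes a bijection between $\{\mathbf{X} \in \Fbb_q^{n \times T} : \sspan{\mathbf{X}} \sqsubseteq \pi_d\}$ and $\Fbb_q^{n \times d}$. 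Moreover, $\sspan{\mathbf{X}} = \sspan{\mathbf{C}\mathbf{B}}$ equals the full subspace $\pi_d$ if and only if $\mathbf{C}$ has rank $d$, i.e., full column rank.

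Next, I would count the $n \times d$ matrices of full column rank by choosing columns successively: the first column is any nonzero vector in $\Fbb_q^n$, contributing $q^n - 1$ choices; once $i$ linearly independent columns are selected, the $(i+1)$-st must avoid their span (of size $q^i$), yielding $q^n - q^i$ choices. Multiplying gives
\[
\psi(T,n,\pi_d) \;=\; \prod_{i=0}^{d-1}\bigl(q^n - q^i\bigr).
\]
Factoring $q^i$ out of each term produces $\prod_{i=0}^{d-1} q^i \cdot (q^{n-i}-1) = q^{0+1+\cdots+(d-1)}\prod_{i=0}^{d-1}(q^{n-i}-1) = q^{\binom{d}{2}}\prod_{i=0}^{d-1}(q^{n-i}-1)$, matching the second form in the statement.

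Finally, the independence from $T$ is automatic: although the bijection itself uses $\mathbf{B} \in \Fbb_q^{d \times T}$, the resulting count depends only on $n$ and $d$ through the enumeration of full-column-rank $n \times d$ matrices. Also, the count does not depend on which basis of $\pi_d$ is chosen, since changing the basis replaces $\mathbf{B}$ by $\mathbf{G}\mathbf{B}$ for some invertible $\mathbf{G} \in \Fbb_q^{d \times d}$ and simply relabels $\mathbf{C} \mapsto \mathbf{C}\mathbf{G}^{-1}$, a bijection on $\Fbb_q^{n \times d}$. There is no real obstacle here; the only thing to be careful about is noting the bijection with $\Fbb_q^{n \times d}$ cleanly, so that the count is manifestly a function of $n$ and $d$ only.
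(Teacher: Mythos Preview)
Your proof is correct and follows essentially the same approach as the paper: fix a full-rank matrix $\mathbf{B}$ whose rows span $\pi_d$, write $\mathbf{X}=\mathbf{C}\mathbf{B}$, and reduce the count to the number of full-rank $n\times d$ matrices over $\Fbb_q$. Your version is slightly more explicit about the bijection and the independence from the choice of basis, but the argument is the same.
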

Since $\psi(T,n,\pi_d)$ does not depend on $T$, and only depends on $\pi_d$ through 
its dimension, as a shorthand notation we will also use $\psi(n,d)$ instead of 
$\psi(T,n,\pi_d)$, where $d=\dim(\pi_d)$.

Using Lemma~\ref{lem:psi_value} the following lower and upper bounds
are straightforward
\begin{equation}\label{eq:psi_bound}
(1-dq^{-n+d-1}) < \left(1-\sum_{i=0}^{d-1}q^{-n+i}\right) < \frac{\psi(n,d)}{q^{nd}} < 1,
\end{equation}
which imply Lemma~\ref{lem:psi_Approximation} (see also \cite{GadYan-IT08}). 
\begin{lemma}\label{lem:psi_Approximation}
For large values of $q$ the following approximation holds
\[
\psi(n,d) = q^{nd}(1+O(q^{-1})) \= q^{nd}.
\]
\end{lemma}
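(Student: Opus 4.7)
The plan is to read the approximation directly off the two-sided inequality \eqref{eq:psi_bound}, which the authors have already derived from Lemma~\ref{lem:psi_value}. Recall that from Definition~\ref{def:psi} we have $d\leq n$, so the exponent $-n+d-1$ appearing in the lower bound satisfies $-n+d-1\leq -1$; hence
\[
dq^{-n+d-1}=O(q^{-1}) \text{ as } q\to\infty.
\]
The upper bound $\psi(n,d)/q^{nd}<1$ is trivially $1+O(q^{-1})$, and squeezing with the lower bound $1-dq^{-n+d-1}=1-O(q^{-1})$ yields $\psi(n,d)=q^{nd}(1+O(q^{-1}))$, which is the first half of the claim.

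For the dotted equality $\psi(n,d)\doteq q^{nd}$, I would just invoke the definition of $\doteq$ given in \S\ref{subsec:Notations}:
\[
\lim_{q\to\infty}\frac{\log\psi(n,d)}{\log q^{nd}}
=\lim_{q\to\infty}\frac{nd\log q+\log\bigl(1+O(q^{-1})\bigr)}{nd\log q}=1,
\]
since the bounded correction $\log(1+O(q^{-1}))$ is negligible next to $nd\log q\to\infty$.

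I do not anticipate any real obstacle here; the lemma is essentially a restatement of \eqref{eq:psi_bound} in asymptotic language. The only point worth being explicit about is the use of $d\leq n$ to guarantee that the multiplicative correction is genuinely $O(q^{-1})$ and not the weaker $O(q^{d-n-1})$ one would obtain without that hypothesis.
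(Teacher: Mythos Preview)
Your proposal is correct and follows exactly the route the paper takes: the authors simply remark that Lemma~\ref{lem:psi_Approximation} is implied by the bounds in \eqref{eq:psi_bound}, and you have spelled out precisely why, including the role of $d\le n$ in making the correction term $O(q^{-1})$.
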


It is also worthwhile to mention that $\psi(n,d) \gaussnum{T}{d}$ is the number of
$n \times T$ matrices of rank $d$. We can count all the $n\times T$ matrices through the following
 Lemma~\ref{lem:psi_Recursive},
(also see \cite{VanLintWilson-Combinatorics,And-Book76}, and \cite[Corollary~5]{Gab-PIT85}).
\begin{lemma}\label{lem:psi_Recursive}
For every $n>0$ and $T>0$ we can write
\begin{equation*}
\sum_{d=0}^{\min[n,T]} \psi(n,d) \gaussnum{T}{d} = q^{nT},
\end{equation*}
where $\psi(n,0)=1$.
\end{lemma}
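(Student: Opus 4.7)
The plan is to prove the identity by a double-counting argument: evaluate the total number of $n\times T$ matrices over $\mathbb{F}_q$ in two different ways, and observe that both sides must agree.

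First I would note that the right-hand side $q^{nT}$ is simply the cardinality of $\Fbb_q^{n\times T}$, since each of the $nT$ entries can be chosen independently from $\Fbb_q$. Next I would partition $\Fbb_q^{n\times T}$ according to the row span of each matrix, which is a subspace of $\Fbb_q^T$. Because the rank of an $n\times T$ matrix is at most $\min[n,T]$, only subspaces of dimension $d\in\{0,1,\dots,\min[n,T]\}$ can appear as row spans, which explains the summation range.

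For a fixed dimension $d$, the number of $d$-dimensional subspaces of $\Fbb_q^T$ equals $\gaussnum{T}{d}$ by the definition of the Grassmannian and the Gaussian coefficient. For each such subspace $\pi_d\sqsubseteq\Fbb_q^T$, Definition~\ref{def:psi} (together with Lemma~\ref{lem:psi_value}, which confirms that this count depends only on $d$ and not on the particular choice of $\pi_d$) gives exactly $\psi(n,d)$ matrices in $\Fbb_q^{n\times T}$ whose rows span $\pi_d$. Since every matrix has a unique row span, summing over all dimensions and all subspaces of that dimension yields
\begin{equation*}
\sum_{d=0}^{\min[n,T]} \psi(n,d)\,\gaussnum{T}{d} = |\Fbb_q^{n\times T}| = q^{nT}.
\end{equation*}
The boundary case $d=0$ contributes $\psi(n,0)\cdot\gaussnum{T}{0} = 1\cdot 1 = 1$, accounting for the all-zero matrix, which is consistent with the convention stated in the lemma.

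There is no real obstacle here; the only small point worth checking is that the decomposition over row spans is truly a partition (every matrix has a unique row span, and every $d$-dimensional subspace with $d\leq\min[n,T]$ is achievable as a row span of some $n\times T$ matrix). Both facts are immediate from elementary linear algebra, so the proof reduces to invoking Definition~\ref{def:psi} and assembling the count.
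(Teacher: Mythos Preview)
Your proof is correct and follows essentially the same approach as the paper: the paper remarks just before the lemma that $\psi(n,d)\gaussnum{T}{d}$ is the number of $n\times T$ matrices of rank $d$, so summing over all possible ranks counts every matrix in $\Fbb_q^{n\times T}$ exactly once, giving $q^{nT}$. Your partition by row span is precisely this double-counting argument, spelled out in slightly more detail.
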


%--------------------------------------------------------------------
\subsection{The Non-Coherent Finite Field Channel Model}

We consider a network where nodes perform random linear 
network coding over a finite field $\Fbb_q$. 
We are interested in the maximum information rate at which a 
single (or multiple) source(s) can successfully  communicate
over such a network when neither the transmitter nor the 
receiver(s) have any channel state information (CSI). 
For simplicity, we will present the channel model and our
analysis for the case of a single receiver; the extension 
to multiple receivers (with the same channel parameters) 
is straightforward, as we  also discuss 
in the results section.

We assume that time is slotted and the channel is block 
time-varying. For the single source communication, at time 
slot $t$, the receiver observes
\begin{equation}\label{eq:channel_model_P2P}
Y[t]=G[t]X[t], 
\end{equation}
where $X[t]\in\Fbb_q^{m\times T}$, $G[t]\in\Fbb_q^{n\times m}$, 
and $Y[t]\in\Fbb_q^{n\times T}$. At each time-slot, the receiver 
receives $n$ packets of length $T$ (captured by the rows of 
matrix $Y[t]$) that are random linear combinations of the $m$ packets 
injected by the source (captured by the rows of matrix $X[t]$). 
In our model, the packet length $T$ can be interpreted as the 
coherence time of the channel, during which the transfer matrix 
remains constant. Each element of the transfer matrix $G[t]$ is 
chosen uniformly at random from $\mathbb{F}_q$, changes independently 
from time slot to time slot, and is unknown to both the source and 
the receiver. In other words, the channel transfer matrix is 
chosen uniformly at random from all possible matrices in 
$\Fbb_q^{n\times m}$ and has i.i.d. distribution over different 
blocks. In general, the topology of the network may impose some 
constraints on the transfer matrix $G[t]$ (for example, some 
entries might be zero, see \cite{KoMe-NetTr03,JaFrDi-ITW07,SaMaFr-NetCode2009,ShJaDe-ITA09}). 
However, we believe that this is a reasonable general 
model, especially for large-scale dynamically-changing networks 
where apart from random coefficients there exist many other 
sources of randomness. Formally, we define the non-coherent 
matrix channel as follows.

\begin{definition}[Non-coherent matrix channel $\mathrm{Ch}_{\textsl{m}}$]\label{def:matrix_channel_P2P}
This is defined to be the matrix channel $\mathrm{Ch}_{\textsl{m}}:\Xca\rightarrow \Yca$ described by (\ref{eq:channel_model_P2P}) with the assumption that $G[t]$ is i.i.d. and uniformly distributed over all matrices $\Fbb_q^{n\times m}$.  It is a discrete memoryless channel with input alphabet $\mathcal{X}\triangleq\Fbb_q^{m\times T}$ and output alphabet $\mathcal{Y}\triangleq\mathbb{F}_q^{n\times T}$.
\end{definition}
The capacity of the channel $\mathrm{Ch}_{\textsl{m}}$ is given by 
\begin{equation}\label{eq:capacity_def}
C_{\textsl{m}} = \max_{P_X(x)} I(X;Y),
\end{equation}
where $P_X(x)$ is the input distribution. To achieve the capacity 
a coding scheme may employ the channel given in (\ref{eq:channel_model_P2P}) 
multiple times, and a codeword is a sequence of input matrices 
from $\mathcal{X}$. For a coding strategy that induces an 
input distribution $P_X(x)$, the achievable rate is
\[
R = I(X;Y).
\]

Now we define a non-coherent subspace channel $\mathrm{Ch}_{\textsl{s}}$ 
which takes as an input a subspace 
and outputs another subspace. Then, in Theorem~\ref{thm:channel_equivalence_P2P} we will show that the
two channels $\mathrm{Ch}_{\textsl{m}}$ and $\mathrm{Ch}_{\textsl{s}}$ are
equivalent from the point of view of calculating the mutual information
between their inputs and their outputs.
\begin{definition}[Non-coherent subspace channel $\mathrm{Ch}_{\textsl{s}}$]\label{def:subspace_channel_P2P}
This is defined to be the channel $\mathrm{Ch}_{\textsl{s}}:\wt{\Xca}\rightarrow \wt{\Yca}$ with input alphabet $\wt{\Xca}=\mathrm{Sp}(T,m)$ and output alphabet $\wt{\Yca}=\mathrm{Sp}(T,n)$ and transition probability
\begin{equation}\label{eq:P2P_channel_transfer_prob_2}
P_{\Pi_Y|\Pi_X}(\pi_y|\pi_x) \triangleq \left\{ \begin{array}{ll} \psi(T,n,\pi_y) q^{-n\dim(\pi_x)} & \pi_y\sqsubseteq \pi_x,\\ 
0 & \text{otherwise},
\end{array} \right.
\end{equation}
where $\Pi_X$ and $\Pi_Y$ are the input and output variables 
of the channel $\mathrm{Ch}_{\textsl{s}}$.
\end{definition}
The capacity of the channel $\mathrm{Ch}_{\textsl{s}}$ is given by
\[
C_{\textsl{s}} = \max_{P_{\Pi_X}(\pi_x)} I(\Pi_{X};\Pi_Y),
\]
where $P_{\Pi_X}(\pi_x)$ is the input distribution defined over the
set of subspaces $\wt\Xca$.

We  next consider a multiple sources scenario, and the multiple 
access channel corresponding to (\ref{eq:channel_model_P2P}). 
In this case, we have 
\begin{equation}\label{eq:channel_model_MAC} 
Y[t]=\sum_{i=1}^{N_s} G_i[t]X_i[t],
\end{equation}
where $N_s$ is the number of sources, and each source $i$ inserts $m_i$ packets 
to the network. Thus, $X_i[t]\in\Fbb_q^{m_i\times T}$, $G_i[t]\in\Fbb_q^{n\times m_i}$ 
and $Y[t]\in\Fbb_q^{n\times T}$. We can also collect all $G_i[t]$ 
in an $n\times \sum_{i=1}^{N_s} m_i$ matrix $G_{\textsl{MAC}}[t]$ 
and all $X_i[t]$ in an $\sum_{i=1}^{N_s} m_i\times T$ matrix 
$X_{\textsl{MAC}}[t]$ as following
\[
X_{\textsl{MAC}}[t] = \left[\begin{array}{c}
X_1[t]\\
\vdots\\
X_{N_s}[t]\\
\end{array} \right],\quad\text{and}\quad 
G_{\textsl{MAC}}[t] = \left[\begin{array}{ccc} G_1[t] & \cdots & G_{N_s}[t] \end{array}\right],
\]
so we can rewrite \eqref{eq:channel_model_MAC} as 
\begin{equation*} 
Y[t]=G_{\textsl{MAC}}[t] X_{\textsl{MAC}}[t].
\end{equation*}
Each source $i$ then controls $m_i$ rows of the matrix 
$X_{\textsl{MAC}}[t]$. Again we assume that each  entry 
of the matrices $G_i[t]$ is chosen i.i.d. and uniformly at 
random from the field $\Fbb_q$ for all source nodes and all time 
instances.

\begin{definition}[The non-coherent multiple access matrix 
channel $\mathrm{Ch}_{\textsl{m-MAC}}$]\label{def:matrix_channel_MAC}
This is defined to be the channel 
\mbox{$\mathrm{Ch}_{\textsl{m-MAC}}:\Xca_1\times\cdots\times \Xca_{N_s}\rightarrow\Yca$} 
described in \eqref{eq:channel_model_MAC}, with the assumption 
that $G_i[t]$, $i=1,\ldots,N_s$, are i.i.d. and uniformly 
distributed over all matrices $\Fbb_q^{n\times m_i}$, $i=1,\ldots,N_s$. 
It forms a discrete memoryless MAC  with input 
alphabets $\Xca_i\triangleq\Fbb_q^{m_i\times T}$, $i=1,\ldots,N_s$, and 
output alphabet $\mathcal{Y}\triangleq\mathbb{F}_q^{n\times T}$.
\end{definition}
It is well known \cite{CoTh-ElmntsInfoTheory06} that the rate 
region of any multiple access channel including 
$\mathrm{Ch}_{\textsl{m-MAC}}$ is given by the closure of 
the convex hull of the rate vectors satisfying
\begin{equation*}
R_S\le I(X_S;Y|X_{S^c}) \quad \text{for all } S\subseteq\{1,\ldots,N_s\},
\end{equation*}
for some product distribution $P_{X_1}(x_1)\cdots P_{X_{N_s}}(x_{N_s})$. 
Note that $R_S=\sum_{i\in S} R_i$ where $R_i$ is the transmission rate 
of the $i$th source, $X_S=\{X_i:i\in S\}$ and $S^c$ is the complement 
set of $S$.

As before, we define a non-coherent subspace version\footnote{For simplicity, 
we restrict this definition to only two source nodes. However, generalization 
 to  $N_s$ sources is straightforward.} of the matrix multiple 
access channel and in Theorem~\ref{thm:channel_equivalence_MAC} 
we show that from the point of view of rate region these two channels 
are equivalent.
\begin{definition}[Non-coherent subspace multiple access channel $\mathrm{Ch}_{\textsl{s-MAC}}$]\label{def:subspace_channel_MAC}
This is defined to be the channel $\mathrm{Ch}_{\textsl{s-MAC}}:\wt{\Xca}_1\times\wt{\Xca}_2\rightarrow \wt{\Yca}$ with input alphabets $\wt{\Xca}_i=\mathrm{Sp}(T,m_i)$, $i=1,2$, output alphabet $\wt{\Yca}=\mathrm{Sp}(T,n)$ and transition probability
\begin{align}\label{eq:MAC_channel_transfer_prob_2}
\Pr(\Pi_Y=\pi_y |& \Pi_{X_1}=\pi_1, \Pi_{X_2}=\pi_2)
=\left\{ \begin{array}{ll}
\psi(T, n, \pi_y) q^{-n\dim( \pi_1 + \pi_2)} & \pi_y\sqsubseteq \pi_1 + \pi_2,\\
0 & \text{otherwise},
\end{array} \right.
%\label{def:P-sub}
\end{align}
where $\Pi_{X_1}$ and $\Pi_{X_2}$ are the input and $\Pi_Y$ is the 
output variables of the channel $\mathrm{Ch}_{\textsl{s-MAC}}$.
\end{definition}

%--------------------------------------------------------------------
% Section -----------------------------------------------------------
%\input{MainResults}
\section{Main Results}\label{sec:MainResults}

%--------------------------------------------------------------------
\subsection{Single Source}
Our main results, Theorem~\ref{thm:Main_Result_Single_Src} and
Theorem~\ref{thm:Main_Result_Single_Src_Dist}, characterize the
capacity for non-coherent network coding for the model given in
\eqref{eq:channel_model_P2P}. We show that the capacity is achieved
through subspace coding, where the information is communicated from
the source to the receivers through the choice of subspaces. Formally,
we have the following results.

\begin{theorem}\label{thm:channel_equivalence_P2P}
The matrix channel $\mathrm{Ch}_{\textsl{m}}:\Xca\rightarrow \Yca$ 
defined in Definition~\ref{def:matrix_channel_P2P} and the subspace 
channel $\mathrm{Ch}_{\textsl{s}}:\wt{\Xca}\rightarrow\wt{\Yca}$ defined
in Definition~\ref{def:subspace_channel_P2P}  are equivalent 
in terms of evaluating the mutual information 
between the input and output. 
 More precisely, for every input distribution for the channel $\mathrm{Ch}_{\textsl{s}}$ 
there is an input distribution for the channel $\mathrm{Ch}_{\textsl{m}}$ 
such that $I(X;Y)=I(\Pi_X;\Pi_Y)$ and vice versa.
As a result, these channels have the same capacity $C_{\textsl{m}}=C_{\textsl{s}}$. 
\end{theorem}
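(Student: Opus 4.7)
The plan is to establish the single identity $I(X;Y) = I(\Pi_X;\Pi_Y)$, with $\Pi_X \triangleq \sspan{X}$ and $\Pi_Y \triangleq \sspan{Y}$, that holds under every joint law induced by $\mathrm{Ch}_{\textsl{m}}$. Once this identity is in hand, both directions of the bijection between input distributions, and hence the equality of capacities, follow by standard pushforward/lifting constructions.

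First I would compute the matrix transition probability. Fix $X=x$ with row span $\pi_x$ of dimension $d$. The map $G \mapsto Gx$ is $\Fbb_q$-linear from $\Fbb_q^{n\times m}$ onto the $nd$-dimensional space of $n\times T$ matrices whose rows lie in $\pi_x$, with kernel of dimension $n(m-d)$. Since $G$ is uniform,
\[
P(Y = y \mid X = x) = \begin{cases} q^{-nd}, & \sspan{y} \sqsubseteq \pi_x, \\ 0, & \text{otherwise.} \end{cases}
\]
Two consequences are immediate: (i) this conditional depends on $x$ only through $\pi_x$, so $Y - \Pi_X - X$ is a Markov chain and $I(X;Y) = I(\Pi_X;Y)$; (ii) summing over the $\psi(T,n,\pi_y)$ matrices $y$ with $\sspan{y} = \pi_y$ recovers exactly \eqref{eq:P2P_channel_transfer_prob_2}, confirming that $\mathrm{Ch}_{\textsl{s}}$ is the row-span quotient of $\mathrm{Ch}_{\textsl{m}}$. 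Conditioning further on $\Pi_Y = \pi_y$ with $\pi_y \sqsubseteq \pi_x$, Bayes' rule shows that $Y$ is uniform over the matrices with row span exactly $\pi_y$, a distribution that does not depend on $\pi_x$; hence $Y \perp \Pi_X \mid \Pi_Y$, giving $I(\Pi_X;Y) = I(\Pi_X;\Pi_Y)$. Chaining the two reductions produces $I(X;Y) = I(\Pi_X;\Pi_Y)$ for every input distribution.

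For the two directions of the equivalence: given any $P_X$ on $\Xca$, push forward through the row-span map to obtain $P_{\Pi_X}$ on $\wt{\Xca} = \mathrm{Sp}(T,m)$, and the identity gives equal mutual informations; conversely, given any $P_{\Pi_X}$ on $\mathrm{Sp}(T,m)$, lift by drawing $\Pi_X \sim P_{\Pi_X}$ and then $X$ uniformly among matrices whose row span equals $\Pi_X$ (well defined because $\dim(\Pi_X)\le m$), and again the identity applies. Taking suprema over admissible input distributions yields $C_{\textsl{m}} = C_{\textsl{s}}$. The delicate step is the conditional independence $Y\perp\Pi_X\mid\Pi_Y$: it relies on the fact that $P(Y\mid\Pi_X=\pi_x)$ is uniform over all $Y$ with rows in $\pi_x$, which itself stems from the uniformity of $G$, and on the observation that the number $\psi(T,n,\pi_y)$ of matrices with a prescribed row span depends on $\pi_y$ only through its dimension.
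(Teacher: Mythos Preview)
Your argument is correct and is essentially a cleaner, more structural version of the paper's proof. The paper establishes $I(X;Y)=I(\Pi_X;\Pi_Y)$ by directly manipulating the double sum in \eqref{eq:I_1}: it first groups the outer sum over $x$ into classes with the same row span $\pi_x$ (using that $P_{Y|X}(y|x)$ depends on $x$ only through $\sspan{x}$), and then groups the inner sum over $y$ into classes with the same row span $\pi_y$, pulling out the counting factor $\psi(T,n,\pi_y)$ to arrive at the subspace transition law \eqref{eq:P2P_channel_transfer_prob_2}. You instead extract two Markov relations, $X\!-\!\Pi_X\!-\!Y$ and $\Pi_X\!-\!\Pi_Y\!-\!Y$, and chain them; this is the sufficient-statistic viewpoint and makes the information-theoretic reason for the equivalence transparent without any bookkeeping of sums. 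The paper's computation has the minor advantage that it derives the explicit form of $P_{\Pi_Y|\Pi_X}$ along the way, which is then reused in \S\ref{subsec:OptSolution_SingleSrc_GenApproach}; your proof cites it from Definition~\ref{def:subspace_channel_P2P} and verifies consistency. One small remark: in your last sentence, the fact that $\psi(T,n,\pi_y)$ depends on $\pi_y$ only through its dimension is true (Lemma~\ref{lem:psi_value}) but not actually needed for the conditional-independence step---what matters is that $P(Y=y\mid \Pi_X=\pi_x,\Pi_Y=\pi_y)=1/\psi(T,n,\pi_y)$ does not depend on $\pi_x$, regardless of how it depends on $\pi_y$.
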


For the proof of Theorem~\ref{thm:channel_equivalence_P2P} refer
to Appendix~\ref{sec:apndx1} and for more discussion refer to
\S\ref{subsec:Equiv-MatrixChannel-SubspaceChannel}.

\begin{theorem}\label{thm:Main_Result_Single_Src}
For the channel $\mathrm{Ch}_{\textsl{m}}:\Xca\rightarrow \Yca$ 
defined in Definition~\ref{def:matrix_channel_P2P},
the capacity is given by
\begin{align}
C_{\textsl{m}} &= i^*(T-i^*)\log_2 q + o(1),
\end{align}
where $i^* = \min\left[m,n,\lfloor T/2\rfloor\right]$, and $o(1)$ tends to zero as $q$ grows.
\end{theorem}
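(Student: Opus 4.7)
The plan is to use Theorem~\ref{thm:channel_equivalence_P2P} to transport the problem onto the subspace channel $\mathrm{Ch}_{\textsl{s}}$, where the outputs live in a combinatorial set whose cardinality is controlled by Gaussian coefficients, and then prove matching upper and lower bounds on $C_{\textsl{s}}=C_{\textsl{m}}$.

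For the upper bound I would start from $I(\Pi_X;\Pi_Y)\le H(\Pi_Y)$. By Definition~\ref{def:subspace_channel_P2P} the channel forces $\Pi_Y\sqsubseteq\Pi_X$, and moreover $\dim(\Pi_Y)\le n$ because $\Pi_Y$ is the row span of an $n\times T$ matrix, so $\Pi_Y\in\mathrm{Sp}(T,\min[m,n])$. Hence $H(\Pi_Y)\le\log_2\Sca(T,\min[m,n])$. Lemma~\ref{lem:gauss_Approximation} gives $\gaussnum{T}{d}\doteq q^{d(T-d)}$, and the quadratic $d\mapsto d(T-d)$ attains its maximum over $d\in\{0,\dots,\min[m,n,T]\}$ precisely at $d=i^*=\min[m,n,\lfloor T/2\rfloor]$, so $\Sca(T,\min[m,n])\doteq q^{i^*(T-i^*)}$ and therefore $H(\Pi_Y)\le i^*(T-i^*)\log_2 q+o(1)$.

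For the lower bound I would exhibit an input distribution that nearly saturates this bound. Let $\Pi_X$ be uniform on $\mathrm{Gr}(T,i^*)$, which is legitimate since $i^*\le\min[m,T]$. By Lemma~\ref{lem:gauss_Approximation}, $H(\Pi_X)=\log_2\gaussnum{T}{i^*}=i^*(T-i^*)\log_2 q+o(1)$. Because $i^*\le n$, the bound \eqref{eq:psi_bound} yields $P(\Pi_Y=\pi_x\mid \Pi_X=\pi_x)=\psi(n,i^*)/q^{ni^*}=1-O(q^{-1})$ for every $\pi_x\in\mathrm{Gr}(T,i^*)$. Consequently the naive decoder $\hat{\Pi}_X=\Pi_Y$ makes an error with probability $P_e=O(q^{-1})$, and Fano's inequality gives $H(\Pi_X\mid\Pi_Y)\le h_2(P_e)+P_e\log_2|\mathrm{Gr}(T,i^*)|=O(q^{-1}\log_2 q)=o(1)$. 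Therefore $I(\Pi_X;\Pi_Y)=H(\Pi_X)-H(\Pi_X\mid\Pi_Y)\ge i^*(T-i^*)\log_2 q+o(1)$, matching the upper bound.

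The step I expect to be the main obstacle is making the upper bound genuinely tight: the crude support bound $\log_2\Sca(T,\min[m,n])$ gives the correct leading order only because any attempt to place mass on input dimensions $d_x>n$ is exactly cancelled by a noisier channel. A cleaner treatment writes $I(\Pi_X;\Pi_Y)=H(\Pi_Y)-H(\Pi_Y\mid\Pi_X)$ and shows that for $d_x>n$ the distribution of $\Pi_Y$ conditional on $\Pi_X=\pi_x$ concentrates on $n$-dimensional subspaces of $\pi_x$, with $H(\Pi_Y\mid\Pi_X=\pi_x)\doteq\log_2\gaussnum{d_x}{n}\doteq n(d_x-n)\log_2 q$, exactly cancelling the extra entropy $H(\Pi_Y)$ might gain by exploring larger ambient supports. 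This convex/typicality-style argument forces the optimum into the range $d_x\le n$, where the Grassmannian strategy above is tight. Controlling the $o(1)$ remainders in both directions is purely bookkeeping, since the tight Gaussian and $\psi$ estimates are already provided by Lemmas~\ref{lem:gauss_Approximation}--\ref{lem:psi_Recursive}.
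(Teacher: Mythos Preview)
Your first two paragraphs constitute a complete and correct proof. The upper bound via $I(\Pi_X;\Pi_Y)\le H(\Pi_Y)\le\log_2\Sca(T,\min[m,n])$ is valid for \emph{every} input distribution, because the output $\Pi_Y$ always lies in $\mathrm{Sp}(T,\min[m,n])$ regardless of $\dim(\Pi_X)$; combining this with Lemma~\ref{lem:gauss_Approximation} gives the claimed $i^*(T-i^*)\log_2 q+o(1)$. Your lower bound via Fano's inequality is also correct: with $\Pi_X$ uniform on $\mathrm{Gr}(T,i^*)$ and $i^*\le n$, the bound \eqref{eq:psi_bound} gives $\Prob{\Pi_Y=\Pi_X}\ge 1-i^*/q$, so $H(\Pi_X\mid\Pi_Y)=o(1)$.

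Your route differs from the paper's. The paper does not work directly on $\mathrm{Ch}_{\textsl{s}}$ for the upper bound; instead it sandwiches $C_{\textsl{m}}(n,m,T)$ between $C_{\textsl{m}}(h,h,T)$ and $C_{\textsl{f-m}}(h,h,T)$ with $h=\min[m,n]$ using data-processing inequalities applied to factorizations of the channel matrix (Lemma~\ref{lem:Cap_SingleSrc_UpperLower}), and then invokes an external result of Silva--Kschischang--Koetter to evaluate $C_{\textsl{f-m}}(h,h,T)=\log_2\sum_{i=0}^h\gaussnum{T}{i}$. For the lower bound the paper also reduces to the square channel and conditions on the event $Q=\{\mathrm{rank}(Y)=i^*\}$ rather than invoking Fano. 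Your argument is more self-contained: it avoids both the reduction chain and the external citation, at the cost of relying on Theorem~\ref{thm:channel_equivalence_P2P} to identify the output alphabet. The two upper bounds are numerically identical, since both reduce to $\log_2\sum_{d\le\min[m,n]}\gaussnum{T}{d}$.

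Your final paragraph reflects a confusion you should drop. The support bound $H(\Pi_Y)\le\log_2|\mathrm{Sp}(T,\min[m,n])|$ is already a valid upper bound on $I(\Pi_X;\Pi_Y)$ for \emph{any} input law, including ones that put mass on $d_x>n$; there is nothing to ``cancel''. The question of whether inputs with $d_x>n$ are useful is a question about \emph{achievability}, not about the validity of the converse, and it is fully answered by your lower-bound construction, which uses only $d_x=i^*\le n$ and already matches the upper bound to within $o(1)$. The entropy-balance discussion you sketch is relevant to characterizing the optimal input distribution (as in Theorem~\ref{thm:Main_Result_Single_Src_Dist}), but it is not needed for Theorem~\ref{thm:Main_Result_Single_Src}.
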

Theorem~\ref{thm:Main_Result_Single_Src} is proved in
\S\ref{subsec:Capacity-UpperLowerBound}.  The result of
Theorem~\ref{thm:Main_Result_Single_Src} is for large alphabet
regime\footnote{We gratefully acknowledge the contribution of an
  anonymous reviewer who gave an alternate proof, which
  focused on the asymptotic $q$ regime. We have included that proof
  in \S\ref{subsec:Capacity-UpperLowerBound}. Our original proof was
  based partially on the proof now given for
  Theorem~\ref{thm:Main_Result_Single_Src_Dist}.}. The following
result, Theorem~\ref{thm:Main_Result_Single_Src_Dist}, 
 is valid for a finite field size, and therefore is a
non-asymptotic result.

\begin{theorem}\label{thm:Main_Result_Single_Src_Dist}
Consider the channel $\mathrm{Ch}_{\textsl{m}}:\Xca\rightarrow \Yca$ 
defined in Definition~\ref{def:matrix_channel_P2P}. 
There exists a finite number~$q_0$ such that for $q>q_0$
the optimal input distribution is nonzero only for matrices of rank 
 in the  set
\begin{equation}
\mathcal{A} = \left\{\min\left[(T-n)^+,m,n,T\right],\ldots,\min\left[m,n,T\right] \right\}.
\end{equation}
%
%where the active set $\mathcal{A}$ is the set of all ranks  the 
%optimal input distribution is non-zero for the matrices whose rank belong 
%to $\mathcal{A}$. 
Moreover, for all values of $q$ the optimal input distribution is uniform over all matrices $X$ of 
the same rank, and the total probability allocated to transmitting matrices of rank $i$ 
equals
\begin{equation}
\alpha^*_i \triangleq  \Prob{\mathrm{rank}(X)=i} = 2^{-C_{\textsl{m}}} q^{i(T-i)}
\left[1+o(1)\right],\quad \forall
i\in\mathcal{A}.
\end{equation}

\end{theorem}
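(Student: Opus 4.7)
Plan: I would reduce the problem to the subspace channel via Theorem~\ref{thm:channel_equivalence_P2P}, exploit an exact group symmetry to parametrize candidate optima by a dimension distribution, and then identify the optimum through an asymptotic KKT analysis.

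The transition probability \eqref{eq:P2P_channel_transfer_prob_2} depends on $(\pi_x,\pi_y)$ only through $\dim\pi_x$, $\dim\pi_y$, and the event $\pi_y\sqsubseteq\pi_x$, so it is invariant under the diagonal action of $GL_T(\Fbb_q)$ on subspaces. Since $I(\Pi_X;\Pi_Y)$ is concave in the input distribution and the channel commutes with this action, replacing any input by its $GL_T$-orbit average cannot decrease mutual information. Because the $GL_T$-orbits inside $\mathrm{Sp}(T,m)$ are exactly the Grassmannians $\mathrm{Gr}(T,d)$, a capacity-achieving subspace distribution exists that is uniform on each $\mathrm{Gr}(T,d)$ and is fully described by a dimension vector $\alpha=(\alpha_d)_{d=0}^{\min[m,T]}$. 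By Lemma~\ref{lem:psi_value} the number of matrices spanning a given subspace depends only on its dimension, so this back-translates to a matrix distribution uniform within each rank class -- the second assertion of the theorem, valid for every $q$.

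Under this parametrization the output is also uniform within each dimension, with probabilities $\beta_{d_y}=\sum_{d\ge d_y}\alpha_d\,\gaussnum{d}{d_y}_q\,\psi(n,d_y)\,q^{-nd}$. The distribution $\alpha^*$ is capacity-achieving iff the KKT divergences $D(d):=D\bigl(P_{\Pi_Y\mid\dim\Pi_X=d}\,\|\,P_{\Pi_Y}\bigr)$ equal $C_{\textsl{m}}$ on the support of $\alpha^*$ and are at most $C_{\textsl{m}}$ off the support. Using Lemmas~\ref{lem:gauss_Approximation} and \ref{lem:psi_Approximation}, the conditional output dimension concentrates on $d_y=\min(d,n)$ with probability $1-O(q^{-1})$, which gives
\[
D(d) = \begin{cases} d(T-d)\log_2 q - \log_2\beta_d + O(1), & d\le n, \\ n(T-d)\log_2 q - \log_2\beta_n + O(1), & d>n. \end{cases}
\]

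I would then verify the ansatz $\alpha^*_d = 2^{-C_{\textsl{m}}} q^{d(T-d)}(1+o(1))$ for $d\in\mathcal{A}$ and $\alpha^*_d=0$ otherwise. Substituting into the formula for $\beta_{d_y}$, for $d_y\in\mathcal{A}$ the $d'=d_y$ term dominates because the ``leakage'' exponents $g(d'):=d'(T-d')-(n-d_y)(d'-d_y)$ have derivative $g'(d_y)=T-d_y-n$, which is non-positive precisely when $d_y\ge (T-n)^+$; hence $\beta^*_{d_y}\doteq 2^{-C_{\textsl{m}}} q^{d_y(T-d_y)}$, yielding $D(d)=C_{\textsl{m}}+o(1)$ for every $d\in\mathcal{A}$, and the normalization $\sum_d \alpha^*_d = 1$ then forces $2^{C_{\textsl{m}}}\doteq q^{i^*(T-i^*)}$, consistent with Theorem~\ref{thm:Main_Result_Single_Src}. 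For $d\notin\mathcal{A}$ the strict inequality $D(d)<C_{\textsl{m}}$ is immediate when $d>\min[m,n,T]$ since then $d>n$ and $D(d)\approx n(T-d)\log_2 q - \log_2\beta_n$ undershoots $C_{\textsl{m}}$ by $\Omega(\log q)$. The main obstacle is the lower-boundary case $d<(T-n)^+$: a short calculation gives $g((T-n)^+)\big|_{d_y=d}=d(T-d)$, so the leading $q^{d(T-d)}$ order of $\beta_d$ coincides with its ``in-support'' value and the strict KKT gap has to be extracted from the subleading multiplicative corrections in the exact identities for $\gaussnum{}{}$ and $\psi$. This finer analysis is exactly what pins down the lower endpoint of $\mathcal{A}$ as $(T-n)^+$.
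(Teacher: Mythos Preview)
Your proposal is correct and follows essentially the same architecture as the paper: reduce to the subspace channel via Theorem~\ref{thm:channel_equivalence_P2P}, exploit symmetry (your $GL_T$-orbit averaging is exactly Lemma~\ref{lem:uniform_dist}) to parametrize optima by a dimension vector $(\alpha_d)$, and then solve the KKT system asymptotically in $q$. The one execution difference worth noting concerns the lower-boundary step you flagged. Rather than guess the ansatz and then dig into the subleading $1+O(q^{-1})$ factors of $\gaussnum{}{}$ and $\psi$ to break the leading-order tie at $d<(T-n)^+$, the paper rewrites the KKT system in upper-triangular matrix form $\mathbf{A}\boldsymbol{\alpha}^*\succeq 2^{-C_{\textsl{s}}+o(1)}\mathbf{b}$ and back-substitutes from the top dimension downward (\S\ref{subsec:OptSolution_SingleSrc_Large_q}); after substituting the already-determined $\alpha^*_j$ for $j>l$, the residual at step $l$ is proportional to $1-\sum_{j>l}q^{(T-n-j)(j-l)}$, whose sign is read off directly (the sum is $o(1)$ for $l\ge T-n$ and at least $1$ for $l<T-n$), so the support endpoint $(T-n)^+$ falls out of the induction without the finer constant-tracking your route requires.
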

The proof of Theorem~\ref{thm:Main_Result_Single_Src_Dist} is presented
in \S\ref{subsec:OptSolution_SingleSrc_GenApproach} and \S\ref{subsec:OptSolution_SingleSrc_Large_q}, 
and uses standard techniques from convex optimization, as well as 
large field size approximations. Note that, the same 
coding scheme at the source simultaneously achieves the capacity
for all receivers with the same channel parameters (\ie, values of $n$, $m$ and $T$).  
That is, each receiver is able to successfully decode.

The result of Theorem~\ref{thm:Main_Result_Single_Src_Dist} for the
active set of input dimensions is not asymptotic in $q$. However, it
is not easy to analytically find the minimum value of $q_0$  such that
the theorem statement holds for all $q>q_0$.
Theorem~\ref{thm:Main_Result_Single_Src_Extnd} demonstrates how we can
analytically characterize $q_0$ given in
Theorem~\ref{thm:Main_Result_Single_Src_Dist} for the case 
$T>n+\min[m,n]$.  The proof of
Theorem~\ref{thm:Main_Result_Single_Src_Extnd} is presented in
\S\ref{sec:Proof_Main_Result_Single_Src_Extnd}.

\begin{theorem}\label{thm:Main_Result_Single_Src_Extnd}
If  $T>n+\min[m,n]$, then the capacity of $\mathrm{Ch}_{\textsl{m}}$
for $q\ge q_0$ is given by
\begin{align}
C_{\textsl{m}} &= \sum_{l=0}^{i^*} \psi(n,l) \gaussnum{i^*}{l} q^{-ni^*} \log_2\left( \frac{\gaussnum{T}{l}}{\gaussnum{i^*}{l}} \right)  \nonumber\\
&= i^*(T-i^*)\log_2{q} -\1_{\{n\le m\}} (T-i^*)\frac{\log_2{q}}{q} + q^{-1} + o(q^{-1}),
\end{align}
where $\1_{\{\cdot\}}$ is the indicator function and $q_0$ is the minimum field size that satisfies the set of inequalities
\begin{equation*}
\frac{\epsilon_{q_0}(l)-\epsilon_{q_0}(i^*)}{(T-n-i^*)(i^*-l)} \le \log_2{q_0}, \quad \forall l: 0\le l\le (i^*-1),
\end{equation*}
and
\begin{equation*}
\frac{\epsilon_{q_0}(l)-\epsilon_{q_0}(i^*)}{i^*(l-i^*)} \le \log_2{q_0}, \quad \forall l: (i^*+1)\le l\le m,
\end{equation*}
where $i^*=\min[m,n]$ and 
\[
\epsilon_{q}(l) \triangleq \sum_{d_y=0}^{\min[n,l]} \psi(n,d_y) \gaussnum{l}{d_y} q^{-nl} \log_2\left( \frac{\gaussnum{T}{d_y}}{\gaussnum{i^*}{d_y}} \right) 
-\min[n,l](T-i^*).
\]
The capacity is  achieved by sending matrices $X$
such that their rows span different $i^*$-dimensional subspaces.

Moreover, asymptotically in $T$, we can show that $q_0^{n-m+1}\ge
5m^2$ is sufficient for the case $m\le n$ and $q_0\ge nT$ is
sufficient if $m>n$.
\end{theorem}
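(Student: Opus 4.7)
The plan is to exhibit the claimed capacity-achieving distribution explicitly and then use the Karush--Kuhn--Tucker (KKT) optimality conditions for channel capacity to verify it. By Theorem~\ref{thm:channel_equivalence_P2P} we may work throughout with the equivalent subspace channel $\mathrm{Ch}_{\textsl{s}}$, so the candidate is the uniform distribution $P^*_{\Pi_X}$ on $\mathrm{Gr}(T,i^*)$ with $i^*=\min[m,n]$ (this is natural because the hypothesis $T>n+\min[m,n]$ forces the set $\mathcal{A}$ of Theorem~\ref{thm:Main_Result_Single_Src_Dist} to collapse to $\{i^*\}$). The proof splits into two tasks: (A) compute $I(\Pi_X;\Pi_Y)$ under $P^*_{\Pi_X}$ and match it with the stated formula, and (B) verify $D(P_{\Pi_Y\mid\pi_x}\,\|\,P^*_{\Pi_Y})\le I(\Pi_X;\Pi_Y)$ for every $\pi_x\in\mathrm{Sp}(T,m)$, with equality when $\dim\pi_x=i^*$, which is the standard sufficient optimality test for a DMC.

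For (A), by symmetry the output marginal $P^*_{\Pi_Y}(\pi_y)$ depends only on $d_y=\dim\pi_y$. Counting the $\gaussnum{T-d_y}{i^*-d_y}$ many $i^*$-dimensional subspaces containing a fixed $\pi_y$ and invoking \eqref{eq:P2P_channel_transfer_prob_2}, one obtains $P^*_{\Pi_Y}(\pi_y)=\psi(n,d_y)\,q^{-ni^*}\gaussnum{i^*}{d_y}/\gaussnum{T}{d_y}$ for $0\le d_y\le i^*$, whose normalization is precisely Lemma~\ref{lem:psi_Recursive} applied with parameters $(n,i^*)$. Plugging into $I(\Pi_X;\Pi_Y)=H(\Pi_Y)-H(\Pi_Y\mid\Pi_X)$, the $\psi(n,d_y)q^{-ni^*}$ factors inside the logarithms cancel and we recover exactly the closed form $\sum_{l=0}^{i^*}\psi(n,l)\gaussnum{i^*}{l}q^{-ni^*}\log_2(\gaussnum{T}{l}/\gaussnum{i^*}{l})$. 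The refined expansion $i^*(T-i^*)\log_2 q - \mathbf{1}_{\{n\le m\}}(T-i^*)(\log_2 q)/q + q^{-1}+o(q^{-1})$ then follows by applying Lemmas~\ref{lem:gauss_Approximation} and~\ref{lem:psi_Approximation} while retaining the next-to-leading-order $O(q^{-1})$ corrections in both $\log_2(\gaussnum{T}{l}/\gaussnum{i^*}{l})=l(T-i^*)\log_2 q + O(q^{-1})$ and $\psi(n,l)\gaussnum{i^*}{l}q^{-ni^*}$.

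For (B), again by symmetry $D_l:=D(P_{\Pi_Y\mid\pi_x}\,\|\,P^*_{\Pi_Y})$ depends only on $l=\dim\pi_x$. A direct computation using \eqref{eq:P2P_channel_transfer_prob_2} and the form of $P^*_{\Pi_Y}$ gives
\[
D_l=n(i^*-l)\log_2 q+\sum_{d_y=0}^{\min[n,l]}\psi(n,d_y)\gaussnum{l}{d_y}q^{-nl}\log_2\!\left(\frac{\gaussnum{T}{d_y}}{\gaussnum{i^*}{d_y}}\right),
\]
and the KKT condition $D_l\le D_{i^*}$ must be verified for all $l\in\{0,1,\ldots,\min[m,T]\}$. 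Writing the leading behavior $\log_2(\gaussnum{T}{d_y}/\gaussnum{i^*}{d_y})=d_y(T-i^*)\log_2 q+O(q^{-1})$ and using $\sum_{d_y}d_y\psi(n,d_y)\gaussnum{l}{d_y}q^{-nl}\approx\min[n,l]$ to isolate the dominant scaling, the inequality $D_l\le D_{i^*}$ rearranges after dividing by the appropriate positive factor into precisely the two inequalities of the theorem involving $\epsilon_q(l)$. The asymmetry between the two cases ($l<i^*$ with denominator $(T-n-i^*)(i^*-l)$ versus $l>i^*$ with denominator $i^*(l-i^*)$) reflects that $\min[n,l]=l$ when $l\le i^*\le n$ whereas $\min[n,l]=n$ when $l>i^*$ in the regime $m\le n$, producing different coefficients of $\log_2 q$ in the gap $D_{i^*}-D_l$.

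The main obstacle is the last statement: showing that the explicit thresholds $q_0^{n-m+1}\ge 5m^2$ (when $m\le n$) and $q_0\ge nT$ (when $m>n$) suffice, uniformly in $T$, to satisfy the inequalities of (B). This requires replacing the pointwise asymptotics $\doteq$ by controlled, non-asymptotic bounds using \eqref{eq:GaussianNumber_bound_1} and \eqref{eq:psi_bound}, and tracking how error terms scale jointly in $T$ and $q$. For $m>n$ the index $l$ ranges over the bounded set $\{0,\ldots,m\}$ and the dominant error in $\log_2(\gaussnum{T}{d_y}/\gaussnum{i^*}{d_y})$ grows at most like $T/q$, which the condition $q_0\ge nT$ dominates. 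For $m\le n$ the critical correction is the $q^{-(n-l)}$ contribution hidden in $\psi(n,l)q^{-nl}$ at $d_y=l$: controlling its accumulated effect against the positive gap $(i^*-l)\log_2 q_0$ on the $l<i^*$ side requires the threshold to absorb a factor of $q^{-(n-m)}$, producing the condition $q_0^{n-m+1}\ge 5m^2$. Keeping the constants explicit in these error bounds, rather than letting them drift, is the delicate part of the argument.
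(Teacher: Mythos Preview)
Your proposal is correct and arrives at the same destination as the paper, but by a slightly more direct route. The paper's proof of Theorem~\ref{thm:Main_Result_Single_Src_Extnd} leans on the scaffolding built in \S\ref{subsec:OptSolution_SingleSrc_GenApproach}--\S\ref{subsec:OptSolution_SingleSrc_Large_q}: it invokes Theorem~\ref{thm:Main_Result_Single_Src_Dist} to justify the candidate $\alpha_{i^*}=1$, then substitutes into the derivative-form KKT conditions $I'_k$ of Lemma~\ref{lem:I_Derivative_Simplified} (specifically the exact version of \eqref{eq:KuhnTucker_Cond_dim_l}) to extract the $\epsilon_q(l)$ inequalities. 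You instead propose the uniform law on $\mathrm{Gr}(T,i^*)$ directly and verify optimality via the divergence-form KKT test $D(P_{\Pi_Y\mid\pi_x}\,\|\,P^*_{\Pi_Y})\le C$, which is the standard equivalent formulation; your $D_l$ differs from the paper's $I'_l$ only by the additive constant, and the rearrangement $D_{i^*}-D_l\ge 0$ reproduces exactly the two families of inequalities in the theorem statement (your case split on $\min[n,l]$ is precisely what generates the two denominators). For the asymptotic sufficient conditions on $q_0$, both you and the paper use the explicit bounds \eqref{eq:GaussianNumber_bound_1} and \eqref{eq:psi_bound} to control $\epsilon_q(l)$ from above and $\epsilon_q(i^*)$ from below; the paper records these as \eqref{eq:epsilon_UpperBound} and \eqref{eq:epsilon_LowerBound}. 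Your approach buys a cleaner logical dependency structure (no need to cite Theorem~\ref{thm:Main_Result_Single_Src_Dist} except as motivation), while the paper's approach reuses machinery already in place.
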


Theorems~\ref{thm:Main_Result_Single_Src}~and~\ref{thm:Main_Result_Single_Src_Dist} 
state that the capacity behaves
as $i^*(T-i^*) \log_2 q$, for sufficiently large $q$. However, numerical
simulations indicate a very fast convergence to this value as $q$
increases. Fig.~\ref{fig:NumericalCapacity_P2P} depicts the capacity
for small values of $q$, calculated using the Differential Evolution
toolbox for MATLAB \cite{PrSt-JGO97}. This shows that the result is
relevant at much lower field size than dictated by the formalism of
the statement of Theorems~\ref{thm:Main_Result_Single_Src}~and~\ref{thm:Main_Result_Single_Src_Dist}.

\begin{figure}[thb!]
\begin{center}
\includegraphics[width=5in]{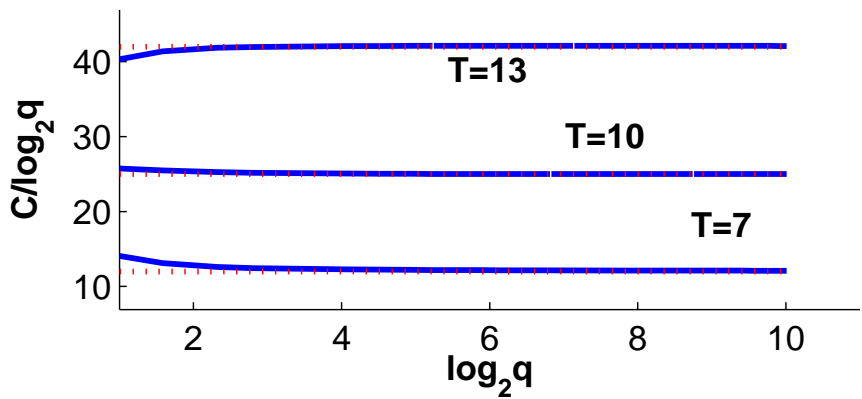}
\end{center}
\caption{Numerical calculation of the capacity for small values of $q$ and 
$m=11$, $n=7$. The dotted line depicts $i^*(T-i^*)$. }
\label{fig:NumericalCapacity_P2P}
%\vspace{-5mm}
\end{figure}

From Theorem \ref{thm:Main_Result_Single_Src_Dist}, we can derive the following 
guidelines for non-coherent network code design.
\subsubsection{Choice of subspaces}  
The optimal input distribution uses subspaces of a single dimension
equal to $\min[m,n]$ for $T\ge \min[m,n]+n$. As $T$ reduces, the set
of used subspaces gradually increases, by activating one by one
smaller and smaller dimensional subspaces, until, for $T\le n$, all
subspaces are used with equal probability\footnote{Note that although 
all the subspaces are equiprobable, we have distinct values for $\alpha_i^*$
since there are  different number of subspaces of each dimension.}. Fig.~\ref{fig:active}
pictorially depicts this gradual inclusion of subspaces. 

This behavior is different from the result of~\cite{SiKschKo-CapRandNetCod}
where  all the subspaces up to  dimension equal to the min-cut
appeared in the optimal input distribution. This difference is due to 
the different channel model used in our work and in \cite{SiKschKo-CapRandNetCod}.

\begin{figure}[thb!]
\begin{center}
\includegraphics{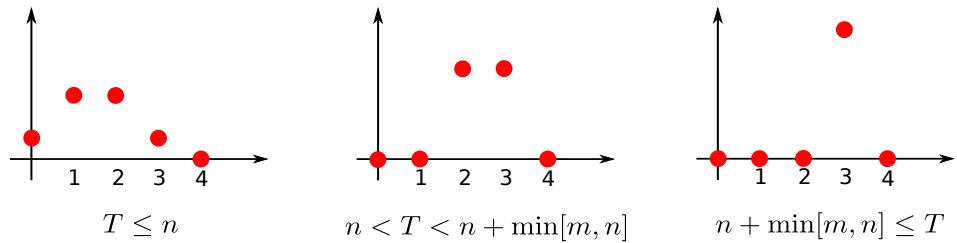}
\end{center}
\caption{Probability mass function of the active subspace dimensions 
for channel parameters $m=4$, $n=3$. As it is shown in Theorem~\ref{thm:Main_Result_Single_Src_Dist}
there exist three different regimes.}
\label{fig:active}
\end{figure}

\subsubsection{Values of m and n}
For a given and fixed packet length $T$, the optimal value of $m$ and $n$ 
equals $m=n=\lfloor T/2 \rfloor$ (optimality is in the sense of minimum 
requirement in order to obtain the maximum capacity for this $T$). For 
fixed $T$ and $m$, the optimal value of $n$ equals $n=\min[m,\lfloor T/2 \rfloor ]$. 
For fixed $T$ and $n$, the optimal value of $m$ equals $m=\min[n,\lfloor T/2 \rfloor ]$.

\begin{table}[tbh!] 
\caption{\label{tabl_1} Information loss from using coding vectors when $n= m$.} 
%\vspace{-1.5em}
\begin{center} 
\begin{tabular}{|c|c|c|}\hline 
 & $T \le 2m$ & $T > 2m $  \\\hline 
$C_{\textsl{m}}-R_{\textsl{cv}}$ & $o(1)$ & $o(1)=(i^*-1)(T-i^*)\frac{\log_2{q}}{q} + O(q^{-1})$  \\\hline 
\end{tabular} 
\end{center} 
%\vspace{-1em}
\end{table} 
\subsubsection{Subspace coding vs. coding vectors}  
One of the aims of this work was to find the regimes in which the
using of coding vectors \cite{ChWuJa-Alert03} is far from optimal.
Table~\ref{tabl_1} summarizes this difference. As we see
from the Table~\ref{tabl_1} subspace coding does not offer benefits as compared 
to the coding vectors approach for large field size\footnote{In the algebraic framework of \cite{KoetKsch-IT08-erasure},
the lifting construction used coding vectors, and they showed that this construction
achieves almost the same rates as optimal algebraic subspace codes. However, we
demonstrate in this paper that this phenomenon occurs for longer packet lengths
using an information-theoretic framework.}.

Table~\ref{tabl_1} is calculated as follows. 
The achievable rate $R_{\textsl{cv}}$ using coding vectors equals 
\[
R_{\textsl{cv}}\triangleq \Prob{\mathrm{rank}(G_k)=k}k(T-k)\log_2{q},
\] 
where $0<k\le m$ is the number of packets in each generation, \ie, each 
packet includes a coding vector of length $k$ and $T-k$ information 
symbols. Equivalently, we assume that we use $k$ of the $m$ possible input packets. The matrix $G_k$ is the $k\times k$ sub-matrix of $G$ that is applied
over the input packets. To calculate $R_{\textsl{cv}}$, 
we know that $\Prob{\mathrm{rank}(G_k)=k}=\prod_{i=0}^{k-1} (1-q^{-k+i})=1-q^{-1}+O(q^{-2})$. Assume we choose $k=i^*$ we have 
$R_{\textsl{cv}}= i^*(T-i^*)\log_2{q} - i^*(T-i^*)\frac{\log_2{q}}{q}$,
where $i^*=\min\left[m,n,\lfloor T/2\rfloor \right]$. For the capacity
$C_{\textsl{m}}$ we use the large $q$-regime as considered in Theorem~\ref{thm:Main_Result_Single_Src} for the
case $T\le 2m$ and the finite $q$-regime of Theorem~\ref{thm:Main_Result_Single_Src_Extnd} for the
case $T>2m$.

\subsection{Extension to the packet erasure networks}
After the error free single source scenario, we consider  packet erasure
networks, and calculate an upper and lower bound on the capacity for
this case. The work in \cite{SiKschKo-CapRandNetCod}, which is the closest to ours,
did not consider erasures but instead constant-dimension additive errors. 
In practice, depending on the application, either of the models might be more suitable: for example, if network coding is deployed at an application layer, then, unless there exist malicious attackers, packet erasures are typically used to abstract both the underlying physical channel errors, as well as packet dropped at queues or lost due to expired timers.

We model the erasures in the network as an end-to-end
phenomenon which randomly erases packets according to some probability
distribution. 
Formally, we rewrite the channel defined in
\eqref{eq:channel_model_P2P} as
\begin{equation}\label{eq:channel_model_P2P_erasure}
Y[t]=E[t]G[t]X[t],
\end{equation}
where $G\in\Fbb_q^{m\times m}$ is assumed to be a squre chanel matrix and $E\in\Fbb_q^{m\times m}$ is a diagonal random matrix whose elements 
on its diagonal are either $1$ or $0$. We also assume that $q$ is large, and as a result the
transfer matrix is full rank with high probability. 
Moreover, we consider the case where $m\leq \frac{T}{2}$, i.e. the matrix $X$ is a fat matrix.
Recall that we can think of the rows of this matrix as packets send by the source, and  the rows of the $Y$ matrix as packets received at the destination.

Note that in equation (\ref{eq:channel_model_P2P_erasure})
all of the erasure events are captured by the erasure matrix $E[t]$.
Moreover,  the erasure pattern is important only up to determining the number of packets that the destination receives, 
since the transfer matrix $G[t]$ 
is unknown and  distributed uniformly at random over all full rank matrices.
Thus, we model the number of received packets (number of non-zero elements 
on the diagonal of $E[t]$) as a random variable $N$ which takes values in $0\le N\le m$ 
according to some distribution that depends on the packet erasures in the 
network.  
In this case the capacity is
\[
C_e = \max_{P_X} I(X;Y,N).
\]
We can then use our previous result,
Theorem~\ref{thm:Main_Result_Single_Src}, to find an upper and lower
bound for the capacity $C_e$ when we have packet erasure in the
network, as the following
Theorem~\ref{thm:P2P_erasure} describes.

\begin{theorem}\label{thm:P2P_erasure}
Let the number of received packets at the destination 
be a random variable $N$ defined 
over the set of integers $0\le N\le m$. Also, assume that $m\le  T/2$. 
Then for large $q$, we have the following upper and lower bound for the capacity $C_e$,
\begin{align*}
\mu_1(T-m)\log_2{q} \le C_e \le \mu_1\left(T-\frac{\mu_2}{\mu_1}\right)\log_2{q},
\end{align*}
where $\mu_1\triangleq\Expc{N}{N}$ and $\mu_2\triangleq\Expc{N}{N^2}$.
\end{theorem}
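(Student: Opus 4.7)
The plan is to reduce the erasure channel to the error-free matrix channel $\mathrm{Ch}_{\textsl{m}}$ by conditioning on the received-packet count $N$. Since the input $X$ is independent of the erasure matrix $E$ and hence of $N$, we have $I(X;N)=0$ and
\begin{equation*}
I(X;Y,N) \;=\; I(X;Y\mid N) \;=\; \sum_{n=0}^{m} P_N(n)\, I(X;Y\mid N=n).
\end{equation*}
Conditional on $\{N=n\}$ and on the support $S$ of the received rows, the nonzero rows of $Y$ form $G_S X$, where $G_S$ is the $n\times m$ row-submatrix of $G$ indexed by $S$. Since $G$ is uniform on $\Fbb_q^{m\times m}$ and independent of $E$, $G_S$ is uniform on $\Fbb_q^{n\times m}$, so the conditional channel from $X$ to the nonzero rows of $Y$ coincides with the matrix channel of Definition~\ref{def:matrix_channel_P2P} with parameters $(m,n,T)$. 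Therefore $I(X;Y\mid N=n)$ equals a single-use mutual information for that channel.

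For the upper bound, I would replace each conditional term by the corresponding capacity. Under the hypothesis $n\le m\le T/2$, the quantity $i^* = \min[m,n,\lfloor T/2\rfloor]$ of Theorem~\ref{thm:Main_Result_Single_Src} equals $n$, so $C_{\textsl{m}}(m,n,T) = n(T-n)\log_2 q + o(1)$ as $q\to\infty$. Pulling $\max_{P_X}$ inside the expectation over $N$ yields
\begin{equation*}
C_e \;\le\; \sum_{n=0}^{m} P_N(n)\, n(T-n)\log_2 q + o(1) \;=\; (\mu_1 T - \mu_2)\log_2 q + o(1) \;=\; \mu_1\!\left(T - \tfrac{\mu_2}{\mu_1}\right)\log_2 q + o(1),
\end{equation*}
which is the claimed bound.

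For the lower bound, I would evaluate a specific input: using the equivalence in Theorem~\ref{thm:channel_equivalence_P2P}, pick $\Pi_X$ uniform on the Grassmannian $\mathrm{Gr}(T,m)$. Conditioned on $\{N=n\}$, the transition law \eqref{eq:P2P_channel_transfer_prob_2} assigns probability proportional to $\psi(T,n,\pi_y)$ to each $\pi_y \sqsubseteq \Pi_X$. Since $\psi(T,n,\pi_y)$ depends on $\pi_y$ only through its dimension, conditional on $\dim \Pi_Y=n$ the output is uniform over the $n$-dimensional subspaces of $\Pi_X$, and by Lemma~\ref{lem:psi_Approximation} this event has probability $1-O(q^{-1})$. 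Consequently
\begin{equation*}
H(\Pi_Y \mid \Pi_X, N=n) \;\doteq\; \log_2 \gaussnum{m}{n} \;\doteq\; n(m-n)\log_2 q.
\end{equation*}
The $\mathrm{GL}_T(\Fbb_q)$-invariance of the uniform law on $\mathrm{Gr}(T,m)$ together with the equivariance of the transition kernel makes $\Pi_Y$ marginally uniform on $\mathrm{Gr}(T,n)$, so $H(\Pi_Y\mid N=n)\doteq n(T-n)\log_2 q$ by Lemma~\ref{lem:gauss_Approximation}. Subtracting gives $I(\Pi_X;\Pi_Y\mid N=n)\doteq n(T-m)\log_2 q$, and averaging over $N$ produces $C_e \ge \mu_1(T-m)\log_2 q + o(1)$.

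The main technical obstacle is keeping careful track of asymptotic errors across the chain of approximations. Several low-probability events---the matrix $G$ being rank-deficient, the submatrix $G_S$ being rank-deficient, and $\dim \Pi_Y$ dropping below $n$---each contribute $O(q^{-1})$ corrections to the conditional mutual informations, and one must verify that these are all absorbed into the stated $o(1)$ terms without disturbing the leading $\log_2 q$ scaling. The symmetry step in the lower bound is conceptually clean but deserves explicit justification from the form of \eqref{eq:P2P_channel_transfer_prob_2} and the fact that $\psi(T,n,\pi_y)$ depends on $\pi_y$ only through $\dim\pi_y$.
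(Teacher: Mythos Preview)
Your proof is correct and closely parallels the paper's argument. The upper bound is handled identically: condition on $N$, use independence to get $I(X;N)=0$, pull the maximum inside the expectation, and apply Theorem~\ref{thm:Main_Result_Single_Src} pointwise with $i^*=n$.

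For the lower bound, both you and the paper evaluate the same input distribution (uniform on $\mathrm{Gr}(T,m)$, i.e., $\alpha_m=1$), but the computations diverge. You compute $I(\Pi_X;\Pi_Y\mid N=n)$ via the entropy decomposition $H(\Pi_Y)-H(\Pi_Y\mid\Pi_X)$ together with a $\mathrm{GL}_T(\Fbb_q)$-symmetry argument, obtaining $n(T-n)-n(m-n)=n(T-m)$ in the exponent. The paper instead substitutes $\alpha_m=1$ into the explicit mutual-information expression of Lemma~\ref{lem:P2P_MutualInfo_Subspace_Final} and isolates the dominant $d_y=\min[N,k]$ term, arriving at $\min[N,k](T-k)\log_2 q$ and then setting $k=m$. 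Your route is more self-contained and geometric, not relying on the machinery built for Theorem~\ref{thm:Main_Result_Single_Src_Dist}; the paper's route simply reuses that machinery. Both yield the same leading term.

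One small tightening: $\Pi_Y$ is not literally supported on $\mathrm{Gr}(T,n)$, since $\dim\Pi_Y$ may drop below $n$. The precise statement your symmetry argument gives is that, conditional on $\dim\Pi_Y=d$, the output is uniform on $\mathrm{Gr}(T,d)$, and $d=n$ with probability $1-O(q^{-1})$; you already flag this in your closing paragraph, so it just needs to be stated that way in the body of the argument.
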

For the proof of Theorem~\ref{thm:P2P_erasure} and more discussion 
refer to Appendix~\ref{sec:apndx2}.

{Note that because we do not necessarily employ full-rank matrices $X$,
it is possible that although some packets are erased at the destination,
the received packets still span a matrix of the same rank as  $X$; thus erasing packets is not equivalent to erasing dimensions.}

%--------------------------------------------------------------------
\subsection{Multiple Sources}

In several practical applications, such as sensor networks, data sources 
are not necessarily co-located. We thus extend our work to the case where 
multiple not co-located sources transmit information to a common receiver. 
In particular, we consider the non-coherent MAC  introduced in 
Definition~\ref{def:matrix_channel_MAC}, and characterize the capacity region 
of this network for the case of two sources with $m_1$ and $m_2$ input packets 
and packet length $T> 2(m_1+m_2)$. 
We believe that this technique can be extended to more than two sources.

To find the rate region of the matrix multiple access channel
$\mathrm{Ch}_{\textsl{m-MAC}}$, we first show that the two channels
$\mathrm{Ch}_{\textsl{m-MAC}}$ and $\mathrm{Ch}_{\textsl{s-MAC}}$ 
are equivalent, as stated in Theorem~\ref{thm:channel_equivalence_MAC}.
We then find the rate region of the subspace multiple access channel
$\mathrm{Ch}_{\textsl{s-MAC}}$ which is stated in Theorem~\ref{thm:Main_Result_Multpl_Src}.
To avoid repetition, we state Theorem~\ref{thm:channel_equivalence_MAC} 
without a proof because its proof is very similar to
that of Theorem~\ref{thm:channel_equivalence_P2P}.

\begin{theorem}\label{thm:channel_equivalence_MAC}
The matrix MAC  $\mathrm{Ch}_{\textsl{m-MAC}}$ defined in 
Definition~\ref{def:matrix_channel_MAC} is equivalent to 
the subspace MAC  $\mathrm{Ch}_{\textsl{s-MAC}}$ defined in 
Definition~\ref{def:subspace_channel_MAC} in the sense that the 
optimal rate region for these two channels is the same.
\end{theorem}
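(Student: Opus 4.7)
The plan is to mirror the single-source equivalence proof of Theorem~\ref{thm:channel_equivalence_P2P}, adapted to the product-of-marginals structure underlying the MAC rate region. Two structural properties of $\mathrm{Ch}_{\textsl{m-MAC}}$ drive the argument. First, since $Y=G_{\textsl{MAC}}X_{\textsl{MAC}}$ with $G_{\textsl{MAC}}$ uniform over $\Fbb_q^{n\times(m_1+m_2)}$, a row-by-row computation shows that conditional on $(X_1,X_2)$ the matrix $Y$ is uniform over the $q^{n\dim(\sspan{X_1}+\sspan{X_2})}$ matrices whose rows lie in $\sspan{X_1}+\sspan{X_2}$. In particular, the conditional distribution of $\Pi_Y=\sspan{Y}$ depends on the inputs only through the sum of row spans and matches \eqref{eq:MAC_channel_transfer_prob_2}. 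Second, partitioning this uniform law by the value of $\Pi_Y$ shows that, given $\Pi_Y$, $Y$ is uniform over matrices with that row span and is in particular independent of $(X_1,X_2)$, so $\Pi_Y$ captures all information $Y$ carries about the inputs.

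Building on these two facts, I would establish the key identity
\begin{equation*}
I(X_S;Y\mid X_{S^c})=I(\Pi_{X_S};\Pi_Y\mid\Pi_{X_{S^c}})
\end{equation*}
for every $S\subseteq\{1,2\}$ and every product distribution $P_{X_1}P_{X_2}$. The sufficiency property gives $I(X_1,X_2;Y\mid\Pi_Y)=0$, so a chain-rule step collapses $I(X_S;Y\mid X_{S^c})$ to $I(X_S;\Pi_Y\mid X_{S^c})$. The first structural fact then yields $H(\Pi_Y\mid X_S,X_{S^c})=H(\Pi_Y\mid\Pi_{X_S},\Pi_{X_{S^c}})$, and marginalising over $X_S$, using $P(X_S\mid X_{S^c})=P(X_S)$ by product form, gives $P(\Pi_Y\mid X_{S^c}=x_{S^c})=\sum_{\pi_S}P_{\Pi_{X_S}}(\pi_S)\,P(\Pi_Y\mid\pi_S,\sspan{x_{S^c}})$, which depends on $x_{S^c}$ only through $\sspan{x_{S^c}}$; hence $H(\Pi_Y\mid X_{S^c})=H(\Pi_Y\mid\Pi_{X_{S^c}})$, and the identity follows.

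With this identity in hand, both directions of the equivalence are immediate. In the reverse direction (matrix $\to$ subspace), any product distribution $P_{X_1}P_{X_2}$ on $\mathrm{Ch}_{\textsl{m-MAC}}$ induces product subspace marginals $P_{\Pi_{X_1}}P_{\Pi_{X_2}}$ that achieve the same rate tuple on $\mathrm{Ch}_{\textsl{s-MAC}}$. In the forward direction, any product distribution $P_{\Pi_{X_1}}P_{\Pi_{X_2}}$ lifts to the product distribution $P_{X_i}(x_i)=P_{\Pi_{X_i}}(\sspan{x_i})/\psi(m_i,\dim\sspan{x_i})$, whose induced subspace marginals coincide with the originals, so the identity equates the rate tuples once more. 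The main obstacle is the second paragraph --- specifically, verifying that the marginalisation step really does collapse the dependence on $X_{S^c}$ to dependence only on $\sspan{X_{S^c}}$; this is where the product structure is used essentially, and also precisely where the proof would fail for correlated inputs. Taking closures of the convex hulls on both sides then yields equality of the rate regions.
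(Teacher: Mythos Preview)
Your proposal is correct and follows precisely the route the paper intends: the paper explicitly omits the proof, stating only that it ``is very similar to that of Theorem~\ref{thm:channel_equivalence_P2P}.'' You have carried out that adaptation in full, and the one genuinely new ingredient you identify --- that under a product input law the conditional distribution $P(\Pi_Y\mid X_{S^c}=x_{S^c})$ depends on $x_{S^c}$ only through $\sspan{x_{S^c}}$, which is needed to collapse the conditioning in $I(X_S;\Pi_Y\mid X_{S^c})$ --- is exactly the extra step the MAC setting requires beyond the point-to-point argument.
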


\begin{theorem}\label{thm:Main_Result_Multpl_Src}
For $\frac{T}{2}> m_1+m_2$, the asymptotic (in the field size $q$) capacity region of the MAC  $\mathrm{Ch}_{\textsl{m-MAC}}$ 
introduced in Definition~\ref{def:matrix_channel_MAC} 
is given by
\begin{align}\nonumber
\mathcal{R}^* \triangleq \mathrm{convex~hull } \bigcup_{(d_1,d_2)\in\mathcal{D}^* } \mathcal{R}(d_1,d_2),
\end{align}
where
\begin{align}
\mathcal{R}(d_1,d_2)\triangleq\{(R_1,R_2): R_i\leq R_i(d_1,d_2),\ i=1,2\},
\label{eq:def:Ri}
\end{align} 
\begin{align*}
R_i(d_1,d_2)\triangleq d_i(T-d_1-d_2) \log_2 q, \qquad i=1,2,
\end{align*}
and
\begin{align}\nonumber
\mathcal{D}^*\triangleq\{(d_1,d_2): \;\; & 0\leq d_i \leq \min[n,m_i],
 0\leq d_1+d_2 \leq \min[n, m_1+m_2]\}.
\end{align}
\end{theorem}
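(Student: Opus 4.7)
The plan is to apply Theorem~\ref{thm:channel_equivalence_MAC} to translate the problem to the subspace MAC $\mathrm{Ch}_{\textsl{s-MAC}}$, whose capacity region is given by the standard MAC formula
\[
R_1 \le I(\Pi_{X_1};\Pi_Y\mid \Pi_{X_2}),\quad R_2 \le I(\Pi_{X_2};\Pi_Y\mid \Pi_{X_1}),\quad R_1+R_2 \le I(\Pi_{X_1},\Pi_{X_2};\Pi_Y),
\]
with closure of the convex hull taken over product distributions $P_{\Pi_{X_1}}P_{\Pi_{X_2}}$. The task is then to evaluate these three mutual informations asymptotically in $q$ and show that the extreme rate pairs arise from uniform distributions on single Grassmannians.

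For achievability of a corner $(R_1(d_1,d_2),R_2(d_1,d_2))$ with $(d_1,d_2)\in\mathcal{D}^*$, I take $\Pi_{X_i}$ uniform on $\mathrm{Gr}(T,d_i)$ independently. Two geometric facts drive the computation. First, because $d_1+d_2\le m_1+m_2<T/2$, two independent uniformly-random subspaces of these dimensions satisfy $\dim(\Pi_{X_1}+\Pi_{X_2})=d_1+d_2$ with probability $1-O(q^{-1})$, and $\Pi_{X_1}+\Pi_{X_2}$ conditioned on this event is uniform over $\mathrm{Gr}(T,d_1+d_2)$. Second, because $d_1+d_2\le n$ inside $\mathcal{D}^*$, the transition law \eqref{eq:MAC_channel_transfer_prob_2} gives $\Pi_Y=\Pi_{X_1}+\Pi_{X_2}$ with probability $1-O(q^{-1})$. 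Counting $(d_1+d_2)$-dimensional subspaces containing a fixed $d_2$-subspace via $\gaussnum{T-d_2}{d_1}\=q^{d_1(T-d_1-d_2)}$ (using Lemmas~\ref{lem:gauss_Approximation} and~\ref{lem:psi_Approximation}) yields $H(\Pi_Y\mid\Pi_{X_2})=d_1(T-d_1-d_2)\log_2 q + o(\log q)$, while $H(\Pi_Y\mid\Pi_{X_1},\Pi_{X_2})=o(\log q)$ since $\Pi_Y$ is a.s.\ determined up to an event of probability $O(q^{-1})$. Thus $I(\Pi_{X_1};\Pi_Y\mid\Pi_{X_2})=d_1(T-d_1-d_2)\log_2 q + o(\log q)$, symmetrically for $R_2$, and $I(\Pi_{X_1},\Pi_{X_2};\Pi_Y)=(d_1+d_2)(T-d_1-d_2)\log_2 q + o(\log q)=R_1(d_1,d_2)+R_2(d_1,d_2)$, so the sum-rate constraint is slack and the entire rectangle $\mathcal{R}(d_1,d_2)$ is achievable. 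Time-sharing produces the convex hull.

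For the converse I fix an arbitrary product input distribution and condition on the random dimensions $(D_1,D_2)=(\dim\Pi_{X_1},\dim\Pi_{X_2})$. Borrowing the single-source concentration argument from Theorem~\ref{thm:Main_Result_Single_Src_Dist}, I show that mass on dimension pairs outside $\mathcal{D}^*$ is asymptotically suboptimal: values of $d_i$ exceeding $\min[n,m_i]$, or pairs with $d_1+d_2>n$, do not contribute extra rate because $\dim\Pi_Y\le n$ caps the output entropy, and the hypothesis $T>2(m_1+m_2)$ places every admissible pair in the large-$T$ regime where the rate is monotone in each $d_i$ up to the $\mathcal{D}^*$ boundary. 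Given $(D_1,D_2)=(d_1,d_2)\in\mathcal{D}^*$, I upper bound $H(\Pi_Y\mid\Pi_{X_2})$ by $\log\gaussnum{T}{d_1+d_2}-\log\gaussnum{d_1+d_2}{d_2}\=d_1(T-d_1-d_2)\log_2 q$ via counting, matching the achievable value up to $o(\log q)$; the bound on $H(\Pi_Y)$ is analogous. A standard convex combination over dimension pairs then delivers the outer bound $\mathcal{R}^*$.

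The principal obstacle lies in the converse: ruling out dimension pairs outside $\mathcal{D}^*$ uniformly over all product distributions, and controlling the event $\dim(\Pi_{X_1}\cap\Pi_{X_2})>0$, whose probability is $O(q^{-1})$ but whose contribution to entropy could \emph{a priori} be $\Theta(\log q)$ per subspace. The hypothesis $T>2(m_1+m_2)$ is exactly what keeps us in the single-dimension regime analogous to Theorem~\ref{thm:Main_Result_Single_Src_Extnd}: it ensures that $T-d_1-d_2>d_1+d_2$ strictly for every $(d_1,d_2)\in\mathcal{D}^*$, so that the rate is monotonically increasing in each coordinate inside $\mathcal{D}^*$, the achievable region is a clean union of rectangles, and the sum-rate inequality remains non-binding.
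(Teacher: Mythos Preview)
Your achievability is fine and is essentially equivalent to the paper's, though the paper uses an explicit coding-vector construction (identity blocks in disjoint coordinate windows plus data blocks) rather than random Grassmannian inputs; both routes yield the same corner points.

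The converse, however, has a real gap. The paper does \emph{not} obtain $\mathcal{R}^*$ from a single entropy-counting argument; it proves two separate outer bounds and intersects them. The first is a \emph{cooperative} bound (Proposition~\ref{pre:OuterBoundCoop}): let the two sources merge into one super-source with $m_1+m_2$ packets and apply Theorem~\ref{thm:Main_Result_Single_Src}, giving $R_1+R_2\le k(T-k)\log_2 q$ with $k=\min[m_1+m_2,n]$. The second is a combinatorial \emph{coloring} bound (Theorem~\ref{thm:OuterBoundColering}, Proposition~\ref{pre:OuterBoundColering}): since the channel law depends only on $\pi_1+\pi_2$, one counts, for each row $\pi_1$ of a distinguishability table, how many distinct sums $\pi_1+\pi_2$ can arise, yielding $R_i\aleq \delta_i(T-\delta_1-\delta_2)\log_2 q$ for dominating dimensions $\delta_i\le m_i$ \emph{without} any constraint $\delta_1+\delta_2\le n$. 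This gives a region $\mathcal{R}_{\mathrm{col}}$ strictly larger than $\mathcal{R}^*$; only after intersecting with $\mathcal{R}_{\mathrm{coop}}$ and invoking Lemmas~\ref{lem:corner_col}--\ref{lem:intersect_col_coop} does one recover $\mathcal{R}^*$.

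Your sentence ``pairs with $d_1+d_2>n$ do not contribute extra rate because $\dim\Pi_Y\le n$ caps the output entropy'' is precisely where this structure is being swept under the rug. Capping $H(\Pi_Y)$ by $n(T-n)\log_2 q$ bounds the \emph{sum} rate, not the individual $R_i$; meanwhile, when $d_1+d_2>n$ the term $H(\Pi_Y\mid\Pi_{X_1},\Pi_{X_2})$ is no longer $o(\log q)$ but of order $n(d_1+d_2-n)\log_2 q$, so your subtraction for $I(\Pi_{X_1};\Pi_Y\mid\Pi_{X_2})$ does not directly localize $(R_1,R_2)$ inside any $\mathcal{R}(d_1',d_2')$ with $(d_1',d_2')\in\mathcal{D}^*$. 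You need the cooperative bound as a separate ingredient, and then an argument (the paper's Theorem~\ref{thm:MAC-OB}) that the intersection of the two outer bounds actually equals $\mathcal{R}^*$. Also, a minor point: your claimed identity $\log\gaussnum{T}{d_1+d_2}-\log\gaussnum{d_1+d_2}{d_2}\= d_1(T-d_1-d_2)\log_2 q$ is off; the correct count for $H(\Pi_Y\mid\Pi_{X_2}=\pi_2)$ is $\log\gaussnum{T-d_2}{d_1}\=d_1(T-d_1-d_2)\log_2 q$.
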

We note that the rate region forms a polytopes that has the 
following number of corner points (see Corollary~\ref{cor:NumCornerPointsMAC} 
in  \S\ref{sec:ProofMultiSource})
\[
\min\left[m_1, (n-m_2)^+\right] + \min\left[m_2, (n-m_1)^+\right] + 2 - \1_{\{n\ge m_1+m_2\}}.
\]
The rate region $\mathcal{R}^*$ is shown in Fig.~\ref{fig:MAC_region} 
for a particular choice of parameters.

The proof of this theorem is provided in \S\ref{sec:ProofMultiSource}. 
We first derive an outer bound by deriving two other bounds: a cooperative bound and a 
coloring bound. For the coloring bound, we utilize a combinatorial 
approach to bound the number of \emph{distinguishable} symbol pairs 
that can be transmitted from the sources to the receiver.
 We then show that a simple scheme that uses coding vectors achieves 
the outer bound. We thus conclude that, for the case of two sources 
when  $\frac{T}{2}> m_1+m_2$, use of coding vectors is (asymptotically) 
optimal.

\begin{figure}
\begin{center}
	\includegraphics[width=4.5in]{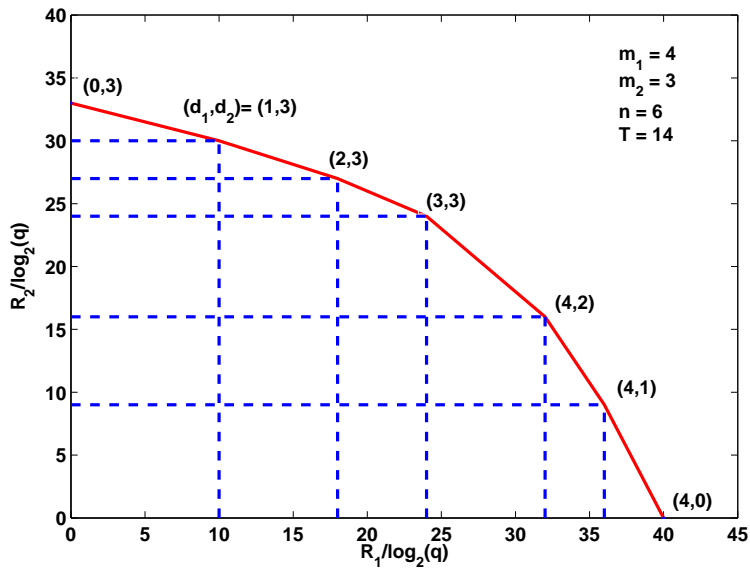}
\end{center}
\caption{The MAC region $\mathcal{R}^*$ for parameters $m_1=4$, $m_2=3$, $n=3$, $T=14$. }
\label{fig:MAC_region}
\end{figure}

%--------------------------------------------------------------------
% Section -----------------------------------------------------------
%\input{SingleSource}
\section{The Channel Capacity: Single Source Scenario}\label{sec:TheChanlCap-SnglSrc}
In this section we will prove Theorem~\ref{thm:Main_Result_Single_Src},
Theorem~\ref{thm:Main_Result_Single_Src_Dist}, and 
Theorem~\ref{thm:Main_Result_Single_Src_Extnd}. 

% SubSection --------------------------------------------------------
\subsection{Equivalence of the Matrix Channel $\mathrm{Ch}_{\textsl{m}}$ 
and the Subspace Channel $\mathrm{Ch}_{\textsl{s}}$}\label{subsec:Equiv-MatrixChannel-SubspaceChannel}%\label{subsec:SimplMutlInfo}
For convenience let us rewrite the channel \eqref{eq:channel_model_P2P} again\footnote{In 
the rest of the paper we will omit for convenience the time index $t$.}
\[
Y=GX.
\]
To find the capacity of the above channel we need to maximize the mutual 
information between the input and the output of the channel with respect 
to the input distribution $P_X(x)$. Since the rows of $G$ are chosen 
independently of each other, assuming that a matrix $X=x$ has been transmitted,  we can think of the 
rows of the received matrix $Y$ as chosen independently from each other, among all 
the possible vectors in the row span of $x$. The independence of rows of $Y$ allows us 
to write the conditional probability of $Y$ given $X$, referred to as the channel transition 
probability, as follows
\begin{equation}\label{eq:P2P_channel_transfer_prob_1}
P_{Y|X}(y|x)=\left\{ \begin{array}{ll} q^{-n\dim(\sspan{x})} &
  \sspan{y}\sqsubseteq \sspan{x},\\ 0 & \text{otherwise},
\end{array} \right.
\end{equation}
where $x\in\mathcal{X}=\Fbb_q^{m\times T}$, and $y\in\mathcal{Y}=\Fbb_q^{n\times T}$.

The mutual information $I(X;Y)$ between $X$ and $Y$ is a
function of $P_X(x)$ and $P_{Y|X}(y|x)$ that can be expressed as
\begin{equation}
I(X;Y)= \sum_{\begin{subarray}{l} x\in\mathcal{X},\\ y\in\mathcal{Y}\end{subarray}}  P_X(x) P_{Y|X}(y|x)\log_2\left(\frac{P_{Y|X}(y|x)}{P_Y(y)}\right).
\label{eq:I_1}
\end{equation}
It is clear from (\ref{eq:P2P_channel_transfer_prob_1}) that 
$P_{Y|X}(y|x_1)=P_{Y|X}(y|x_2)$ for all $x_1,x_2\in\mathcal{X}$ such 
that $\sspan{x_1}=\sspan{x_2}$ which reveals symmetry for the 
channel $\mathrm{Ch}_{\textsl{m}}$. We exploit this symmetry
to show that $C_{\textsl{m}}=C_{\textsl{s}}$ as it is stated in 
Theorem~\ref{thm:channel_equivalence_P2P} and proved in 
Appendix~\ref{sec:apndx1}.

The proof of Theorem~\ref{thm:channel_equivalence_P2P} determines
how we can map an input distribution of $\mathrm{Ch}_{\textsl{s}}$
to an input distribution for $\mathrm{Ch}_{\textsl{m}}$ that achieves the same mutual information.
The input distribution $P_X(x)$ should be chosen such that we have
$\sum_{x\in\Xca:\sspan{x}=\pi_x} P_X(x)=P_{\Pi_X}(\pi_x)$. One simple
way to do this is to put all the probability mass of $\pi_x$ on one
matrix $x$ such that $\sspan{x}=\pi_x$.

% SubSection --------------------------------------------------------
\subsection{Upper and Lower bound for the Capacity of $\mathrm{Ch}_{\textsl{m}}$}\label{subsec:Capacity-UpperLowerBound}
Here, we state the proof of Theorem~\ref{thm:Main_Result_Single_Src} 
by giving upper and lower bounds for the capacity that differ in
$o(1)$ bits, which vanishes  as $q\rightarrow\infty$.

Let $C_{\textsl{m}}(n,m,T)$ denote the capacity of the channel 
$\mathrm{Ch}_{\textsl{m}}$. Let $C_{\textsl{f-m}}(n,m,T)$ denote the
capacity of the channel $Y=AX$ where $A\in\Fbb_q^{n\times m}$ is a 
full-rank matrix chosen uniformly at random among all the full-rank matrices 
in $\Fbb_q^{n\times m}$. Then, we have the following lemma.

\begin{lemma}\label{lem:Cap_SingleSrc_UpperLower}
We can bound $C_{\textsl{m}}(n,m,T)$ from above and below as follows
\[
C_{\textsl{m}}(h,h,T) \le C_{\textsl{m}}(n,m,T) \le C_{\textsl{f-m}}(n,m,T) \le C_{\textsl{f-m}}(h,h,T),
\]
where $h=\min[m,n]$. 
\end{lemma}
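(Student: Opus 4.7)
The plan is to prove each of the three inequalities by a simulation argument combined with the data processing inequality. In each case I want to show that one channel can be reduced to the other by a randomization that is independent of the input, so that the mutual information transferred through the ``simpler'' channel cannot exceed that of the ``richer'' one.

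For the middle inequality, $C_{\textsl{m}}(n,m,T) \le C_{\textsl{f-m}}(n,m,T)$, I would split on the sign of $n-m$. If $n\geq m$, I would post-process the output of $\mathrm{Ch}_{\textsl{f-m}}$: given $Y'=AX$, draw $D$ uniformly from $\Fbb_q^{n\times n}$ and form $DY'=(DA)X$. Because $A$ is full rank with rank $m$, each row of $D A$ is uniform on $\Fbb_q^m$ and the rows are independent, so $DA$ is uniform on $\Fbb_q^{n\times m}$, matching the distribution of $G$ in $\mathrm{Ch}_{\textsl{m}}$. Data processing then gives $I(X;GX)=I(X;DY')\leq I(X;Y')$. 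If $n<m$, I would instead pre-process: draw $E$ uniformly from $\Fbb_q^{m\times m}$, feed $EX$ into $\mathrm{Ch}_{\textsl{f-m}}$, and note that $AE$ is uniform on $\Fbb_q^{n\times m}$ because $A$ has rank $n$, making the map $v\mapsto Av$ surjective with uniform fibres. The Markov chain $X\to EX\to A(EX)$ then yields the bound.

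For the rightmost inequality, $C_{\textsl{f-m}}(n,m,T) \le C_{\textsl{f-m}}(h,h,T)$ with $h=\min[m,n]$, I use the same template but with full-rank randomizers. When $h=n\leq m$, I pre-process the input $X\in\Fbb_q^{m\times T}$ by $X'=BX$ with $B$ uniform over full-rank matrices in $\Fbb_q^{n\times m}$ and feed $X'$ into $\mathrm{Ch}_{\textsl{f-m}}(n,n,T)$. A counting argument (pair up $(A',B)$ with $A'B=C$ for each full-rank target $C$) shows that $A'B$, with $A'$ uniform in $GL_n(\Fbb_q)$ and $B$ uniform over full-rank $n\times m$ matrices, is uniform over full-rank $n\times m$ matrices; combined with the Markov chain $X\to X'\to A'X'$, this produces the bound. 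The case $h=m\leq n$ is symmetric: post-process a $\mathrm{Ch}_{\textsl{f-m}}(m,m,T)$ output by left-multiplication with a uniform full-rank $n\times m$ matrix, using the same product-of-full-rank-is-full-rank uniformity.

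For the leftmost inequality, $C_{\textsl{m}}(h,h,T)\le C_{\textsl{m}}(n,m,T)$, I would simply exhibit a restricted coding scheme. If $h=m\leq n$, any input distribution $P_X$ that is capacity-achieving for $\mathrm{Ch}_{\textsl{m}}(m,m,T)$ is also a valid input for $\mathrm{Ch}_{\textsl{m}}(n,m,T)$, and keeping only the top $m$ rows of $Y=GX$ produces the channel $\mathrm{Ch}_{\textsl{m}}(m,m,T)$ exactly, because the rows of $G$ are i.i.d.\ uniform. Data processing gives $I(X;Y)\geq I(X;Y_{[1:m]})$. If $h=n\leq m$, I pad any $\mathrm{Ch}_{\textsl{m}}(n,n,T)$ input $X_0\in\Fbb_q^{n\times T}$ with $m-n$ zero rows to form $X\in\Fbb_q^{m\times T}$; then $GX$ depends only on the first $n$ columns of $G$, which are i.i.d.\ uniform in $\Fbb_q^{n\times n}$, recovering $\mathrm{Ch}_{\textsl{m}}(n,n,T)$.

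The conceptual work is all concentrated in the ``product of random matrices is still uniform (over the appropriate set)'' lemmas used in the middle and right inequalities; I expect that to be the main, though routine, obstacle, because the case analysis $n\lessgtr m$ is forced by whether the randomizer must live on the input side or the output side. Once those distributional identities are in hand, the three bounds follow cleanly from data processing and, for the leftmost, from the fact that a sub-channel cannot outperform the original.
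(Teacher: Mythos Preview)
Your proposal is correct and follows essentially the same approach as the paper: each inequality is established by a simulation/decomposition argument together with the data processing inequality. The only cosmetic difference is that for the rightmost inequality the paper writes a single identity $A_{n\times m}\stackrel{d}{=}A_{n\times n}A_{n\times m}A_{m\times m}$ (all factors independent and uniform over full-rank matrices), which handles both cases $n\ge m$ and $n\le m$ at once, whereas you split into cases and apply a one-sided randomizer; the underlying distributional facts and the use of data processing are the same.
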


\begin{proof}
Let $U_{n\times m}\in\Fbb_q^{n\times m}$ denote a generic 
random matrix chosen uniformly at random and independently from any other variable. Similarly, let 
$A_{n\times m}\in\Fbb_q^{n\times m}$ denote a generic \emph{full-rank} matrix 
chosen uniformly at random among all such full-rank matrices and 
independent from any other variable. (Note that each new instance of 
such a matrix in the same equation denotes a different random variable
which is independent from the other random variables.)

Since the channel $Y=A_{n\times m}X$ is statistically equivalent to the
channel $Y=A_{n\times n}A_{n\times m}A_{m\times m}X$, we have, by the data 
processing inequality, that $C_{\textsl{f-m}}(n,m,T) \le C_{\textsl{f-m}}(h,h,T)$.

Using the same argument, since the channel $Y=U_{n\times m}X$ is equivalent 
to the channel $Y=U_{n\times n}A_{n\times m}X$ if $n\ge m$, and is equivalent 
to the channel $Y=A_{n\times m}U_{m\times m}X$ if $n\le m$ we have
$C_{\textsl{m}}(n,m,T) \le C_{\textsl{f-m}}(n,m,T)$.

To obtain the lower bound we proceed as follows. Let us choose 
$X=\left[\begin{smallmatrix} I_h\\ \\ 0\end{smallmatrix}\right]\overline{X}$
and $\overline{Y}=\left[I_h\ \ 0\right] Y$, where $Y=U_{n\times m}X$. Then we can write
\[
\overline{Y}=\left[I_h\ \ 0\right] U_{n\times m} {I_h \brack 0}\overline{X} = U_{h\times h}\overline{X},
\]
where $U_{h\times h}$ is the  upper left $h\times h$ sub-matirx of $U_{n\times m}$.
Thus, again the data processing inequality implies that  $C_{\textsl{m}}(h,h,T) \le C_{\textsl{m}}(n,m,T)$.
\end{proof}

\begin{lemma}\label{lem:Cap_SingleSrc_Upper}
For $C_{\textsl{m}}(n,m,T)$ we have
\[
C_{\textsl{m}}(n,m,T) \le i^*(T-i^*) \log_2{q} + o(1),
\]
where $i^*=\min[m,n,\lfloor T/2\rfloor]$.
\end{lemma}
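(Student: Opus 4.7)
The plan is to apply Lemma~\ref{lem:Cap_SingleSrc_UpperLower} to reduce the problem to the square full-rank channel, identify its capacity via a subspace-entropy argument, and then estimate a sum of Gaussian coefficients using Lemma~\ref{lem:gauss_Approximation}. Concretely, Lemma~\ref{lem:Cap_SingleSrc_UpperLower} gives $C_{\textsl{m}}(n,m,T)\le C_{\textsl{f-m}}(h,h,T)$ with $h=\min[m,n]$, so it suffices to control the right-hand side.

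For the channel $Y=AX$ with $A$ uniform over invertible $h\times h$ matrices, the invertibility of $A$ forces $\sspan{Y}=\sspan{X}\triangleq\Pi_X$, so the row span is a common function of input and output. For any $x$ with $\sspan{x}=\pi$, the map $A\mapsto Ax$ sends $GL_h$ onto $\{y:\sspan{y}=\pi\}$ with equal-sized fibers (cosets of the stabilizer of $x$). Hence, conditionally on $\Pi_X=\pi$, the output $Y$ is uniform on $\{y:\sspan{y}=\pi\}$ and therefore independent of $X$ given $\Pi_X$. This gives $I(X;Y)=I(X;\Pi_X)=H(\Pi_X)$ (the second equality because $\Pi_X$ is a function of $X$), and hence
\[
C_{\textsl{f-m}}(h,h,T)=\max_{P_X}H(\Pi_X)\le\log_2|\mathrm{Sp}(T,h)|,
\]
with equality achieved by any input whose span is uniform on $\mathrm{Sp}(T,h)$.

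It remains to estimate $|\mathrm{Sp}(T,h)|=\sum_{d=0}^{\min[h,T]}\gaussnum{T}{d}$. By Lemma~\ref{lem:gauss_Approximation} each summand equals $q^{d(T-d)}(1+O(q^{-1}))$, and since $d(T-d)$ is concave in $d$ and maximized at $d=T/2$, its integer maximum on $\{0,\ldots,\min[h,T]\}$ is attained at $d=i^*=\min[h,\lfloor T/2\rfloor]$ with value $i^*(T-i^*)$. Every other summand is smaller by at least a factor of $q^{-1}$, so the whole sum is $q^{i^*(T-i^*)}(1+O(q^{-1}))$, and taking $\log_2$ yields the claimed $i^*(T-i^*)\log_2 q+o(1)$ bound.

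The main obstacle I anticipate is the bookkeeping in the last step when $T$ is odd and $h\ge\lceil T/2\rceil$: there $d(T-d)$ is maximized at both $d=\lfloor T/2\rfloor$ and $d=\lceil T/2\rceil$, so one must track the two tied dominant terms carefully to ensure the residual is absorbed into the desired $o(1)$ error and not merely $O(1)$. The reduction step and the identity $I(X;Y)=H(\Pi_X)$ are clean once one recognizes the transitive action of $GL_h$ on bases of a given subspace.
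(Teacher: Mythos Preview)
Your proposal is correct and follows the same three-step route as the paper: reduce via Lemma~\ref{lem:Cap_SingleSrc_UpperLower} to $C_{\textsl{f-m}}(h,h,T)$, identify this as $\log_2\sum_{i=0}^{h}\gaussnum{T}{i}$, and then estimate with Lemma~\ref{lem:gauss_Approximation}. The only difference is that you supply a self-contained derivation of the middle identity via the transitive $GL_h$-action on $\{y:\sspan{y}=\pi\}$, whereas the paper simply cites \cite[Corollary~2]{SiKschKo-CapRandNetCod}; your flagged odd-$T$ bookkeeping concern (two tied dominant Gaussian coefficients) applies equally to the paper's own step~$(b)$.
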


\begin{proof}
By Lemma~\ref{lem:Cap_SingleSrc_UpperLower} we have 
\begin{align*}
C_{\textsl{m}}(n,m,T) &\le C_{\textsl{f-m}}(h,h,T) \\
&\stackrel{(a)}{=} \log_2\left(\sum_{i=0}^h \gaussnum{T}{i} \right)\\
&\stackrel{(b)}{=} i^*(T-i^*) \log_2{q} +o(1),  
\end{align*}
where $(a)$ follows from \cite[Corollary~2]{SiKschKo-CapRandNetCod} and 
$(b)$ follows from Lemma~\ref{lem:gauss_Approximation}.
\end{proof}

\begin{lemma}\label{lem:Cap_SingleSrc_Lower}
For $C_{\textsl{m}}(n,m,T)$ we have
\[
C_{\textsl{m}}(n,m,T) \ge i^*(T-i^*) \log_2{q} - o(1),
\]
where $i^*=\min[m,n,\lfloor T/2\rfloor]$.
\end{lemma}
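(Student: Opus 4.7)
The plan is to lower-bound $C_{\textsl{m}}$ by exhibiting an explicit input distribution and invoking Theorem~\ref{thm:channel_equivalence_P2P} so as to work in the equivalent subspace channel $\mathrm{Ch}_{\textsl{s}}$. The natural candidate I will use is the uniform distribution on the Grassmannian $\mathrm{Gr}(T,i^*)$; this is a valid input to $\mathrm{Ch}_{\textsl{s}}$ because $i^* \le m$ and $i^* \le T$ imply $\mathrm{Gr}(T,i^*) \subseteq \mathrm{Sp}(T,m) = \wt{\Xca}$. Under this choice, Lemma~\ref{lem:gauss_Approximation} immediately gives
\[
H(\Pi_X) = \log_2\gaussnum{T}{i^*} = i^*(T-i^*)\log_2 q + O(q^{-1}),
\]
so the remaining task is to prove that $H(\Pi_X \mid \Pi_Y) = o(1)$.

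The crux of the argument will be that, since $i^* \le n$, an input of dimension $i^*$ is transmitted faithfully (i.e., $\Pi_Y = \Pi_X$) with probability $1 - O(q^{-1})$. Indeed, from the transition law \eqref{eq:P2P_channel_transfer_prob_2} and Lemma~\ref{lem:psi_value}, for any $\pi_x \in \mathrm{Gr}(T,i^*)$ the only $\pi_y \sqsubseteq \pi_x$ with $\dim(\pi_y) = i^*$ is $\pi_y = \pi_x$ itself, which yields
\[
\Prob{\Pi_Y = \pi_x \mid \Pi_X = \pi_x} = \psi(n,i^*)\,q^{-ni^*} = \prod_{j=0}^{i^*-1}\bigl(1 - q^{j-n}\bigr) = 1 - O(q^{-1}).
\]
Let $E$ denote the event $\{\dim(\Pi_Y) = i^*\}$. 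On $E$, the constraints $\Pi_Y \sqsubseteq \Pi_X$ and $\dim(\Pi_X) = i^* = \dim(\Pi_Y)$ together force $\Pi_X = \Pi_Y$, so $\Pi_X$ is a deterministic function of $\Pi_Y$, contributing zero to the conditional entropy. Note that $E$ is itself a function of $\Pi_Y$, so no extra penalty for conditioning on $E$ is needed.

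On the complementary event $E^c$, which by the above has probability $O(q^{-1})$, I will use the crude bound $H(\Pi_X \mid \Pi_Y, E^c) \le \log_2\gaussnum{T}{i^*} = O(\log q)$, valid because $\Pi_X$ ranges over $\mathrm{Gr}(T,i^*)$. Combining the two cases,
\[
H(\Pi_X \mid \Pi_Y) = \Prob{E}\cdot 0 + \Prob{E^c}\cdot O(\log q) = O(q^{-1}\log q) = o(1),
\]
whence $I(\Pi_X;\Pi_Y) = H(\Pi_X) - H(\Pi_X \mid \Pi_Y) \ge i^*(T-i^*)\log_2 q - o(1)$. Theorem~\ref{thm:channel_equivalence_P2P} then transfers this mutual-information bound to $C_{\textsl{m}}(n,m,T)$, finishing the proof. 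There is no substantive obstacle: the entire argument reduces to the arithmetic fact that a random $n\times i^*$ linear map over $\Fbb_q$ is surjective with probability $1-O(q^{-1})$ whenever $i^*\le n$, which is precisely what makes $\mathrm{Ch}_{\textsl{s}}$ act as a near-noiseless identity on $\mathrm{Gr}(T,i^*)$.
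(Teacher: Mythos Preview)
Your proof is correct and follows essentially the same approach as the paper: both take $\Pi_X$ uniform on $\mathrm{Gr}(T,i^*)$, condition on the high-probability event $\{\dim(\Pi_Y)=i^*\}$ (equivalently $\Pi_Y=\Pi_X$), and use that this event has probability $1-O(q^{-1})$ since $i^*\le n$. The only cosmetic difference is that the paper first reduces to the square channel $C_{\textsl{m}}(h,h,T)$ via Lemma~\ref{lem:Cap_SingleSrc_UpperLower} and then lower-bounds $I(\Pi_X;\Pi_Y)\ge \Prob{Q=1}\,I(\Pi_X;\Pi_Y\mid Q=1)$, whereas you work directly with the $(n,m)$ channel and use the decomposition $I(\Pi_X;\Pi_Y)=H(\Pi_X)-H(\Pi_X\mid\Pi_Y)$; your route is slightly more direct but otherwise equivalent.
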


\begin{proof}
For every subspace $\Pi \in \mathrm{Gr}(T,i^*)$, let $\mathrm{RREF}(\Pi) \in \Fbb_q^{i^*\times T}$
be a matrix in reduced row echelon form such that $\Pi=\sspan{\mathrm{RREF}(\Pi)}$. 
Choose $X=\left[\begin{smallmatrix} I_{i^*} \\ \\ 0 \end{smallmatrix}\right]\times \mathrm{RREF}(\Pi_X)\in\Fbb_q^{m\times T}$,
where $\Pi_X$ is chosen uniformly at random from $\mathrm{Gr}(T,i^*)$. 
Define the random variable $Q = \1_{\{\mathrm{rank}(Y)=i^*\}}$. Note that
$\Pi_Y=\Pi_X$ when $Q = 1$. Thus, we have $H(\Pi_Y|\Pi_X,Q = 1) = 0$ and 
$H(\Pi_Y|Q = 1) = H(\Pi_X) = \log_2\gaussnum{T}{i^*}\ge i^*(T-i^*)\log_2{q}$.
Then, it follows that
\begin{align*}
C_{\textsl{m}}(n,m,T) &\stackrel{(a)}{\ge} 
C_{\textsl{m}}(h,h,T) \\
&\stackrel{(b)}{\ge} I(\Pi_X;\Pi_Y)\\
&\stackrel{(c)}{=} I(\Pi_X;\Pi_Y,Q)\\
&= I(\Pi_X;Q) + I(\Pi_X;\Pi_Y|Q)\\
&\ge \Prob{Q=1}I(\Pi_X;\Pi_Y|Q=1)\\
&\ge \Prob{Q=1}i^*(T-i^*)\log_2{q},
\end{align*}
where $(a)$ is due to Lemma~\ref{lem:Cap_SingleSrc_UpperLower}, $(b)$ follows follows from Theorem~\ref{thm:channel_equivalence_P2P}, and $(c)$ 
holds since $Q$ is a deterministic function of $\Pi_Y$.
Now, note that we can write
\begin{align*}
\Prob{Q=1} &= \Prob{\mathrm{rank}(U_{h\times h}X)=i^*}\\
&= \Prob{\mathrm{rank}\left(U_{h\times h} \left[\begin{smallmatrix} I_{i^*} \\ \\ 0 \end{smallmatrix}\right] \right)=i^*}\\
&= \Prob{\mathrm{rank}(U_{h\times i^*})=i^*}\\
&\ge 1-\frac{i^*}{q^{k-i^*+1}}\\
&\ge 1-\frac{i^*}{q},
\end{align*}
and thus we obtain the desired result.
\end{proof}

Combining Lemma~\ref{lem:Cap_SingleSrc_Upper} and Lemma~\ref{lem:Cap_SingleSrc_Lower}
recovers Theorem~\ref{thm:Main_Result_Single_Src}.

% SubSection --------------------------------------------------------
\subsection{The Optimal Solution: General Approach}\label{subsec:OptSolution_SingleSrc_GenApproach}
Generally, we are interested in finding the capacity and input distribution of
$\mathrm{Ch}_{\textsl{m}}$ exactly. It is shown in Theorem~\ref{thm:channel_equivalence_P2P}
that instead of the channel $\mathrm{Ch}_{\textsl{m}}$ we can focus on the
channel $\mathrm{Ch}_{\textsl{s}}$. Thus, we are interested in optimizing 
the following quantity
\begin{equation}\label{eq:mutual_information_s_1}
I(\Pi_X;\Pi_Y)= \sum_{\begin{subarray}{l} \pi_x\in\wt{\mathcal{X}},\\ \pi_y\in\wt{\mathcal{Y}}\end{subarray}}  P_{\Pi_X}(\pi_x) P_{\Pi_Y|\Pi_X}(\pi_y|\pi_x)\log_2\left(\frac{P_{\Pi_Y|\Pi_X}(\pi_y|\pi_x)}{P_{\Pi_Y}(\pi_y)}\right).
\end{equation}
Remember that $\wt{\Xca}=\mathrm{Sp}(T,m)$ and $\wt{\Yca}=\mathrm{Sp}(T,n)$.

The following lemma states that the optimal solution for the channel 
$\mathrm{Ch}_{\textsl{s}}$ should be uniform over all subspaces with the 
same dimension, as it is intuitively expected from the symmetry of the channel.
\begin{lemma}\label{lem:uniform_dist}
The input distribution that maximizes $I(\Pi_X;\Pi_Y)$ for $\mathrm{Ch}_{\textsl{s}}$ 
is the one which is uniform over all subspaces having the same dimension.
\end{lemma}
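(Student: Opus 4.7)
The plan is to exploit the symmetry of $\mathrm{Ch}_{\textsl{s}}$ under the action of the general linear group $\mathrm{GL}_T(\mathbb{F}_q)$ on $\mathrm{Sp}(T,m)$ and $\mathrm{Sp}(T,n)$, and combine it with the standard concavity of mutual information in the input distribution. First, I would verify that for every $\phi \in \mathrm{GL}_T(\mathbb{F}_q)$ and every pair of subspaces $\pi_x, \pi_y$,
\begin{equation*}
P_{\Pi_Y|\Pi_X}\bigl(\phi(\pi_y) \bigm| \phi(\pi_x)\bigr) \;=\; P_{\Pi_Y|\Pi_X}(\pi_y \mid \pi_x).
\end{equation*}
This holds because $\phi$ preserves both dimension and containment, so $\phi(\pi_y)\sqsubseteq\phi(\pi_x)$ iff $\pi_y\sqsubseteq\pi_x$, while $\psi(T,n,\phi(\pi_y))=\psi(T,n,\pi_y)$ by Lemma~\ref{lem:psi_value} (since $\psi$ depends on $\pi_y$ only through its dimension).

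Next, fix any input distribution $P_{\Pi_X}$ and, for each $\phi\in\mathrm{GL}_T(\mathbb{F}_q)$, define the relabeled distribution $P^{\phi}_{\Pi_X}(\pi) := P_{\Pi_X}(\phi^{-1}(\pi))$. The channel symmetry above means that the joint distribution of $(\phi^{-1}(\Pi_X),\phi^{-1}(\Pi_Y))$ under input $P^{\phi}_{\Pi_X}$ coincides with that of $(\Pi_X,\Pi_Y)$ under input $P_{\Pi_X}$, hence $I(P^{\phi}_{\Pi_X}) = I(P_{\Pi_X})$ for every $\phi$. Now form the symmetrized distribution
\begin{equation*}
\bar{P}_{\Pi_X}(\pi) \;:=\; \frac{1}{|\mathrm{GL}_T(\mathbb{F}_q)|}\sum_{\phi\in\mathrm{GL}_T(\mathbb{F}_q)} P^{\phi}_{\Pi_X}(\pi).
\end{equation*}
Since $\mathrm{GL}_T(\mathbb{F}_q)$ acts transitively on subspaces of a given dimension (the orbits are precisely the Grassmannians $\mathrm{Gr}(T,d)$ for $0\le d\le m$), the distribution $\bar{P}_{\Pi_X}$ is constant on each $\mathrm{Gr}(T,d)$, i.e., uniform among subspaces of equal dimension.

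Finally, invoke the standard fact that for a fixed channel $P_{\Pi_Y|\Pi_X}$, the mutual information $I(P)$ is a concave function of the input distribution $P$. Concavity together with the invariance $I(P^{\phi}_{\Pi_X})=I(P_{\Pi_X})$ yields
\begin{equation*}
I(\bar{P}_{\Pi_X}) \;\ge\; \frac{1}{|\mathrm{GL}_T(\mathbb{F}_q)|}\sum_{\phi} I(P^{\phi}_{\Pi_X}) \;=\; I(P_{\Pi_X}),
\end{equation*}
so any input distribution can be replaced by its symmetrization without decreasing the mutual information. Hence the capacity is attained by an input distribution that is uniform on each $\mathrm{Gr}(T,d)$, which is exactly the claim. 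The only mildly delicate step is confirming the $\mathrm{GL}_T$-invariance of the transition kernel and the orbit description; the rest is a textbook concavity-plus-averaging argument and should go through without obstruction.
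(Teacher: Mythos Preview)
Your argument is correct and follows the same concavity-plus-averaging template as the paper: start from an arbitrary input law, average it over a group of symmetries that fixes the channel, and use concavity of $I$ in the input to conclude that the averaged (hence dimension-uniform) law does at least as well. The difference is in the symmetry group. The paper averages, for each fixed $d$, over \emph{all} permutations of $\mathrm{Gr}(T,d)$ acting on the input alone, and asserts in step~(b) that every such permutation leaves $I(\Pi_X;\Pi_Y)$ unchanged. You instead average over $\mathrm{GL}_T(\Fbb_q)$ acting simultaneously on inputs and outputs, and you verify explicitly that $P_{\Pi_Y|\Pi_X}(\phi(\pi_y)\mid\phi(\pi_x))=P_{\Pi_Y|\Pi_X}(\pi_y\mid\pi_x)$ via preservation of containment and the dimension-only dependence of $\psi$. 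Your formulation is the tighter one: an arbitrary permutation of $\mathrm{Gr}(T,d)$ need not extend to a bijection of $\wt\Yca$ that commutes with the kernel, so the paper's step~(b) implicitly relies on exactly the $\mathrm{GL}_T$-invariance you make explicit. Since $\mathrm{GL}_T(\Fbb_q)$ is already transitive on each $\mathrm{Gr}(T,d)$, nothing is lost by restricting to it, and the conclusion is the same.
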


Lemma \ref{lem:uniform_dist} shows that the optimal input distribution can be expressed as
\begin{equation}\label{eq:optimal_dist_form}
\Prob{\Pi_X=\pi_x} = \frac{\alpha_{d_x}}{\gaussnum{T}{d_x}},
\end{equation}
where $d_x=\dim(\pi_x)$, $\alpha_{d_x}=\Prob{\dim(\Pi_X)=d_x}$, and we have 
$\sum_{d_x=0}^{\min[m,T]} \alpha_{d_x}=1$.
We can then simplify $I(\Pi_X;\Pi_Y)$ as stated in the following lemma.
\begin{lemma}\label{lem:P2P_MutualInfo_Subspace_Final}
Assuming an optimal input probability distribution of the form in (\ref{eq:optimal_dist_form}), 
the mutual information $I(\Pi_X;\Pi_Y)$ can be simplified to
\begin{align}\label{eq:P2P_MutualInfo_Subspace_Final}
I(\Pi_X;\Pi_Y) =& -\sum_{d_x=0}^{\min[m,T]} \alpha_{d_x} nd_x  \log_2{q} \nonumber\\
& - \sum_{d_x=0}^{\min[m,T]} \alpha_{d_x} q^{-nd_x} \sum_{d_y=0}^{\min[n,d_x]} \psi(n,d_y)\gaussnum{d_x}{d_y} \log_2(f(d_y)),
\end{align}
where
\begin{equation}\label{eq:f(d_y)}
f(d_y)\triangleq\frac{P_{\Pi_Y}(\pi_y)}{\psi(n,d_y)}=\frac{1}{\gaussnum{T}{d_y}} \sum_{d_x=d_y}^{\min[m,T]} \gaussnum{d_x}{d_y} q^{-nd_x} \alpha_{d_x}.
\end{equation}
\end{lemma}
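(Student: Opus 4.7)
The plan is to evaluate the mutual-information sum (\ref{eq:mutual_information_s_1}) directly, using the channel transition law (\ref{eq:P2P_channel_transfer_prob_2}) together with the symmetric input distribution (\ref{eq:optimal_dist_form}) supplied by Lemma~\ref{lem:uniform_dist}. The first step is to compute the output marginal $P_{\Pi_Y}(\pi_y)$ for a fixed $\pi_y$ with $d_y = \dim(\pi_y)$:
\[
P_{\Pi_Y}(\pi_y) = \sum_{\pi_x \sqsupseteq \pi_y} P_{\Pi_X}(\pi_x)\, P_{\Pi_Y|\Pi_X}(\pi_y|\pi_x) = \sum_{d_x \ge d_y} \frac{\alpha_{d_x}}{\gaussnum{T}{d_x}} \cdot \gaussnum{T-d_y}{d_x - d_y} \cdot \psi(n,d_y)\, q^{-n d_x},
\]
where $\gaussnum{T-d_y}{d_x - d_y}$ counts the $d_x$-dimensional subspaces of $\Fbb_q^T$ containing a fixed $d_y$-dimensional subspace (a standard quotient-space argument). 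Applying the Gaussian double-counting identity $\gaussnum{T}{d_x}\gaussnum{d_x}{d_y} = \gaussnum{T}{d_y}\gaussnum{T-d_y}{d_x - d_y}$, this collapses to $P_{\Pi_Y}(\pi_y) = \psi(n,d_y)\, f(d_y)$ with $f(d_y)$ exactly as in (\ref{eq:f(d_y)}). In particular, $P_{\Pi_Y}(\pi_y)$ depends on $\pi_y$ only through its dimension, which is the key structural fact that makes all subsequent simplification possible.

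Next, I would substitute into (\ref{eq:mutual_information_s_1}). Since $P_{\Pi_Y|\Pi_X}(\pi_y|\pi_x) = \psi(n,d_y)\, q^{-n d_x}$ for $\pi_y \sqsubseteq \pi_x$, the ratio inside the logarithm simplifies to
\[
\log_2 \frac{\psi(n,d_y)\, q^{-n d_x}}{\psi(n,d_y)\, f(d_y)} = -\, n d_x \log_2 q \;-\; \log_2 f(d_y),
\]
so the $\psi(n,d_y)$ factors cancel and the mutual information splits into two pieces. In the piece with $-n d_x \log_2 q$, the inner $\pi_y$-sum contributes $\sum_{\pi_y \sqsubseteq \pi_x} P_{\Pi_Y|\Pi_X}(\pi_y|\pi_x) = 1$ (which is precisely Lemma~\ref{lem:psi_Recursive} applied to $n \times d_x$ matrices), leaving $-\log_2 q \sum_{d_x} \alpha_{d_x}\, n d_x$: this is the first term of (\ref{eq:P2P_MutualInfo_Subspace_Final}). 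In the piece with $-\log_2 f(d_y)$, fix $\pi_x$ with $\dim(\pi_x) = d_x$; since $\log_2 f(d_y)$ depends only on $d_y$, regrouping the inner sum by $d_y$ yields $\sum_{d_y=0}^{\min[n,d_x]} \gaussnum{d_x}{d_y}\, \psi(n,d_y)\, \log_2 f(d_y)$. Averaging this (which depends only on $d_x$) against $P_{\Pi_X}(\pi_x) = \alpha_{d_x}/\gaussnum{T}{d_x}$ and multiplying by the $q^{-n d_x}$ that came out of $P_{\Pi_Y|\Pi_X}$ produces exactly the second term of (\ref{eq:P2P_MutualInfo_Subspace_Final}).

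The main obstacle is purely one of careful bookkeeping: keeping straight which sums run over subspaces versus dimensions, using the correct direction of the Gaussian identity to identify $P_{\Pi_Y}(\pi_y)$ with $\psi(n,d_y)\, f(d_y)$, and tracking the summation ranges after regrouping (noting in particular that $\psi(n,d_y) = 0$ for $d_y > n$, which truncates the inner sum at $\min[n, d_x]$). No new tools beyond the enumeration formulas already packaged in Lemmas~\ref{lem:psi_value}--\ref{lem:psi_Recursive} are required.
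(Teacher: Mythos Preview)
Your proposal is correct and follows essentially the same route as the paper's proof: compute $P_{\Pi_Y}(\pi_y)$ by counting the $d_x$-dimensional superspaces of a fixed $\pi_y$ (the paper packages this as Lemma~\ref{lem:num_pi_contain_pi0}), simplify via the Gaussian identity $\gaussnum{T}{d_x}\gaussnum{d_x}{d_y}=\gaussnum{T}{d_y}\gaussnum{T-d_y}{d_x-d_y}$ (the paper's Lemma~\ref{lem:GaussianNum_Relation1}), then split the logarithm and invoke Lemma~\ref{lem:psi_Recursive} to collapse the first piece. The only cosmetic difference is that you justify the normalization $\sum_{\pi_y\sqsubseteq\pi_x}P_{\Pi_Y|\Pi_X}(\pi_y|\pi_x)=1$ directly via Lemma~\ref{lem:psi_Recursive}, whereas the paper applies that lemma after having already grouped by dimension; the substance is identical.
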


Lemmas~\ref{lem:uniform_dist}~and~\ref{lem:P2P_MutualInfo_Subspace_Final} 
show that the problem of finding 
the optimal input distribution for the channel $\mathrm{Ch}_{\textsl{s}}$ 
is reduced to finding the optimal choice for $\alpha_i,\ i=0,\dots,\min[m,T]$. 
We know that the mutual information is a concave function with respect to 
$P_{\Pi_X}(\pi_x)$'s. Observation~\ref{obs:Lin_Trans_Preserve_Concavity} implies 
that because \eqref{eq:optimal_dist_form} is a linear transformation from  
$P_{\Pi_X}(\pi_x)$'s to $\alpha_i$'s, as a result the mutual information $I(\Pi_X;\Pi_Y)$ 
is also concave with respect to $\alpha_i$'s \cite{BoVa-ConvexOpt}.

\begin{observation}\label{obs:Lin_Trans_Preserve_Concavity}
Let $g(\mathbf{x})$ be a concave function and let $\mathbf{x}=h(\mathbf{z})$ be 
a linear transform from $\mathbf{z}$ to $\mathbf{x}$. Then $g(h(\mathbf{z}))$ 
is also a concave function.
\end{observation}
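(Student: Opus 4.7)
The plan is to verify the concavity of the composition $g\circ h$ directly from the definition, exploiting the fact that a linear (or, more generally, affine) map commutes with convex combinations. This is a standard result in convex analysis, and no deep machinery is needed.

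Concretely, fix arbitrary points $\mathbf{z}_1,\mathbf{z}_2$ in the domain and any $\lambda\in[0,1]$. Since $h$ is linear, we have
\begin{align*}
h(\lambda \mathbf{z}_1+(1-\lambda)\mathbf{z}_2) = \lambda\, h(\mathbf{z}_1)+(1-\lambda)\, h(\mathbf{z}_2).
\end{align*}
Let $\mathbf{x}_i = h(\mathbf{z}_i)$ for $i=1,2$. Applying $g$ to both sides and using the concavity of $g$ at the points $\mathbf{x}_1,\mathbf{x}_2$ yields
\begin{align*}
g\bigl(h(\lambda \mathbf{z}_1+(1-\lambda)\mathbf{z}_2)\bigr)
= g\bigl(\lambda \mathbf{x}_1+(1-\lambda)\mathbf{x}_2\bigr)
\ge \lambda\, g(\mathbf{x}_1)+(1-\lambda)\, g(\mathbf{x}_2).
\end{align*}
Rewriting the right-hand side in terms of $\mathbf{z}_1,\mathbf{z}_2$ gives the concavity of $g\circ h$. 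The only subtlety worth flagging is that the argument goes through for any affine $h$, not just strictly linear maps; this is relevant in our application, since the change of variables in \eqref{eq:optimal_dist_form} between the $P_{\Pi_X}(\pi_x)$'s and the $\alpha_i$'s is a linear (hence affine) relation that preserves the convex combinations used in the definition of concavity. There is no genuine obstacle here, and the result is invoked solely to legitimize treating the reduced optimization over $\alpha_0,\ldots,\alpha_{\min[m,T]}$ as a concave program in \S\ref{subsec:OptSolution_SingleSrc_Large_q}.
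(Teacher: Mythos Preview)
Your proof is correct and is the standard direct verification from the definition of concavity. The paper itself does not supply a proof for this observation; it simply states it and cites \cite{BoVa-ConvexOpt}, so your argument is exactly the elementary justification one would expect to find behind that citation.
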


Using Observation~\ref{obs:Lin_Trans_Preserve_Concavity}, we know that the mutual 
information is a concave function with respect to 
$\alpha_i$'s. This allows us to use the Kuhn-Tucker theorem \cite{BoVa-ConvexOpt} 
to solve the convex optimization problem. According to this theorem, the set 
of probabilities $\alpha_i^*$, $0\le i\le\min[m,T]$, maximize the mutual 
information if and only if there exists some constant $\lambda$ such that
\begin{equation}\label{eq:KuhnTuckerCond_I}
\left\{\begin{array}{ll} 
\left. \frac{\partial I(\Pi_X;\Pi_Y)}{\partial \alpha_k} \right|_{\boldsymbol{\alpha}^*}=\lambda & \forall k:\  \alpha^*_k>0,\\
\\
\left. \frac{\partial I(\Pi_X;\Pi_Y)}{\partial \alpha_k}\right|_{\boldsymbol{\alpha}^*} \le\lambda & \forall k:\  \alpha^*_k=0,
\end{array}\right.
\end{equation}
where $\sum_{i=0}^{\min[m,T]} \alpha^*_i=1$, $0\le k\le\min[m,T]$, and $\boldsymbol{\alpha}^*$ is the vector of the optimum input probabilities of choosing subspaces of certain dimension,
\[
\boldsymbol{\alpha}^* = \left[\begin{array}{ccc} \alpha^*_0 & \cdots & \alpha^*_{\min[m,T]} \end{array}\right]^\mathrm{T}.
\]

\begin{lemma}\label{lem:I_Derivative_Simplified}
By taking the partial derivative of the mutual information given in 
\eqref{eq:P2P_MutualInfo_Subspace_Final} with respect to $\alpha_k$, we have
\begin{equation}\label{eq:I_Derivative_Simplified}
I'_k \triangleq \frac{\partial I(\Pi_X;\Pi_Y)}{\partial \alpha_k} = -nk\log_2{q} - \sum_{d_y=0}^{\min[n,k]} \psi(n,d_y) \gaussnum{k}{d_y} q^{-nk} \log_2\left(f(d_y) \right) -\log_2{e}.
\end{equation}
\end{lemma}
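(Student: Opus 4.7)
The plan is to differentiate the expression for $I(\Pi_X;\Pi_Y)$ given in \eqref{eq:P2P_MutualInfo_Subspace_Final} term-by-term with respect to $\alpha_k$, keeping careful track of both the explicit occurrences of $\alpha_k$ (in the coefficients) and the implicit occurrences through $f(d_y)$ as defined in \eqref{eq:f(d_y)}.

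First, I would differentiate the linear term $-\sum_{d_x} \alpha_{d_x} n d_x \log_2 q$, which contributes exactly $-nk\log_2{q}$. Next, I would handle the second sum in \eqref{eq:P2P_MutualInfo_Subspace_Final}, splitting its derivative into two pieces. The \emph{explicit} piece comes from setting $d_x=k$ in the outer sum, producing
\[
-q^{-nk}\sum_{d_y=0}^{\min[n,k]} \psi(n,d_y)\gaussnum{k}{d_y}\log_2\bigl(f(d_y)\bigr),
\]
which already matches the middle term on the right-hand side of \eqref{eq:I_Derivative_Simplified}. The \emph{implicit} piece comes from $\partial f(d_y)/\partial \alpha_k$; from \eqref{eq:f(d_y)} one reads off that this derivative equals $\gaussnum{k}{d_y}q^{-nk}/\gaussnum{T}{d_y}$ when $d_y\le k$, and vanishes otherwise.

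The main (mild) obstacle is showing that the implicit piece collapses to the constant $-\log_2 e$. After swapping the order of summation in $d_x$ and $d_y$, the inner sum $\sum_{d_x=d_y}^{\min[m,T]} \alpha_{d_x} q^{-nd_x}\gaussnum{d_x}{d_y}$ is precisely $\gaussnum{T}{d_y} f(d_y)$, so the factor $f(d_y)$ in the denominator cancels. What remains is
\[
-\frac{q^{-nk}}{\ln 2}\sum_{d_y=0}^{\min[n,k]}\psi(n,d_y)\gaussnum{k}{d_y}.
\]
At this point I would invoke Lemma~\ref{lem:psi_Recursive} with the substitution $T\leftarrow k$, which gives $\sum_{d_y=0}^{\min[n,k]}\psi(n,d_y)\gaussnum{k}{d_y}=q^{nk}$, thereby collapsing the whole implicit contribution to $-1/\ln 2 = -\log_2 e$. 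Summing the three pieces recovers \eqref{eq:I_Derivative_Simplified}.
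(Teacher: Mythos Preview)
Your proposal is correct and follows essentially the same approach as the paper's own proof: both differentiate term-by-term, separate the explicit contribution at $d_x=k$ from the implicit contribution through $\partial f(d_y)/\partial\alpha_k$, swap the order of summation to recognize and cancel $f(d_y)$, and then invoke Lemma~\ref{lem:psi_Recursive} (with $T\leftarrow k$) to collapse the remaining sum to $-\log_2 e$.
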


Multiplying both sides of (\ref{eq:I_Derivative_Simplified}) by $\alpha_k$ and summing over $k$ we get
\begin{align*}
I - \log_2{e} = \sum_{k=0}^{\min[m,T]} \alpha_k I'_k.
\end{align*}
By choosing the optimal values $\alpha_k=\alpha^*_k$ for $0\le k\le\min[m,T]$, the RHS becomes $\lambda$, and the mutual information increases to $C_{\textsl{s}}$. So we may write
$\lambda = C_{\textsl{s}} - \log_2{e}$.

% SubSection --------------------------------------------------------
\subsection{Solution for Large Field Size}\label{subsec:OptSolution_SingleSrc_Large_q}
In this subsection, we focus on large size fields, $q\gg 1$. This 
assumption allows us to use some approximations to simplify the 
conditions in \eqref{eq:KuhnTuckerCond_I}. 
Assuming large $q$ we can rewrite \eqref{eq:I_Derivative_Simplified} as follows
\begin{align}\label{eq:I_Derivative_midl1}
I'_k = -nk\log_2{q} -\log_2{e} - \sum_{d_y=0}^{\min[n,k]} \left(1+O(q^{-1})\right) q^{-(n-d_y)(k-d_y)} \log_2\left(f(d_y) \right), 
\end{align}
where we have used Lemma~\ref{lem:gauss_Approximation} and 
Lemma~\ref{lem:psi_Approximation}. Using similar approximations, $\log_2 f(d_y)$ 
defined in \eqref{eq:f(d_y)} can be approximated as
\begin{align}\label{eq:f_Order_Approx}
\log_2\left(f(d_y)\right) =& -d_yT\log_2{q} + O(q^{-1}) + \log_2\left(\sum_{d_x=d_y}^{\min[m,T]} q^{-(n-d_y)d_x} \alpha_{d_x} \right). %\nonumber\\
\end{align}
Then we have the following result, Lemma~\ref{lem:DominTerm_DerivativeI}.

\begin{lemma}\label{lem:DominTerm_DerivativeI}
The dominating term in the summation in (\ref{eq:I_Derivative_midl1}) is the one obtained for $d_y=\min[n,k]$.
\end{lemma}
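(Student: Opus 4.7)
The plan is to compare the order in $q$ of each summand
$(1+O(q^{-1}))\, q^{-(n-d_y)(k-d_y)} \log_2 f(d_y)$ appearing in \eqref{eq:I_Derivative_midl1}, using the asymptotic expression \eqref{eq:f_Order_Approx} for $\log_2 f(d_y)$. The key observation is that the prefactor $q^{-(n-d_y)(k-d_y)}$ is strictly between $0$ and $1$, with $q^{-(n-d_y)(k-d_y)}=1$ exactly when $d_y=\min[n,k]$, and $q^{-(n-d_y)(k-d_y)}\le q^{-1}$ otherwise. Hence the sizes of the individual contributions separate cleanly as a function of $q$, and it remains only to check that $\log_2 f(d_y)$ itself does not grow fast enough to upset this separation.

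First I would establish that $|\log_2 f(d_y)|=O(\log q)$ uniformly for the relevant $d_y$. Starting from \eqref{eq:f_Order_Approx}, the term $-d_y T\log_2 q$ is manifestly $O(\log q)$, the $O(q^{-1})$ term is negligible, and the remaining piece $\log_2\!\bigl(\sum_{d_x\ge d_y} q^{-(n-d_y)d_x}\alpha_{d_x}\bigr)$ is bounded above by $0$ (since $\sum_{d_x}\alpha_{d_x}\le 1$) and below by $-(n-d_y)\min[m,T]\log_2 q+\log_2 \alpha_{\min}$, where $\alpha_{\min}$ is the smallest positive probability in the input distribution (which we may restrict to distributions for which $f(d_y)>0$ for every $d_y$ in the range $0\le d_y\le \min[n,k]$, so that the derivative expression is finite).

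Next I would split the sum at $d_y=\min[n,k]$. For this index, either $d_y=k$ (hence $k-d_y=0$) or $d_y=n$ (hence $n-d_y=0$), so the prefactor is exactly $1$ and the summand equals $(1+O(q^{-1}))\log_2 f(\min[n,k])$, which is generically $\Theta(\log q)$. For any $d_y<\min[n,k]$ one has $n-d_y\ge 1$ and $k-d_y\ge 1$, so the product $(n-d_y)(k-d_y)\ge 1$; combined with the $O(\log q)$ bound above, each such term is at most $O(q^{-1}\log q)$ in absolute value, which is $o(1)$ as $q\to\infty$.

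Comparing the two cases shows that the ratio of the $d_y=\min[n,k]$ contribution to any other $d_y$ contribution grows at least as $\Theta(q)$, so the $d_y=\min[n,k]$ summand is the dominating term, which is the assertion of the lemma. The only real subtlety in the argument is handling the logarithm $\log_2 f(d_y)$ when a particular probability mass vanishes; this is resolved by restricting to input distributions for which $f(d_y)$ is positive in the range of interest, a restriction that is harmless for the subsequent KKT analysis where the derivative \eqref{eq:I_Derivative_Simplified} is used.
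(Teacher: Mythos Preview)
Your argument has a genuine gap in the step where you claim $|\log_2 f(d_y)|=O(\log q)$ uniformly. Your lower bound on $\log_2\bigl(\sum_{d_x\ge d_y} q^{-(n-d_y)d_x}\alpha_{d_x}\bigr)$ involves $\log_2\alpha_{\min}$, and you then treat this as contributing only $O(\log q)$. But the lemma is needed at the \emph{optimal} input distribution $\boldsymbol{\alpha}^*$, which itself depends on $q$; nothing you have established rules out $\alpha^*_{\min}=2^{-\Omega(q)}$ for some active index, in which case $\log_2\alpha_{\min}=-\Omega(q)$ and your uniform $O(\log q)$ bound on $|\log_2 f(d_y)|$ collapses. (Indeed, the eventual solution has $\alpha^*_i$ polynomial in $q$, but that is precisely what the downstream analysis is trying to establish, so you cannot assume it here.) Your restriction to distributions with $f(d_y)>0$ guarantees finiteness of the derivative, not a polynomial lower bound on $f(d_y)$.

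The paper's proof confronts exactly this difficulty. It splits into two cases according to whether $|\log_2 f(\min[n,k])|$ is $o(q)$ or $\Omega(q)$. In the first case one can indeed show $\log_2 f(d_y)$ is at worst $o(q)$ for all $d_y$, and then the $q^{-(n-d_y)(k-d_y)}$ prefactor (which drops by at least a factor $q$ away from $d_y=\min[n,k]$) makes the last term dominate, as in your argument. The second case cannot be excluded by size considerations alone; instead the paper invokes the Kuhn--Tucker conditions \eqref{eq:KuhnTuckerCond_I} themselves: if $|\log_2 f(\min[n,k])|=\Omega(q)$ then one can locate an index $j$ with $\alpha_j=\Omega(1)$ for which $I'_j=\Theta(\log q)=\lambda$, while $I'_k=\Omega(q)>\lambda$, contradicting optimality. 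So the paper's proof is not purely an order-of-magnitude comparison; it uses the KKT structure to eliminate the regime your bound cannot handle.
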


From the proof of Lemma~\ref{lem:DominTerm_DerivativeI} 
written in Appendix~\ref{sec:apndx1}, we can also see that the remaining terms in 
the summation of \eqref{eq:I_Derivative_midl1} are of order $o(1)$, so we can 
write
\begin{align}\label{eq:I_Derivative_Approx}
I'_k =& [T\min[n,k]-nk]\log_2{q} + \underbrace{o(1)}_{\epsilon_q(k)} -\log_2{e} -\log_2\left(\sum_{d_x=\min[n,k]}^{\min[m,T]} q^{-[n-\min[n,k]]d_x} \alpha_{d_x} \right).
\end{align}

Assuming that the expression inside the $\log(\cdot)$ function in (\ref{eq:I_Derivative_Approx}) is not zero for every $0\le k\le \min[m,T]$, we can rewrite the Kuhn-Tucker conditions as
\begin{equation*}
\sum_{d_x=\min[n,k]}^{\min[m,T]} q^{-[n-\min[n,k]]d_x} \alpha_{d_x} \ge 
2^{-C_{\textsl{s}}+o(1)} q^{[T\min[n,k]-nk]},
\end{equation*}
where the inequality holds with equality for all $k$ with $\alpha_k^*>0$.

Let $\delta\triangleq \min[m,T]$ and define the $(\delta+1)\times(\delta+1)$ matrix $\mathbf{A}$ with elements 
\begin{align*}
\mathbf{A}_{ij}\triangleq \left\{\begin{array}{ll}
q^{-[n-\min[n,i]]j}  & \min[n,i] \leq j \leq \delta,\\
0 & \textrm{otherwise}.
\end{array}
\right.
\end{align*}

We also define the column vector $\mathbf{b}$ with elements $\mathbf{b}_i\triangleq q^{[T\min[n,i]-ni]}$ for $0\leq i \leq \delta$. Note that for convenience the indices of matrix $\mathbf{A}$ and vector $\mathbf{b}$ start from $0$. Using these definitions, we are able to rewrite the Kuhn-Tucker conditions in the matrix form as 
\begin{equation}\label{eq:KuhnTuckerCond_MatrixForm}
\mathbf{A}\boldsymbol{\alpha}^*  \succeq 2^{-C_{\textsl{s}}+o(1)} \mathbf{b}.
\end{equation}
In the following, we consider two cases for $\delta\le n$ and $\delta>n$, and find $\boldsymbol{\alpha}^*$ for each of them, separately. 

\noindent\textbf{First case: $\delta\le n$.}
In this case  we can explicitly write the matrix $\mathbf{A}$ and vector $\mathbf{b}$ as 
\begin{equation*}
\mathbf{A}=\left[\begin{array}{cccccc} 
1 & q^{-n}  & \cdots & q^{-(\delta-1)n} & q^{-\delta n}\\
0 & q^{-(n-1)} &  \cdots & q^{-(\delta-1)(n-1)} & q^{-\delta (n-1)}\\
0 & 0 &  \cdots & q^{-(\delta-1)(n-2)} & q^{-\delta (n-2)}\\
\vdots & \vdots & \ddots & \vdots & \vdots\\
0 & 0 & \cdots & q^{-(\delta-1)(n-\delta+1)} & q^{-\delta(n-\delta+1)}\\
0 & 0 & \cdots & 0 & q^{-\delta(n-\delta)}
\end{array} \right],
\end{equation*}
and
\begin{equation*}
\mathbf{b}=\left[\begin{array}{cccc} 1 & q^{(T-n)} & \cdots & q^{\delta(T-n)}  \end{array}\right]^\mathrm{T}.
\end{equation*}

The fact that the expression inside the $\log_2(\cdot)$ function in (\ref{eq:I_Derivative_Approx})  is non-zero for $k=\delta$, forces $\alpha_{\delta}^*$ to be positive. Thus the last row of the matrix inequality in \eqref{eq:KuhnTuckerCond_MatrixForm} should be satisfied as an equality. Therefore,
\begin{align*}
\alpha^*_{\delta} =& \frac{q^{\delta(T-n)}}{q^{-\delta(n-\delta)}} 2^{-C_{\textsl{s}}+o(1)} 
=  q^{\delta(T-\delta)} 2^{-C_{\textsl{s}}+o(1)}.
\end{align*}

Now we use induction to show that the optimal solution has the form
\begin{equation}
\alpha^*_i= \left\{\begin{array}{lcl}
q^{i(T-i)} 2^{-C_{\textsl{s}}+o(1)} & : & \kappa\le i\le\delta,\\
0 & : & 0\le i<\kappa,
\end{array}\right.
\label{eq:optimal_case1}
\end{equation}
where we will determine $\kappa$ later. 

Let us fix $l$ and assume that $\alpha^*_i=q^{i(T-i)} 2^{-C_{\textsl{s}}+o(1)}$ 
for $0\le l<i\le\delta$. Then for $\alpha^*_{l}$ we can write
\begin{equation*}
A_{ll}\alpha^*_l + \sum_{j=l+1}^{\delta} q^{-(n-l)j} \alpha^*_j \ge q^{l(T-n)} 2^{-C_{\textsl{s}}+o(1)},
\end{equation*}
or equivalently
\begin{align}\label{eq:KuhnTucker_Cond_dim_l}
A_{ll}\alpha^*_l \ge & q^{l(T-n)} 2^{-C_{\textsl{s}}+o(1)}  - \sum_{j=l+1}^{\delta} q^{-(n-l)j} \alpha^*_j \nonumber\\
=& q^{l(T-n)} 2^{-C_{\textsl{s}}+o(1)} \left[1 - \sum_{j=l+1}^{\delta} q^{(T-n-j)(j-l)}\right].
\end{align}
We can use induction for one step more to show that $\alpha^*_l$ 
is of the desired form \eqref{eq:optimal_case1} if the previous 
expression is satisfied with equality. This is true if we have
$1 - \sum_{j=l+1}^{\delta} q^{(T-n-j)(j-l)} \ge 0$,
or equivalently (assuming large $q$) if we have
$\left.(T-n-j)\right|_{j=l+1} <0$.
So we can conclude that we should have
$(T-n)^+\le l \le \delta$.
It can be easily verified that for $i<(T-n)^+$ the Kuhn-Tucker 
equation for $\alpha^*_i$ satisfies the strict inequality so 
$\alpha^*_i=0$ for $i<\min[(T-n)^+,\delta]$. The above 
argument results in a solution of the following form for 
the case $\delta\le n$ 
\begin{equation}\label{eq:Solution_Form_big_n}
\alpha^*_i= \left\{\begin{array}{lcl} 
q^{i(T-i)} 2^{-C_{\textsl{s}}+o(1)} & : & \min\left[(T-n)^+,\delta\right]\le i\le\delta,\\
0 & : & 0\le i < \min\left[(T-n)^+,\delta\right].
\end{array}\right.
\end{equation}

\noindent\textbf{Second case: $\delta>n$.}
We now write matrix $\mathbf{A}$ and vector $\mathbf{b}$ as 
%{\small
\begin{align*}
\mathbf{A}&=\left[\begin{matrix}%{cccccccc} 
1 & q^{-n} & \cdots & \cdots & \cdots & \cdots & q^{-\delta n}\\
0 & q^{-(n-1)} & \cdots & \cdots & \cdots & \cdots & q^{-\delta (n-1)}\\
\vdots & \ddots & \ddots & \vdots & \vdots & \vdots & \vdots \\
0 & \cdots &  0 &  q^{-(n-1)} & q^{-n} & \cdots & q^{-\delta}\\
\hline
0 &  \cdots & 0 & 0 & 1 & \cdots & 1\\
\vdots  & \ddots & \vdots & \vdots & \vdots & \ddots & \vdots\\
0 &  \cdots & 0 & 0 & 1 & \cdots & 1\\
\end{matrix} \right],
\end{align*} 
%}
and
\begin{equation*}
%b=\left[\begin{array}{cccccccc} 1 & q^{(T-n)} & \cdots & q^{(n-1)(T-n)} & q^{n(T-n)} & q^{n(T-n-1)} & \cdots & q^{n(T-\delta)}  \end{array}\right]^\mathrm{T}.
%\mathbf{b}=\left[\begin{smallmatrix} 1 & q^{(T-n)} & \cdots & q^{(n-1)(T-n)} & q^{n(T-n)} & q^{n(T-n-1)} & \cdots & q^{n(T-\delta)}  \end{smallmatrix}\right]^\mathrm{T}.
\mathbf{b}=\left[\begin{matrix} 1 & q^{(T-n)} & \cdots & q^{(n-1)(T-n)} & q^{n(T-n)} & q^{n(T-n-1)} & \cdots & q^{n(T-\delta)}  \end{matrix}\right]^\mathrm{T}.
\end{equation*}
The last $\delta-n+1$ rows of $\mathbf{A}$ are the same while 
$b_i$ is decreasing with $i$ for $i\ge n$. Thus, the last 
$\delta-n$ inequalities are strict and therefore,
\begin{align}
\alpha^*_{n+1}=\cdots=\alpha^*_{\delta}=0.
\label{eq:Solution_Form_small_n_1}
\end{align}

The remaining equations can simply be reduced to the first case. Define 
\begin{equation*}
\tilde{\mathbf{A}}=\left[\begin{array}{cccccc} 
1 & q^{-n}  & \cdots & q^{-(n-1)n} & q^{-n^2}\\
0 & q^{-(n-1)} &  \cdots & q^{-(n-1)(n-1)} & q^{-n(n-1)}\\
0 & 0 &  \cdots & q^{-(n-1)(n-2)} & q^{- n(n-2)}\\
\vdots & \vdots & \ddots & \vdots & \vdots\\
0 & 0 & \cdots & q^{-(n-1)} & q^{-n}\\
0 & 0 & \cdots & 0 & 1
\end{array} \right],
\end{equation*}
and 
\begin{equation*}
\tilde{\mathbf{b}}=\left[\begin{array}{cccc} 1 & q^{(T-n)} & \cdots & q^{n(T-n)}  \end{array}\right]^\mathrm{T}.
\end{equation*}
The remaining conditions in this case can be written as 
\begin{equation*}
\tilde{\mathbf{A}}\boldsymbol{\alpha}^* \succeq 2^{-C_{\textsl{s}}+o(1)} \tilde{\mathbf{b}},
\end{equation*}
which is exactly similar to \eqref{eq:KuhnTuckerCond_MatrixForm}, for $\delta=n$. 
Therefore, the optimal solution for the first case will also 
satisfy these conditions, {\em i.e.},
\begin{equation}\label{eq:Solution_Form_small_n_2}
\alpha^*_i= \left\{\begin{array}{ll} 
q^{i(T-i)} 2^{-C_{\textsl{s}}+o(1)}  & \kappa \le i\le n,\\
0 &  0\le i < \kappa,
\end{array}\right.
\end{equation}
with $\kappa=\min[(T-n)^+,n]$. Summarizing \eqref{eq:Solution_Form_small_n_1} 
and \eqref{eq:Solution_Form_small_n_2}, we can obtain the optimal solution 
for this regime, as
\begin{equation}\label{eq:Solution_Form_small_n_final}
\alpha^*_i=\left\{ \begin{array}{ll}
0  & n<i\le\delta,\\
q^{i(T-i)} 2^{-C_{\textsl{s}}+o(1)}  & \kappa\le i\le n,\\
0 &  0\le i<\kappa,
\end{array} \right.
\end{equation}
where $\kappa=\min[(T-n)^+,n]$. This completes the proof of 
Theorem~\ref{thm:Main_Result_Single_Src_Dist}.
By normalizing $\alpha^*_i$ to $1$ we can also obtain an alternative proof to 
Theorem~\ref{thm:Main_Result_Single_Src}.

\noindent{\bf Discussion:} To characterize the exact value of
$q_0$ one have to consider the exact form of the set of equations 
given in \eqref{eq:KuhnTucker_Cond_dim_l} (for each $l$)
which are as follows,
\begin{align*}
A_{ll}\alpha^*_l \ge  q^{l(T-n)} 2^{-C_{\textsl{s}}+\epsilon_q(l)} \left[1 - \sum_{j=l+1}^{\delta} q^{(T-n-j)(j-l)} 2^{[\epsilon_q(j)-\epsilon_q(l)]}\right].
\end{align*}
Although it is hard to find $q_0$ exactly, it is possible to show 
that there exists finite $q_0$ such that result of 
Theorem~\ref{thm:Main_Result_Single_Src_Dist} holds for.
This can be done by solving above equations assuming that
$\epsilon_q(k)$ is zero for every $k$ (assuming $q\gg 1$).
Then, it can be observed that the RHS of \eqref{eq:KuhnTucker_Cond_dim_l}
are either greater or less than zero. Now by assuming 
finite but large enough $q$ and considering the exact form
of \eqref{eq:KuhnTucker_Cond_dim_l} we have some small perturbations
that cannot change the sign of RHS of \eqref{eq:KuhnTucker_Cond_dim_l}
so we are done.

% SubSection --------------------------------------------------------
\subsection{Proof of Theorem~\ref{thm:Main_Result_Single_Src_Extnd}}
\label{sec:Proof_Main_Result_Single_Src_Extnd}
Let $\epsilon_q(k)$ denotes the error term in \eqref{eq:I_Derivative_Approx}.
We can easily write the exact expression for $\epsilon_q(k)$ which 
is as follows
\begin{align*}
\epsilon_q(k) =& - \sum_{d_y=0}^{r_k} \psi(n,d_y) \gaussnum{k}{d_y} q^{-nk} \log_2\left( \sum_{d_x=d_y}^{\min[m,T]} \alpha_{d_x} \frac{\gaussnum{d_x}{d_y}}{\gaussnum{T}{d_y}} q^{-nd_x} \right) \nonumber\\
&+ \log_2\left( \sum_{d_x=r_k}^{\min[m,T]} q^{r_k(d_x-r_k)-nd_x} \alpha_{d_x} \right) - r_k(T-r_k)\log_2{q},
\end{align*}
where $r_k=\min[n,k]$. 

We consider the case where $T>n+\min[m,n]$ so
Theorem~\ref{thm:Main_Result_Single_Src_Dist} implies that for the
optimal input distribution we have $\alpha_{i^*}=1$ where
$i^*=\min[m,n]$ and $q>q_0$.  Then we can simplify $\epsilon_q(k)$
more and write
\begin{align}\label{eq:ErrorTerm_epsilon_2}
\epsilon_q(k) =& \sum_{d_y=0}^{r_k} \psi(n,d_y) \gaussnum{k}{d_y} q^{-nk} \log_2\left( \frac{\gaussnum{T}{d_y}}{\gaussnum{i^*}{d_y}} \right) -r_k(T-i^*),
\end{align}
where we also use Lemma~\ref{lem:psi_Recursive} in the above simplification.

To find $q_0$, the minimum value of $q$ that the result of 
Theorem~\ref{thm:Main_Result_Single_Src_Extnd} 
is valid for, we should consider the exact form of \eqref{eq:KuhnTucker_Cond_dim_l}
and check that the RHS of \eqref{eq:KuhnTucker_Cond_dim_l} is less than
or equal to zero for $0\le l\le (i^*-1)$.
So from \eqref{eq:KuhnTucker_Cond_dim_l} for every $0\le l\le (i^*-1)$ we may write
\begin{align*}
 \left[1 -  q^{(T-n-i^*)(i^*-l)} 2^{\left[\epsilon_q(i^*)-\epsilon_q(l)\right]}\right]\le 0,
\end{align*}
or equivalently
\begin{equation}\label{eq:Set_of_CondsOn_q0_1}
\frac{\epsilon_{q_0}(l)-\epsilon_{q_0}(i^*)}{(T-n-i^*)(i^*-l)} \le \log_2{q_0}, \quad \forall l: 0\le l\le (i^*-1).
\end{equation}
Using a similar argument we should have also
\begin{equation}\label{eq:Set_of_CondsOn_q0_2}
\frac{\epsilon_{q_0}(l)-\epsilon_{q_0}(i^*)}{i^*(l-i^*)} \le \log_2{q_0}, \quad \forall l: (i^*+1)\le l\le m.
\end{equation}

From \eqref{eq:Solution_Form_small_n_final} for the capacity $C_{\textsl{s}}$
we have $C_{\textsl{s}}=i^*(T-i^*)\log_2{q}+\epsilon_q(i^*)$. Evaluating 
\eqref{eq:ErrorTerm_epsilon_2} at $k=i^*$ we have
\begin{align*}
\epsilon_q(i^*) =& \sum_{d_y=0}^{i^*} \psi(n,d_y) \gaussnum{i^*}{d_y} q^{-ni^*} \log_2\left( \frac{\gaussnum{T}{d_y}}{\gaussnum{i^*}{d_y}} \right) 
-i^*(T-i^*)\log_2{q},
\end{align*}
which results in the capacity stated in the assertion of Theorem~\ref{thm:Main_Result_Single_Src_Extnd}.

\noindent{\bf Discussion:} We derive a sufficient condition on the minimum size of $q$
to satisfy the set of conditions stated in \eqref{eq:Set_of_CondsOn_q0_1} and 
\eqref{eq:Set_of_CondsOn_q0_2}. Using this sufficient condition we
explore the behavior of $q_0$ as $T$ increases.

For $k\neq i^*$ we can write
\begin{align}\label{eq:epsilon_UpperBound}
\epsilon_q(k) &\stackrel{(a)}{\le} 4 \sum_{d_y=0}^{r_k} q^{-(n-d_y)(k-d_y)} \log_2\left(4 q^{d_y(T-i^*)} \right) -r_k(T-i^*)\log_2{q} \nonumber\\
&\le 8 +4r_k q^{-(\max[n,k]-\min[n,k]+1)} \left( 2+ (r_k-1)(T-i^*)\log_2{q} \right) \nonumber\\
&\stackrel{(b)}{\le} (8 + 8r_k) + \left( 4r_k(r_k-1)(T-i^*) \frac{\log_2{q}}{q^{(\max[n,k]-\min[n,k]+1)}} \right),
\end{align}
where $(a)$ follows from \eqref{eq:GaussianNumber_bound_1} and \eqref{eq:psi_bound},
and in $(b)$ we use the fact that $k\neq i^*$.

Then for $k=i^*$ we can write
\begin{align}\label{eq:epsilon_LowerBound}
\epsilon_q(i^*) &\ge \psi(n,i^*)q^{-ni^*} \log_2{\gaussnum{T}{i^*}} -i^*(T-i^*)\log_2{q} \nonumber\\
& \stackrel{(a)}{\ge} -(i^*)^2 (T-i^*)\frac{\log_2{q}}{q^{n-i^*+1}},
\end{align}
where $(a)$ follows from \eqref{eq:GaussianNumber_bound_1} and \eqref{eq:psi_bound}.

Let us consider two cases. First, we assume that $m\le n$ so $i^*=m$. To find a
sufficient condition for $q_0$ we have to only consider conditions given in 
\eqref{eq:Set_of_CondsOn_q0_1}. Using \eqref{eq:epsilon_UpperBound} and 
\eqref{eq:epsilon_LowerBound} and assuming that $T\rightarrow\infty$ we should have
$\log_2{q_0} \ge 5m^2 q_0^{-n+m-1}\log_2{q_0}$,
or equivalently $q_0^{n-m+1}\ge 5(i^*)^2$.

For the second case we have $m>n$ which means $i^*=n$. Here, using a similar 
argument to the one given above for the first case we can show that conditions 
\eqref{eq:Set_of_CondsOn_q0_1} give some constant $q_0$ as $T\rightarrow\infty$.
However, the conditions \eqref{eq:Set_of_CondsOn_q0_2} give a sufficient condition
for $q_0$ which grows as $T\rightarrow\infty$. Now, using \eqref{eq:Set_of_CondsOn_q0_2},
\eqref{eq:epsilon_UpperBound}, and 
\eqref{eq:epsilon_LowerBound} and assuming that $T\rightarrow\infty$, a 
sufficient condition for $q_0$ would be 
$\log_2{q_0} \ge 4nT q_0^{-2}\log_2{q_0} + nT q_0^{-1}\log_2{q_0}$. For
large $T$ for the sufficient condition we have $q_0\ge i^*T$.

%--------------------------------------------------------------------
% Section -----------------------------------------------------------
%\input{MultiSource}
\section{Multiple Sources Scenario: The Rate Region} \label{sec:ProofMultiSource}

The goal of this section is to characterize $\mathcal{R}$, the 
set of all achievable rate pairs $(R_1,R_2)$ for two user 
communication over the multiple access channel $\Cca_{\textsl{m-MAC}}$ 
described in Definition~\ref{def:matrix_channel_MAC}. More precisely, 
we will show that $\mathcal{R}=\mathcal{R}^*$. In order to do this, 
we first formulate a mathematical model for this channel. Then, we 
present an achievability scheme, to show that $\mathcal{R}^*$ is 
achievable, \emph{i.e.,} $\mathcal{R}^* \subseteq \mathcal{R}$. In 
the next subsection we prove the optimality of this scheme and show 
that $\mathcal{R} \subseteq \mathcal{R}^*$. 

The proof of the converse part of the theorem is based on two outer 
bounds, namely, a cooperative bound and a coloring bound. For the coloring 
bound, we utilize a combinatorial argument to bound the number of 
\emph{distinguishable} symbol pairs that can be transmitted from the 
two sources to the destination. This bound allows us to restrict the 
\emph{effective} input alphabets of the sources to subsets of the 
original alphabets, with significantly smaller size. We  can then 
easily bound the capacity region of the network using the restricted 
input alphabet.

The transition probability of the channel given by Definition~\ref{def:matrix_channel_MAC}, 
$P_{Y|X_1 X_2}$, can be written as \cite{JaFrDi-ISIT08}
\begin{align}
P_{Y|X_1 X_2}(y|x_1,x_2)
=\left\{ \begin{array}{ll}
q^{-n\dim(\sspan{x_1} + \sspan{x_2})} & \sspan{y} \sqsubseteq \sspan{x_1} + \sspan{x_2},\\
0 & \text{otherwise}.
\end{array} \right.
\label{eq:MAC-transition-matrix}
\end{align}

Our first result, stated in Theorem~\ref{thm:channel_equivalence_MAC}, is 
that the multiple access matrix channel described in Definition~\ref{def:matrix_channel_MAC} is
equivalent to the ``subspace'' channel 
$\mathrm{Ch}_{\textsl{s-MAC}}$ described in Definition~\ref{def:subspace_channel_MAC}, 
that has subspaces as inputs and outputs. So to characterize the optimal rate
region of $\mathrm{Ch}_{\textsl{m-MAC}}$, we can focus on finding the optimal
rate region of $\mathrm{Ch}_{\textsl{s-MAC}}$. We will use this equivalence  in the rest of this section.

We know from \cite{CoTh-ElmntsInfoTheory06} that the rate region of the multiple 
access channel $\mathrm{Ch}_{\textsl{s-MAC}}$ is given by the closure of the 
convex hull of the rate vectors satisfying
\begin{equation*}
R_S\le I(\Pi_{X_S};\Pi_Y|\Pi_{X_{S^c}}) \quad \text{for all } S\subseteq\{1,\ldots,N_s\},
\end{equation*}
for some product distribution $P_{\Pi_{X_1}}(\pi_1)\cdots P_{\Pi_{X_{N_s}}}(\pi_{N_s})$. 
Note that $R_S=\sum_{i\in S} R_i$, where $R_i$ is the transmission 
rate of the $i$th source, $\Pi_{X_S}=\{\Pi_{X_i}:i\in S\}$ and 
$S^c$ is the complement set of $S$.

%--------------------------------------------------------------------
\subsection{Achievability Scheme}\label{sec:Multiple_Src_Achv}

In this subsection we illustrate a simple achievability 
scheme for the corner points of the rate region defined in 
Theorem~\ref{thm:Main_Result_Multpl_Src}. The remaining
points in the rate region can be achieved using time-sharing. 

For given $(d_1,d_2)\in\mathcal{D}^*$, define the following 
subspace code-books
\begin{equation*}
\begin{aligned}
&\wt{\Cca}_1 \triangleq \bigg\{\sspan{X_1}: X_1=
 \left[ \begin{array}{c|c|c}
\mathbf{I}_{d_1\times d_1} & \mathbf{0}_{d_1 \times d_2}  & \mathbf{U}_1\\
\hline
\mathbf{0}_{(m_1-d_1)\times d_1} & \mathbf{0}_{(m_1-d_1) \times d_2}  & \mathbf{0}_{ (m_1-d_1) \times (T-d_1-d_2) }
\end{array} \right],
\mathbf{U}_1\in\Fbb_q^{d_1\times (T-d_1-d_2)}
\bigg\}
\end{aligned}
\end{equation*}
and 
\begin{align*}
&\wt{\Cca}_2 \triangleq \bigg\{\sspan{X_2}: X_2=
\left[ \begin{array}{c|c|c}
\mathbf{0}_{d_2\times d_1} & \mathbf{I}_{d_2 \times d_2}  & \mathbf{U}_2\\
\hline
\mathbf{0}_{(m_2-d_2)\times d_1} & \mathbf{0}_{(m_2-d_2) \times d_2}  & \mathbf{0}_{ (m_2-d_2) \times (T-d_1-d_2) }
\end{array} \right],
\mathbf{U}_2\in\Fbb_q^{d_2\times (T-d_1-d_2)}
\bigg\}.
\end{align*}
If we transmit messages from these code-books, we have
\begin{align*}
Y&=H_1X_1+H_2X_2\\
&=\left[ \begin{array}{c|c|c}
\hat{H}_1 & \hat{H}_2 & \hat{H}_1 \mathbf{U}_1 + \hat{H}_2 \mathbf{U}_2
\end{array} \right],
\end{align*}
where $\hat{H}_i$ captures the first $d_i$ columns of $H_i$. 
Therefore, decoding at the receiver would be just recovering
of $\mathbf{U}_1$ and $\mathbf{U}_2$ given
$\hat{H}_1 \mathbf{U}_1 + \hat{H}_2 \mathbf{U}_2$, $\hat{H}_1$, 
and $\hat{H}_2$. Since $d_1+d_2\leq n$, the matrix 
$[\hat{H}_1 \ \hat{H}_2]$ is full-rank with high probability, and 
therefore the decoder is able to decode $\mathbf{U}_1$ and $\mathbf{U}_2$. 

Note that the achievability scheme uses effectively the coding 
vectors approach \cite{ChWuJa-Alert03}. This indicates that 
for $\frac{T}{2}>\max [m_1+m_2,n]$ and $q$ large enough, the 
subspace coding and the coding vectors approach achieve the 
same rate.

%--------------------------------------------------------------------
\subsection{Outer bound on the Admissible Rate Region}\label{sec:outer}

In the following we will present an outer bound for $\mathcal{R}$, the 
admissible rate region of the non-coherent two-user multiple access 
channel $\mathrm{Ch}_{\textsl{m-MAC}}$. Recall that by 
Theorem~\ref{thm:channel_equivalence_MAC} we can focus on the subspace 
channel $\mathrm{Ch}_{\textsl{s-MAC}}$. We first show in 
Proposition~\ref{pre:OuterBoundCoop} that $\Rca\subseteq \Rca_{\mathrm{coop}}$, a cooperative outer-bound. Then Proposition~\ref{pre:OuterBoundColering} 
demonstrates that $\Rca\subseteq\Rca_{\mathrm{col}}$, a coloring outer-bound. 
Finally we show that $\Rca_{\mathrm{col}}\cap\Rca_{\mathrm{coop}}\subseteq\Rca$, 
yielding the desired outer-bound $\Rca\subseteq\Rca^*$ which matches the 
achievability of \S\ref{sec:Multiple_Src_Achv}.

The first outer bound, called cooperating outer bound, is 
simply obtained by letting the two transmitters cooperate to 
transmit their messages to the receiver, i.e. we assume they form a super-source. Applying 
Theorem~\ref{thm:Main_Result_Single_Src} for the non-coherent 
scenario for the single super-source, the one who controls 
the packets of both transmitters, we have the following 
proposition.
\begin{proposition}\label{pre:OuterBoundCoop}
Let $\frac{T}{2} \geq m_1+m_2$. We have $\mathcal{R} \subseteq \mathcal{R}_{\mathrm{coop}}$ where 
\begin{equation*}
\mathcal{R}_{\mathrm{coop}}\triangleq\left\{ (R_1,R_2):\ R_1+R_2\le k(T-k) \log_2 q \right\},
\end{equation*}
and $k=\min[m_1+m_2,n]$.
\end{proposition}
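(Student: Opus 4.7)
The plan is to use a standard genie-aided cooperation argument that collapses the two-source problem into a single-source problem, and then to invoke Theorem~\ref{thm:Main_Result_Single_Src}.

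First I would imagine that a genie allows the two transmitters to fully share their messages before transmission, so they jointly encode across all of their $m_1+m_2$ input rows. Any rate pair $(R_1,R_2)$ that is achievable in $\mathrm{Ch}_{\textsl{m-MAC}}$ remains achievable under cooperation, so the sum rate $R_1+R_2$ is upper bounded by the capacity of the cooperative channel. Under cooperation, the channel becomes
\[
Y[t] = G_{\textsl{MAC}}[t]\, X_{\textsl{MAC}}[t],
\]
where $X_{\textsl{MAC}}[t]\in\Fbb_q^{(m_1+m_2)\times T}$ is chosen jointly by the super-source and $G_{\textsl{MAC}}[t]=[G_1[t]\ G_2[t]]\in\Fbb_q^{n\times(m_1+m_2)}$.

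The key observation is that, since $G_1[t]$ and $G_2[t]$ are independent and each has i.i.d.\ uniform entries in $\Fbb_q$, the concatenated matrix $G_{\textsl{MAC}}[t]$ also has i.i.d.\ uniform entries in $\Fbb_q$, and is i.i.d.\ across $t$. Hence the cooperative channel is exactly the single-source non-coherent matrix channel $\mathrm{Ch}_{\textsl{m}}$ of Definition~\ref{def:matrix_channel_P2P} with parameters $(n,m_1+m_2,T)$.

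Applying Theorem~\ref{thm:Main_Result_Single_Src} to this channel gives the capacity
\[
i^*(T-i^*)\log_2 q + o(1), \qquad i^*=\min\bigl[m_1+m_2,\,n,\,\lfloor T/2\rfloor\bigr].
\]
Under the hypothesis $T/2\ge m_1+m_2$, the third term in the minimum is redundant, so $i^*=\min[m_1+m_2,n]=k$, yielding $R_1+R_2\le k(T-k)\log_2 q$ (asymptotically in $q$, consistent with the asymptotic statement of Theorem~\ref{thm:Main_Result_Multpl_Src}). This establishes $\mathcal{R}\subseteq\mathcal{R}_{\mathrm{coop}}$.

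There is no real obstacle here; the only point worth checking carefully is the distributional claim that $[G_1\ G_2]$ is uniform over $\Fbb_q^{n\times(m_1+m_2)}$, which is immediate from the independence of $G_1$ and $G_2$ and from each having i.i.d.\ uniform entries. The argument is essentially a one-line reduction once Theorem~\ref{thm:Main_Result_Single_Src} is in hand.
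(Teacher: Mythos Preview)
Your proposal is correct and matches the paper's argument essentially verbatim: the paper also forms a cooperating super-source controlling all $m_1+m_2$ rows and then invokes Theorem~\ref{thm:Main_Result_Single_Src} for the resulting single-source channel. Your additional remarks (that $[G_1\ G_2]$ is uniform over $\Fbb_q^{n\times(m_1+m_2)}$ and that the hypothesis $T/2\ge m_1+m_2$ makes the $\lfloor T/2\rfloor$ term in $i^*$ redundant) are exactly the small verifications the paper leaves implicit.
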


The rest of this section is dedicated to deriving the second outer 
bound which is denoted by $\mathcal{R}_{\mathrm{col}}$. This 
bound is based on an argument on the number of messages per 
channel use that each user can reliably communicate over 
the multiple access channel.

Let $(R_1,R_2)\in\mathcal{R}$  be an achievable rate pair for 
which there exists an encoding and decoding scheme with block 
length $N$ and small error probability.  One can follow the 
usual converse proof of the multiple access channel 
from \cite{CoTh-ElmntsInfoTheory06} to show that 
\begin{align}
R_1 &\leq  I(\Pi_{X_1}^N; \Pi_{Y}^N | \Pi_{X_2}^N) \leq \frac{1}{N} \sum_{t=1}^{N} I(\Pi_{X_1t}; \Pi_{Yt} | \Pi_{X_2t}),\nonumber\\
R_2 &\leq  I(\Pi_{X_2}^N; \Pi_{Y}^N | \Pi_{X_1}^N) \leq \frac{1}{N} \sum_{t=1}^{N} I(\Pi_{X_2t}; \Pi_{Yt} | \Pi_{X_1t}),\nonumber\\
R_1+R_2 &\leq  I(\Pi_{X_1}^N, \Pi_{X_2}^N ; \Pi_{Y}^N ) \leq \frac{1}{N} \sum_{t=1}^{N} I(\Pi_{X_1t}, \Pi_{X_2t} ; \Pi_{Yt}).\nonumber
\end{align}
For each time instance $t$, denote by $\wt\Cca_{i,t}$, the 
projection of the code-book used by user $i$ to its $t$-th 
element. For a single source scenario, we have shown 
in \S\ref{sec:TheChanlCap-SnglSrc} that we can use the 
set $\mathrm{Sp}(T,m)$ as our input alphabet for all time 
slots, and have the receiver successfully decode the sent 
messages, and hence, the user can communicate $\mathcal{S}(T,m)$ distinct 
messages. For the multi-source case, $\wt\Cca_{i,t}$ is  more 
restricted. The main reason for this is that the transition 
probability of the multiple access channel 
$P_{\Pi_Y|\Pi_{X_1}\Pi_{X_2}}$ is of the form $P_{\Pi_Y|\Pi_{X_1} + \Pi_{X_2}}$. 
That is, if $(\pi_1,\pi_2)\in\wt\Xca_1\times\wt\Xca_2$ and 
$(\pi_1',\pi'_2)\in\wt\Xca_1\times\wt\Xca_2$ satisfy 
$\pi_1+ \pi_2 =\pi'_1+ \pi'_2$, then $P(\Pi_Y|\pi_1,\pi_2)=P(\Pi_Y|\pi_1', \pi_2')$, 
and hence the receiver cannot distinguish between  the two pairs. 

In the following we will discuss this indistinguishability 
in detail, and derive the maximum number of distinguishable 
pairs which can be conveyed through the channel. In order to 
do so,  we start with some useful definitions and lemmas.  

\begin{definition}
For a fixed $\pi_1\in \mathrm{Gr}(T,d_1)$, we denote by 
$\mathcal{N}(\pi_1,d_2,d_{12})$ the set of subspaces of 
dimension $d_2$ that intersect with $\pi_1$ at $d_{12}$ 
dimensions, \emph{i.e.},
\begin{align}
\mathcal{N}(\pi_1,d_2,d_{12})\triangleq \{\pi_2\in \mathrm{Gr}(T,d_2): \dim(\pi_1 \cap \pi_2)=d_{12}\}.
\end{align}
\end{definition}
It turns out that the cardinality of the set  $\mathcal{N}(\pi_1,d_2,d_{12})$ 
depends on $\pi_1$ only through its dimension, $d_1=\dim(\pi_1)$. 
Therefore, we denote this number by $n(d_1,d_2,d_{12})$, which is 
characterized in the following lemma.
\begin{lemma}\label{lem:N}
The cardinality of the set $N(\pi_1,d_2,d_{12})$ is given by 
\begin{align}
n(d_1,d_2,d_{12})=|N(\pi_1,d_2,d_{12})|\=q^{d_{12}(d_1-d_{12}) + (d_2-d_{12})(T-d_2)}.
\end{align}
\end{lemma}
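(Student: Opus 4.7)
The plan is to derive an exact counting formula for $n(d_1,d_2,d_{12})$ and then extract its leading order in $q$ using Lemma~\ref{lem:gauss_Approximation}. Invariance of the count under the choice of $\pi_1$ with fixed dimension follows from the transitive action of $\mathrm{GL}(T,\Fbb_q)$ on $\mathrm{Gr}(T,d_1)$: any two subspaces of dimension $d_1$ are mapped to each other by some linear automorphism of $\Fbb_q^T$, which preserves the quantity $\dim(\pi_1\cap\pi_2)$. So one may fix $\pi_1$ once and for all.

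I will partition the subspaces in $\mathcal{N}(\pi_1,d_2,d_{12})$ by their intersection with $\pi_1$: for each $d_{12}$-dimensional $\pi\sqsubseteq \pi_1$, count the number of $\pi_2\in\mathrm{Gr}(T,d_2)$ with $\pi_1\cap\pi_2=\pi$ (as an equality, not just containment). The number of choices for $\pi$ is $\gaussnum{d_1}{d_{12}}$. For the second stage, passing to the quotient $\Fbb_q^T/\pi$ of dimension $T-d_{12}$, the image of $\pi_1$ has dimension $d_1-d_{12}$, and the image of $\pi_2$ must be a $(d_2-d_{12})$-dimensional subspace that meets the image of $\pi_1$ trivially. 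Counting ordered bases in the quotient, avoiding the image of $\pi_1$ plus the span of previously chosen vectors, gives
\[
\prod_{i=0}^{d_2-d_{12}-1}\left(q^{T-d_{12}}-q^{d_1-d_{12}+i}\right),
\]
and dividing by $\prod_{i=0}^{d_2-d_{12}-1}(q^{d_2-d_{12}}-q^i)$ (the number of ordered bases of a $(d_2-d_{12})$-dim space) yields the exact count.

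Combining the two stages,
\[
n(d_1,d_2,d_{12})=\gaussnum{d_1}{d_{12}}\cdot\frac{\prod_{i=0}^{d_2-d_{12}-1}\left(q^{T-d_{12}}-q^{d_1-d_{12}+i}\right)}{\prod_{i=0}^{d_2-d_{12}-1}(q^{d_2-d_{12}}-q^i)}.
\]
To extract the $q$-leading behavior, I will apply Lemma~\ref{lem:gauss_Approximation} to the Gaussian coefficient, yielding $\gaussnum{d_1}{d_{12}}\=q^{d_{12}(d_1-d_{12})}$, and observe that the second factor is itself a Gaussian-type count of $(d_2-d_{12})$-dim subspaces of a $(T-d_{12})$-dim space disjoint from a fixed $(d_1-d_{12})$-dim subspace. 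Its leading term is $q^{(d_2-d_{12})(T-d_{12})-(d_2-d_{12})^2-(d_2-d_{12})(d_1-d_{12})}$ once one notes the numerator is $\=q^{(d_2-d_{12})(T-d_{12})}$ and the denominator is $\=q^{(d_2-d_{12})^2}$, with the $q^{d_1-d_{12}+i}$ subtractions contributing only lower order corrections (this needs a small verification that $d_1-d_{12}+i<T-d_{12}$, which holds since $\pi\sqsubseteq\pi_1$ and the nontrivial case has $d_1\le T$). Simplifying the exponent gives $(d_2-d_{12})(T-d_2)$, so multiplying with the first factor produces the claimed leading term $q^{d_{12}(d_1-d_{12})+(d_2-d_{12})(T-d_2)}$.

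The only nontrivial step is the quotient argument and the verification that the second factor really contributes $q^{(d_2-d_{12})(T-d_2)}$ at leading order (the cancellation between the $T-d_{12}$ factor and the subtracted $d_2-d_{12}$ dimensions). The rest is a straightforward application of the Gaussian approximation.
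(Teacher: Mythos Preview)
Your proof is correct and follows essentially the same two-stage count as the paper: first choose the intersection $\pi\sqsubseteq\pi_1$ (contributing $\gaussnum{d_1}{d_{12}}$), then count the remaining $d_2-d_{12}$ directions avoiding $\pi_1$; your quotient argument is simply a cleaner justification of the same product the paper writes down directly (the two expressions differ only by a common factor of $q^{-d_{12}(d_2-d_{12})}$ in numerator and denominator). One minor slip in your write-up: the intermediate exponent you state includes an extraneous $-(d_2-d_{12})(d_1-d_{12})$ term that does not belong, but your subsequent simplification to $(d_2-d_{12})(T-d_2)$ is correct.
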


\begin{definition}
For a fixed $\pi_1\in \mathrm{Gr}(T,d_1)$ and $\pi_2\in \mathrm{Gr}(T,d_2)$, we define
\begin{align}
A(\pi_1,\pi_2)\triangleq \{\pi'_2 \in \mathrm{Gr}(T,d_2):  \pi_1 + \pi'_2=\pi_1 + \pi_2\}.
\end{align}
\end{definition}

\begin{lemma}\label{lem:A}
The cardinality of the set $A(\pi_1,\pi_2)$ only depends on 
the dimensions of the two subspaces and their intersection, 
$d_1=\dim(\pi_1)$, $d_2=\dim(\pi_2)$, and $d_{12}=\dim(\pi_1 \cap \pi_2)$. 
Moreover, it can be asymptotically characterized by
\begin{align}
a(d_1,d_2,d_{12})=|A(\pi_1,\pi_2)|\=q^{ d_2(d_1-d_{12}) }.
\end{align}.
\end{lemma}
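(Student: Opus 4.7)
The plan is to reduce this lemma to the previous one (Lemma~\ref{lem:N}) by restricting attention to the ambient subspace $\sigma \triangleq \pi_1 + \pi_2$. First I would observe that any $\pi_2' \in A(\pi_1,\pi_2)$ must satisfy $\pi_2' \sqsubseteq \pi_1 + \pi_2' = \sigma$, so the counting problem lives entirely inside $\sigma$, whose dimension is $\dim(\sigma)=d_1+d_2-d_{12}$ and is therefore determined by $(d_1,d_2,d_{12})$.

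Next I would translate the condition $\pi_1 + \pi_2' = \sigma$ into a condition on $\dim(\pi_1\cap \pi_2')$. Since $\dim(\pi_1+\pi_2')=d_1+d_2-\dim(\pi_1\cap \pi_2')$, the requirement that $\pi_1+\pi_2'=\sigma$ is equivalent to $\dim(\pi_1\cap \pi_2')=d_{12}$. Consequently
\begin{align*}
A(\pi_1,\pi_2) = \{\pi_2'\sqsubseteq \sigma : \dim(\pi_2')=d_2,\ \dim(\pi_1\cap\pi_2')=d_{12}\},
\end{align*}
which is precisely the set counted by Lemma~\ref{lem:N}, but with the ambient $T$-dimensional space replaced by the $(d_1+d_2-d_{12})$-dimensional space $\sigma$. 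In particular, the cardinality depends on $\pi_1,\pi_2$ only through $(d_1,d_2,d_{12})$, justifying the notation $a(d_1,d_2,d_{12})$.

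Finally, plugging $T \leftarrow d_1+d_2-d_{12}$ into the formula of Lemma~\ref{lem:N} gives
\begin{align*}
a(d_1,d_2,d_{12}) \= q^{d_{12}(d_1-d_{12}) + (d_2-d_{12})(d_1+d_2-d_{12}-d_2)} = q^{(d_1-d_{12})(d_{12} + d_2 - d_{12})} = q^{d_2(d_1-d_{12})},
\end{align*}
as claimed. There is no real obstacle: the only subtlety is the reduction step, which relies on the elementary identity $\dim(\pi_1+\pi_2')=\dim(\pi_1)+\dim(\pi_2')-\dim(\pi_1\cap\pi_2')$ already noted in \S\ref{subsec:Notations}. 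An exact (non-asymptotic) version can be obtained by using the exact Gaussian-coefficient expression behind Lemma~\ref{lem:N} instead of its asymptotic form, should it be needed elsewhere.
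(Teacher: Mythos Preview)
Your proof is correct and follows essentially the same approach as the paper: define $\sigma=\pi_1+\pi_2$ of dimension $d=d_1+d_2-d_{12}$, observe that the counting problem for $A(\pi_1,\pi_2)$ is exactly that of Lemma~\ref{lem:N} with the ambient space $\Fbb_q^T$ replaced by $\sigma$, and substitute $T\leftarrow d$ in the exponent. Your write-up is in fact more explicit than the paper's, since you spell out why $\pi_1+\pi_2'=\sigma$ is equivalent to $\dim(\pi_1\cap\pi_2')=d_{12}$, a step the paper leaves implicit.
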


\begin{definition}\label{def6}
For an arbitrary set $\wt\Cca\subseteq \mathrm{Sp}(T,m)$, we 
denote the projection of $\wt\Cca$ onto the set of $d$-dimensional 
Grassmannian $\wt\Cca(d)$. Formally, 
\begin{align*}
\wt\Cca(d)\triangleq \wt\Cca \cap \mathrm{Gr}(T,d)=\{ \pi\in \wt\Cca: \dim(\pi)= d\}.
\end{align*}
\end{definition}

For a fixed time instance $t$, and corresponding subsets 
$\wt\Cca_{1,t}$ and $\wt\Cca_{2,t}$, we can construct a 
table with $|\wt\Cca_{1,t}|$ rows and $|\wt\Cca_{2,t}|$ 
columns, each row (column) corresponding to one subspace 
$\pi_1$ ($\pi_2$) in $\wt\Cca_{1,t}$ ($\wt\Cca_{2,t}$). In 
the following, we define an equivalence relation for the 
cells of this table. 

\begin{definition}\label{def:col}
A \emph{coloring} for a table constructed as above is an 
assignment of colors to the cells of the table using a 
function $\col:\wt\Cca_{1,t}\times \wt\Cca_{2,t} \rightarrow \mathbb{N}$ 
such that $\col(\pi_1,\pi_2)=\col(\pi_1',\pi_2')$ if 
and only if $\pi_1+ \pi_2=\pi_1' + \pi_2'$.
\end{definition}
It is clear that the coloring definition above exactly 
matches with that of indistinguishability we discussed 
before. More precisely, two pairs of subspaces $(\pi_1,\pi_2)$ 
and $(\pi_1',\pi_2')$ are distinguishable if and only 
if their corresponding cells in the table have different 
colors. The following theorem upper bounds the cardinality 
of the subspace sets based on this fact.

\begin{theorem}\label{thm:OuterBoundColering}
For each pair of uniquely distinguishable sets 
$(\wt\Cca_{1,t},\wt\Cca_{2,t})$ defined on the input 
alphabet $\wt\Xca_1\times\wt\Xca_2$ for the multiple access 
channel $\mathrm{Ch}_{\textsl{s-MAC}}$, there exist 
integer numbers $0\leq \delta_i(t) \leq m_i$ such that 
\begin{align}
|\wt\Cca_{i,t}|\aleq q^{\delta_i(t) \left(T-\delta_1(t)-\delta_2(t)\right)}, \qquad i=1,2.
\end{align}
\end{theorem}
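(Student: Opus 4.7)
The plan is to reduce each code-book to the dimension slice of largest cardinality (which is free in the $\aleq$ scale) and then to convert the joint coloring constraint of Definition~\ref{def:col} into a per-user bound by freezing one coordinate. Concretely, I would partition $\wt\Cca_{i,t}=\bigcup_{d=0}^{\min[m_i,T]}\wt\Cca_{i,t}(d)$ as in Definition~\ref{def6} and let $\delta_i(t)$ be the dimension that maximizes $|\wt\Cca_{i,t}(d)|$. Since the number of slices is at most $\min[m_i,T]+1$, a constant in the growing parameter $q$, we get $|\wt\Cca_{i,t}|\aleq |\wt\Cca_{i,t}(\delta_i(t))|$; moreover both $\wt\Cca_{1,t}(\delta_1(t))$ and $\wt\Cca_{2,t}(\delta_2(t))$ are non-empty by construction.

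The main step is to exploit the fact that by \eqref{eq:MAC-transition-matrix} the channel output only depends on $\pi_1+\pi_2$, so unique distinguishability of $(\wt\Cca_{1,t},\wt\Cca_{2,t})$ means that the map $(\pi_1,\pi_2)\mapsto \pi_1+\pi_2$ is injective on $\wt\Cca_{1,t}\times\wt\Cca_{2,t}$. Fixing any $\bar\pi_2\in \wt\Cca_{2,t}(\delta_2(t))$ and restricting to pairs that share this second coordinate, injectivity collapses to: $\pi_1\mapsto \pi_1+\bar\pi_2$ is one-to-one on $\wt\Cca_{1,t}(\delta_1(t))$. The image of this map lies in the set of subspaces of $\Fbb_q^T$ that \emph{contain} $\bar\pi_2$ and have dimension between $\delta_2(t)$ and $\delta_1(t)+\delta_2(t)$; passing to the quotient $\Fbb_q^T/\bar\pi_2$ to count such supersets gives
\[
|\wt\Cca_{1,t}(\delta_1(t))|\;\le\;\sum_{d=\delta_2(t)}^{\delta_1(t)+\delta_2(t)} \gaussnum{T-\delta_2(t)}{d-\delta_2(t)}.
\]
Under the standing hypothesis $T/2\ge m_1+m_2$ of Theorem~\ref{thm:Main_Result_Multpl_Src}, we have $\delta_1(t)\le (T-\delta_2(t))/2$, so the dominant Gaussian coefficient in this (constantly many terms) sum is the one at $d=\delta_1(t)+\delta_2(t)$, which by Lemma~\ref{lem:gauss_Approximation} satisfies $\gaussnum{T-\delta_2(t)}{\delta_1(t)}\= q^{\delta_1(t)(T-\delta_1(t)-\delta_2(t))}$. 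Combined with the reduction from the first paragraph this yields $|\wt\Cca_{1,t}|\aleq q^{\delta_1(t)(T-\delta_1(t)-\delta_2(t))}$. A completely symmetric argument, freezing instead some $\bar\pi_1\in\wt\Cca_{1,t}(\delta_1(t))$, delivers the bound for $i=2$ with the \emph{same} pair $(\delta_1(t),\delta_2(t))$.

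The main obstacle is conceptual rather than computational: the distinguishability hypothesis is a single joint constraint on $\wt\Cca_{1,t}\times\wt\Cca_{2,t}$, and one has to find the right slicing to turn it into a per-user cardinality bound. The trick of freezing one user's subspace and re-interpreting the sum map as an injection into the supersets of that subspace is exactly what delivers the factor $(T-\delta_1(t)-\delta_2(t))$ in the exponent (as opposed to the looser $(T-\delta_i(t))$ one would get from merely sitting inside $\mathrm{Gr}(T,\delta_i(t))$). Once that observation is in hand, the remainder is routine: a quotient-space count of supersets and the identification of the dominant term in a sum of Gaussian coefficients via Lemma~\ref{lem:gauss_Approximation}.
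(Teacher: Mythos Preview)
Your proof is correct and follows the same overall architecture as the paper's: reduce each code-book to its dominating dimension slice, freeze one user's subspace, and bound the other code-book by the number of possible values of the sum $\pi_1+\pi_2$. The difference is in how you execute the count. The paper stratifies $\wt\Cca_2(\delta_2)$ by the intersection dimension $d_{12}=\dim(\pi_1\cap\pi_2)$ and invokes Lemmas~\ref{lem:N} and~\ref{lem:A} to bound the number of colors in each stratum by $n(\delta_1,\delta_2,d_{12})/a(\delta_1,\delta_2,d_{12})\doteq q^{(\delta_2-d_{12})(T-\delta_1-\delta_2+d_{12})}$, then identifies $d_{12}=0$ as the dominant stratum. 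You instead observe directly that the image of $\pi_1\mapsto\pi_1+\bar\pi_2$ lands in the set of supersets of $\bar\pi_2$ of dimension at most $\delta_1+\delta_2$, and count those via the quotient $\Fbb_q^T/\bar\pi_2$ using only Lemma~\ref{lem:gauss_Approximation}. Your count is a (harmless) overcount of the paper's, since not every such superset arises as $\pi_1+\bar\pi_2$ with $\dim(\pi_1)=\delta_1$, but the dominant exponent $\delta_1(T-\delta_1-\delta_2)$ coincides. What you gain is a shorter argument that bypasses Lemmas~\ref{lem:N} and~\ref{lem:A} entirely; what the paper's stratified count gains is a slightly sharper intermediate picture of where the colors come from, which is not needed for the theorem as stated.
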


\begin{proof}
We may drop the time index $t$ in this proof for brevity. 
For a fixed $t$, let $\delta_i$ be the \emph{dominating} 
dimension in the set $\wt\Cca_i$, {\em i.e.}, 
\begin{align*}
\delta_i\triangleq \arg \max_{d} |\wt\Cca_i(d)|, 
\end{align*}
where $\wt\Cca_i(d)$ is as defined in Definition~\ref{def6}. 
It is clear that 
\begin{align}
|\wt\Cca_i|=\sum_d |\wt\Cca_i(d)| \leq m_i |\wt\Cca_i(\delta_i)|\=|\wt\Cca_i(\delta_i)|,
\label{eq:al-al}
\end{align}
where the last asymptotic equality follows from the fact that $m_i$ 
is a constant with respect to the underlying field size $q$.
This means that we may lose only a constant factor in the 
code-book size by removing all subspaces from 
$\wt\Cca_1$ ($\wt\Cca_2$), except the ones that have 
dimension $\delta_1$ ($\delta_2$) . Therefore the loss in 
the rate values would be  negligible as $q$ grows. Consider 
the table constructed for $\wt\Cca_1(\delta_1)$ and 
$\wt\Cca_2(\delta_2)$. Let $\pi_1\in\wt\Cca_1(\delta_1)$ be 
a $\delta_1$-dimensional subspace, and consider the 
corresponding row of the table. We further partition the 
columns of the table with respect to $\pi_1$ into  
$\bigcup_{d_{12}=0}^{\min[\delta_1,\delta_2]}\wt\Cca_2(\pi_1,\delta_2,d_{12})$, where 
\begin{align}
\wt\Cca_2(\pi_1,\delta_2,d_{12})\triangleq\{\pi_2\in\wt\Cca_2(\delta_2): \dim(\pi_1\cap \pi_2)=d_{12} \}.
\end{align}
We use $K(\pi_1,\delta_2)$ and $K(\pi_1,\delta_2,d_{12})$ to 
denote the number of different colors in the row that corresponds 
to $\pi_1$ and its intersection with $\wt\Cca_2(\pi_1,\delta_2,d_{12})$, respectively. 

Note that $\wt\Cca_2(\pi_1,\delta_2,d_{12})\subseteq \mathcal{N}(\pi_1,\delta_2,d_{12})$, 
and therefore the number of different colors that appear in 
this partition of the row, cannot exceed the number of 
colors that could potentially appear if $\mathcal{N}(\pi_1,\delta_2,d_{12})\subseteq \wt\Cca_2$. 
Recall that $\mathcal{N}(\pi_1,\delta_2,d_{12})$ has 
$n(\delta_1,\delta_2,d_{12})$ elements, which are split 
into subsets of size $a(\delta_1,\delta_2,d_{12})$ of 
the same color. Therefore, for a large field size, the number 
of different colors in this partition of the row 
corresponding to $\pi_1$, can be upper bounded as 
\begin{align}
K(\pi_1,\delta_2,d_{12}) \leq \frac{n(\delta_1,\delta_2,d_{12})}{a(\delta_1,\delta_2,d_{12})} \= q^{(\delta_2-d_{12})(T-\delta_1-\delta_2+d_{12})}. 
\end{align}
Hence, 
\begin{align}
K(\pi_1,\delta_2) &= \sum_{d_{12}=0}^{\min[\delta_1,\delta_2]} K(\pi_1,\delta_2, d_{12})\nonumber\\
&\aleq \sum_{d_{12}=0}^{\min[\delta_1,\delta_2] } q^{(\delta_2-d_{12})(T-\delta_1-\delta_2+d_{12})}\nonumber\\
&\= q^{\max_{0\leq d_{12} \leq \min[\delta_1,\delta_2]} (\delta_2-d_{12})(T-\delta_1-\delta_2+d_{12})}\nonumber\\
&= q^{\delta_2(T-\delta_1-\delta_2)}
\end{align}
where the asymptotic inequality and equality hold for 
large $q$. Moreover, the last equality is based on the 
assumption $T\geq 2(m_1+m_2) \geq 2(\delta_1+\delta_2)$ and  
the fact that the exponent is a decreasing function of $d_{12}$ 
for $0\leq d_{12} \leq \min[\delta_1,\delta_2]$. 

It is worth mentioning that this argument holds for each 
choice of $\pi_1 \in \wt\Cca_1(\delta_1)$. This means if 
the first user transmits a $\delta_1$-dimensional subspace, the 
receiver cannot distinguish more that $q^{\delta_2(T-\delta_1-\delta_2)}$ 
different symbols. The same argument holds for a fixed column 
$\pi_2\in\wt\Cca_2$ which yields an upper bound to the number 
of distinguishable messages as $q^{\delta_1(T-\delta_1-\delta_2)}$.  
\end{proof}

Theorem~\ref{thm:OuterBoundColering} essentially upper bounds the single 
letter mutual information $I(\Pi_{X_1t} ; \Pi_{Yt} | \Pi_{X_2 t})$ for 
any time instance $t$. The following proposition summarizes this discussion.

\begin{proposition}\label{pre:OuterBoundColering}
We have $\mathcal{R}\subseteq\mathcal{R}_{\mathrm{col}}$ where
\begin{equation*}
\mathcal{R}_{\mathrm{col}}\triangleq \mathrm{convex~hull} \bigcup_{(d_1,d_2)\in\mathcal{D}_{\mathrm{col}}} \mathcal{R}(d_1,d_2),
\end{equation*}
in which $\mathcal{R}(d_1,d_2)$ is as defined in \eqref{eq:def:Ri}, and 
\begin{align*}
\mathcal{D}_{\mathrm{col}}\triangleq\{(d_1,d_2):\ 0\leq d_i \leq m_i\}.
\end{align*}
\end{proposition}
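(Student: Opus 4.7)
The plan is to lift the cardinality bound of Theorem~\ref{thm:OuterBoundColering} into a single-letter mutual-information bound, and then fold all time indices into the convex-hull statement of $\Rca_{\mathrm{col}}$ via a standard time-sharing argument. By Theorem~\ref{thm:channel_equivalence_MAC} it suffices to work with the subspace channel $\mathrm{Ch}_{\textsl{s-MAC}}$, so throughout I treat inputs and outputs as subspaces.

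First I would start from the usual Fano/MAC converse inequalities already displayed in the excerpt, namely
\begin{align*}
R_1 &\leq \frac{1}{N}\sum_{t=1}^{N} I(\Pi_{X_1 t};\Pi_{Y t}\mid \Pi_{X_2 t}), \\
R_2 &\leq \frac{1}{N}\sum_{t=1}^{N} I(\Pi_{X_2 t};\Pi_{Y t}\mid \Pi_{X_1 t}).
\end{align*}
Next, for each $t$ I would replace the full input alphabets by the $t$-th projections $\wt\Cca_{1,t}, \wt\Cca_{2,t}$ of the codebooks. The key observation is that the channel's transition law depends on $(\pi_1,\pi_2)$ only through $\pi_1+\pi_2$, so the receiver's sufficient statistic is the coloring of $\wt\Cca_{1,t}\times\wt\Cca_{2,t}$ defined in Definition~\ref{def:col}. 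Hence $I(\Pi_{X_1 t};\Pi_{Y t}\mid \Pi_{X_2 t}) \leq H(\Pi_{X_1 t}\mid \Pi_{X_2 t})$, but more importantly reliability of the code forces the pair $(\wt\Cca_{1,t},\wt\Cca_{2,t})$ to be uniquely distinguishable in the sense of Theorem~\ref{thm:OuterBoundColering}; otherwise the receiver could not separate message pairs producing identical sums with positive probability. Applying that theorem yields integers $0\leq \delta_i(t)\leq m_i$ with $|\wt\Cca_{i,t}|\aleq q^{\delta_i(t)(T-\delta_1(t)-\delta_2(t))}$.

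Then I would convert this cardinality bound into a mutual-information bound by using $I(\Pi_{X_i t};\Pi_{Y t}\mid \Pi_{X_{\bar\imath} t})\leq \log_2|\wt\Cca_{i,t}|$, so that
\begin{align*}
I(\Pi_{X_i t};\Pi_{Y t}\mid \Pi_{X_{\bar\imath} t}) \leq \delta_i(t)\bigl(T-\delta_1(t)-\delta_2(t)\bigr)\log_2 q + o(\log q),
\end{align*}
for $i=1,2$. This means the per-letter rate pair at time $t$ lies (up to a vanishing term in $q$) in $\Rca(\delta_1(t),\delta_2(t))$ with $(\delta_1(t),\delta_2(t))\in\Dca_{\mathrm{col}}$. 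Averaging over $t=1,\dots,N$ writes $(R_1,R_2)$ as a convex combination of such corner-region vectors, and letting $N\to\infty$ (and $q$ large for the asymptotic statements) places $(R_1,R_2)$ inside $\mathrm{convex~hull}\bigcup_{(d_1,d_2)\in\Dca_{\mathrm{col}}}\Rca(d_1,d_2)=\Rca_{\mathrm{col}}$, completing the proof.

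The main obstacle I anticipate is justifying rigorously that reliability of the code forces the uniquely-distinguishable hypothesis of Theorem~\ref{thm:OuterBoundColering} and that the coloring bound upper-bounds the conditional mutual information (and not merely the unconditional entropy of each user). Concretely, one must argue that even after conditioning on $\Pi_{X_2 t}$ the effective alphabet for user~$1$ is still controlled by the same $\delta_1(t)$ because the coloring bound on $|\wt\Cca_{1,t}|$ was derived by fixing an arbitrary column $\pi_2\in\wt\Cca_{2,t}$ in the proof of Theorem~\ref{thm:OuterBoundColering}; this gives exactly the row-wise count that governs $I(\Pi_{X_1 t};\Pi_{Y t}\mid \Pi_{X_2 t}=\pi_2)$. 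Handling the $o(\log q)$ slack from the $\aleq$ asymptotics and from the standard Fano term $N\epsilon_N$ simultaneously, without one swallowing the other, is the delicate bookkeeping step.
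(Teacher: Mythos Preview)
Your proposal follows essentially the same route as the paper: start from the single-letter MAC converse bounds, invoke Theorem~\ref{thm:OuterBoundColering} at each time $t$ to obtain per-letter bounds $R_i(t)\aleq \delta_i(t)(T-\delta_1(t)-\delta_2(t))\log_2 q$, and then average over $t$ so that $(R_1,R_2)$ is a convex combination of points in $\bigcup_{(d_1,d_2)\in\Dca_{\mathrm{col}}}\Rca(d_1,d_2)$. The paper's proof is no more detailed than yours on the ``uniquely distinguishable'' hypothesis; your closing remark---that the row-wise color count in the proof of Theorem~\ref{thm:OuterBoundColering} already upper-bounds $I(\Pi_{X_1 t};\Pi_{Y t}\mid \Pi_{X_2 t}=\pi_2)$ directly---is exactly the right way to close that gap, and is implicitly what the paper relies on.
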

\begin{proof}
Using Theorem~\ref{thm:OuterBoundColering}, we can upper bound the 
number of distinguishable pairs for each time instance. For a fixed~$t$, let $\delta_1(t)$ and $\delta_2(t)$ denote the dominating dimensions. 
Therefore, we have
\begin{align}
R_1 &\leq \frac{1}{N} \sum_{t=1}^{N} I(\Pi_{X_1t}; \Pi_{Yt} | \Pi_{X_2t}),\nonumber\\
&\stackrel{\cdot}{\leq} \frac{1}{N} \sum_{t=1}^{N} \log_2 q^{[\delta_1(t)(T-\delta_1(t)-\delta_2(t))]}\nonumber\\
&= \frac{1}{N} \sum_{t=1}^{N} \delta_1(t)(T-\delta_1(t)-\delta_2(t)) \log_2 q,\nonumber
%\label{eq:R1-col}
\end{align}
where $0\leq \delta_i(t) \leq m_i$ for $t=1,\dots,N,$ and $i=1,2$. Similarly, we have
\begin{align}
R_2 \leq \frac{1}{N} \sum_{t=1}^{N} \delta_2(t)(T-\delta_1(t)-\delta_2(t)) \log_2 q.\nonumber
%\label{eq:R2-col}
\end{align}
Therefore, 
\begin{align}
(R_1,R_2) \leq \frac{1}{N} \sum_{t=1}^{N} \left( \delta_1(t)(T-\delta_1(t)-\delta_2(t)) \log_2 q, \delta_2(t)(T-\delta_1(t)-\delta_2(t)) \log_2 q\right).
\label{eq:R12-col}
\end{align}
It is clear that the RHS of \eqref{eq:R12-col} is a convex linear combination of the points 
\begin{align*}
\left\{\delta_1(t)(T-\delta_1(t)-\delta_2(t))\log_2 q , \delta_1(t)(T-\delta_1(t)-\delta_2(t))\log_2 q \right\}_{t=1}^{N} 
\end{align*}
which are in the region $\mathcal{R}(\delta_1(t),\delta_2(t))$. This completes the proof. 
\end{proof}

Summarizing Proposition~\ref{pre:OuterBoundCoop} and 
Proposition~\ref{pre:OuterBoundColering}, we have 
$\mathcal{R} \subseteq \Rca_{\mathrm{coop}} \cap \Rca_{\mathrm{col}}$.  
So, it only remains to prove the following theorem in order 
to show that $\mathcal{R}^*$ is an outer bound for the admissible 
rate region.
 
\begin{theorem}\label{thm:MAC-OB}
We have $\Rca_{\mathrm{coop}}\cap \Rca_{\mathrm{col}}\subseteq \Rca^*$.
\end{theorem}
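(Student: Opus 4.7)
My plan is to show that every extreme point of the convex polytope $\Rca_{\mathrm{coop}}\cap\Rca_{\mathrm{col}}$ lies in $\Rca^*$, from which the inclusion follows by convexity of $\Rca^*$. First I dispose of the trivial case $n\ge m_1+m_2$: here $\min[n,m_i]=m_i$ and $\min[n,m_1+m_2]=m_1+m_2$, so $\mathcal{D}^*=\mathcal{D}_{\mathrm{col}}$, $\Rca^*=\Rca_{\mathrm{col}}$, and the claim is immediate. Henceforth I work in the nontrivial regime $n<m_1+m_2$, in which the cooperative constraint reads $R_1+R_2\le n(T-n)\log_2 q$; the hypothesis $T>2(m_1+m_2)$ guarantees that the map $x\mapsto x(T-x)$ is strictly increasing on $[0,m_1+m_2]$, a monotonicity fact I will use repeatedly.

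Since $\Rca_{\mathrm{col}}=\mathrm{conv}\bigcup_{(d_1,d_2)\in\mathcal{D}_{\mathrm{col}}}\Rca(d_1,d_2)$ is a convex polytope and $\Rca_{\mathrm{coop}}$ is a closed half-plane, every extreme point of $\Rca_{\mathrm{coop}}\cap\Rca_{\mathrm{col}}$ is either (i) an extreme point of $\Rca_{\mathrm{col}}$ satisfying the cooperative inequality, or (ii) an endpoint of the line segment $S_{\mathrm{col}}\triangleq\Rca_{\mathrm{col}}\cap\{R_1+R_2=n(T-n)\log_2 q\}$. Moreover, any extreme point of $\Rca_{\mathrm{col}}$ is a Pareto-optimal top-right corner $(d_1(T-s),d_2(T-s))\log_2 q$ of some rectangle $\Rca(d_1,d_2)$ with $(d_1,d_2)\in\mathcal{D}_{\mathrm{col}}$ and $s=d_1+d_2$.

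For type (i), the cooperative inequality yields $s(T-s)\le n(T-n)$, and monotonicity of $x(T-x)$ forces $s\le n$. Then $d_i\le s\le n$ and $d_1+d_2\le n=\min[n,m_1+m_2]$, hence $(d_1,d_2)\in\mathcal{D}^*$ and the corner lies in $\Rca(d_1,d_2)\subseteq\Rca^*$. For type (ii), I will show that $S_{\mathrm{col}}$ coincides with $S^*\triangleq\Rca^*\cap\{R_1+R_2=n(T-n)\log_2 q\}\subseteq\Rca^*$. By symmetry it suffices to match the maximum value of $R_1$ on the two segments. In $\Rca^*$ this maximum is $\min[n,m_1](T-n)\log_2 q$, attained at $(d_1^\star,d_2^\star)=(\min[n,m_1],n-\min[n,m_1])\in\mathcal{D}^*$. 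For $S_{\mathrm{col}}$, any rectangle $\Rca(d_1,d_2)$ contributing to the intersection must have $s\ge n$ (otherwise $s(T-s)<n(T-n)$ and the rectangle misses the sum-line), and on that rectangle $R_1\le\min\bigl(d_1(T-s),\,n(T-n)\bigr)\log_2 q$; a short case split on whether $d_1\le n$ or $d_1>n$ (the latter only possible when $m_1>n$), combined with the constraint $s\ge n$ and the same monotonicity, shows that this upper bound never exceeds $\min[n,m_1](T-n)\log_2 q$ and is attained by the same $\mathcal{D}^*$ corner.

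The main obstacle in the plan is part~(ii): one must verify that the ``oversize'' rectangles in $\mathcal{D}_{\mathrm{col}}\setminus\mathcal{D}^*$ (those with $d_1+d_2>n$ or $d_i>n$) do not stretch the sum-line intersection segment beyond what is already supplied by $\mathcal{D}^*$. The cooperative cap $R_1+R_2\le n(T-n)\log_2 q$ is precisely the ingredient that neutralizes these oversize rectangles, so that matching the endpoints of $S_{\mathrm{col}}$ and $S^*$ completes the argument.
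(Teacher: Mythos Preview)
Your plan mirrors the paper's proof: reduce to extreme points of the polytope $\Rca_{\mathrm{coop}}\cap\Rca_{\mathrm{col}}$ and treat separately (i) vertices of $\Rca_{\mathrm{col}}$ that survive the cooperative cut and (ii) the points where the sum-line $L=\{R_1+R_2=n(T-n)\log_2 q\}$ meets $\partial\Rca_{\mathrm{col}}$. Your handling of type~(i) is correct and in fact a bit more direct than the paper's, which invokes Lemma~\ref{lem:corner_col} to enumerate the vertices of $\Rca_{\mathrm{col}}$ explicitly before applying the same monotonicity of $x\mapsto x(T-x)$.

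The gap is in type~(ii). You bound $\max R_1$ on $S_{\mathrm{col}}$ by restricting to rectangles $\Rca(d_1,d_2)$ with $s=d_1+d_2\ge n$ and using $R_1\le\min\bigl(d_1(T-s),\,n(T-n)\bigr)$ on each. This bounds $\max R_1$ on $\bigl(\bigcup_{(d_1,d_2)}\Rca(d_1,d_2)\bigr)\cap L$, not on $\mathrm{conv}\bigl(\bigcup_{(d_1,d_2)}\Rca(d_1,d_2)\bigr)\cap L=S_{\mathrm{col}}$, and for generic families of origin-anchored rectangles these two maxima differ: take $\Rca_1=[0,3]\times\{0\}$, $\Rca_2=[0,1]\times[0,10]$ and $L:R_1+R_2=5$; the union meets $L$ only inside $\Rca_2$ with $\max R_1=1$, whereas the hull meets $L$ along the segment from $(0,5)$ to $(2.5,2.5)$, so $\max R_1=2.5$. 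Thus the inference ``on each contributing rectangle $R_1\le\ldots$, hence on $S_{\mathrm{col}}$ $R_1\le\ldots$'' is not valid in general.

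What rescues the conclusion here is a structural fact your argument does not supply: along the Pareto boundary of $\Rca_{\mathrm{col}}$ the vertex sums $s(T-s)$ are attained for \emph{every} integer $s$ between $\min(m_1,m_2)$ and $m_1+m_2$, so when $n$ lies in that range $L$ hits the boundary exactly at a vertex (and when $n<\min(m_1,m_2)$ it hits on the axes). The paper supplies precisely this via Lemma~\ref{lem:corner_col} (the vertex list $\widetilde{\mathcal D}$) and Lemma~\ref{lem:intersect_col_coop}, which pins the two crossings at $(d_1,d_2)=((n-m_2)^+,\min[m_2,n])$ and $(\min[m_1,n],(n-m_1)^+)$, both with $d_1+d_2=n$ and hence in $\mathcal D^*$. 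Once you know the endpoints of $S_{\mathrm{col}}$ are themselves rectangle corners, your rectangle-wise bound becomes legitimate; without it, the step is unjustified. The fix is short---either invoke the paper's two lemmas, or argue directly that integrality of $n$ forces $L$ to pass through a vertex of $\Rca_{\mathrm{col}}$---but it is not optional.
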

Before presenting the proof of the theorem, we give the 
following two lemmas, which help us to characterize the corner 
points of the region of our interest. 

\begin{lemma}\label{lem:corner_col}
The set of corner points
of $\Rca_{\mathrm{col}}$ is the set of all rate pairs of the form 
\[
(R_1,R_2)= \left(R_1(d_1,d_2),R_2(d_1,d_2)\right),
\] 
for some $(d_1,d_2)\in\widetilde{\mathcal{D}}$, where
\begin{align*}
\widetilde{\mathcal{D}}=\{&(0,m_2),(1,m_2),\dots,(m_1,m_2),
(m_1,m_2-1),\dots,(m_1,1),(m_1,0)\}.
\end{align*}
\end{lemma}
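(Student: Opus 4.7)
The plan is to characterize the corner points of $\mathcal{R}_{\mathrm{col}}$ as unique maximizers of nonnegative linear functionals. Since $\mathcal{R}_{\mathrm{col}}$ is the convex hull of finitely many axis-aligned rectangles $\mathcal{R}(d_1, d_2) = [0, R_1(d_1,d_2)] \times [0, R_2(d_1,d_2)]$, it is a polytope, and its Pareto-boundary vertices---the ``corner points''---are exactly those $(R_1, R_2)$ that are the unique maximizer, over $(d_1, d_2) \in \mathcal{D}_{\mathrm{col}}$, of some functional $(R_1, R_2) \mapsto \alpha R_1 + \beta R_2$ with $\alpha, \beta \geq 0$ not both zero. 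Any such maximum is attained at a northeast corner $P(d_1,d_2) := (R_1(d_1,d_2), R_2(d_1,d_2))$, so it suffices to determine which $P(d_1,d_2)$ arise this way.

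First I would show that every such maximizer lies in $\widetilde{\mathcal{D}}$. Writing $f(d_1, d_2) = (\alpha d_1 + \beta d_2)(T-d_1-d_2)$, direct computation gives
\[
\frac{\partial f}{\partial d_1} = \alpha T - 2\alpha d_1 - (\alpha+\beta)\,d_2.
\]
For $\alpha \geq \beta \geq 0$ with $\alpha>0$, this derivative restricted to $\mathcal{D}_{\mathrm{col}}$ is bounded below by $\alpha(T - 2m_1 - 2m_2) + (\alpha - \beta)\,m_2 \geq 0$, using the hypothesis $T \geq 2(m_1+m_2)$. Since the derivative is linear in $d_1$ and nonnegative on $[0, m_1]$, $f(\cdot, d_2)$ is monotone non-decreasing and vanishes at most at one point, yielding $f(m_1, d_2) > f(d_1, d_2)$ for every $d_1 < m_1$. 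Thus any maximizer has $d_1 = m_1$. The symmetric case $\beta \geq \alpha > 0$ forces $d_2 = m_2$, and the degenerate cases $\alpha = 0$ or $\beta = 0$ give $(0, m_2)$ or $(m_1, 0)$ directly. In every case the maximizer lies in $\widetilde{\mathcal{D}}$.

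Next I would show each point of $\widetilde{\mathcal{D}}$ is attained as a unique maximizer. For the axis corners $(m_1, 0)$ and $(0, m_2)$, take $(\alpha, \beta) = (1, 0)$ and $(0, 1)$ respectively, invoking monotonicity of $d \mapsto d(T - d)$ on $[0, T/2]$. For $(m_1, m_2)$, take $\alpha = \beta = 1$: the functional becomes $s(T-s)\log_2 q$ with $s = d_1 + d_2$, strictly increasing on $[0, T/2]$, so uniquely maximized on $\mathcal{D}_{\mathrm{col}}$ at the only point realizing $s = m_1 + m_2$, namely $(m_1, m_2)$. For a generic top-edge corner $(d_1^0, m_2)$ with $0 < d_1^0 < m_1$, set $\alpha = m_2$ and $\beta = T - m_2 - 2d_1^0$; one checks $\beta > \alpha > 0$ under $T > 2(m_1 + m_2)$, so by the previous paragraph the maximum is forced onto the edge $d_2 = m_2$, and on that edge $f(d_1, m_2)$ is strictly concave in $d_1$ with first-order condition solving to $d_1 = d_1^0$. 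The right-edge corners $(m_1, d_2^0)$ are handled symmetrically. The main obstacle is this final construction: one must choose $(\alpha, \beta)$ so that the maximum is both strictly confined to the correct edge and uniquely pinned to the prescribed interior point on it, which rests on the strict hypothesis $T/2 > m_1 + m_2$ from Theorem~\ref{thm:Main_Result_Multpl_Src}.
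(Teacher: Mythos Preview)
Your proof is correct and takes a genuinely different route from the paper's argument. The paper proceeds in two elementary steps: first, for any $(d_1,d_2)$ with $d_1<m_1$ and $d_2<m_2$, it exhibits an explicit $\lambda\in(0,1)$ such that the point $P(d_1,d_2)$ is strictly dominated by the convex combination $\lambda P(d_1+1,d_2)+(1-\lambda)P(d_1,d_2+1)$, which eliminates all interior grid points; second, it verifies directly that the slopes of the segments joining consecutive points of $\widetilde{\mathcal D}$ are negative and strictly decreasing, so that these points form a concave chain and are therefore all extreme. Your approach instead appeals to the supporting-hyperplane characterization of vertices: you show that every nonnegative linear functional is maximized on the edge $d_1=m_1$ or $d_2=m_2$ via a monotonicity argument in the continuous variable, and then you manufacture, for each target corner, an explicit functional $(\alpha,\beta)$ whose unique maximizer is that corner. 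The paper's method is more hands-on and requires no convex-geometry vocabulary, while yours is more systematic and makes the role of the hypothesis $T\ge 2(m_1+m_2)$ very transparent (it is exactly what makes $\partial f/\partial d_1$ nonnegative on the box). One small remark: in your final step you invoke the strict inequality $T>2(m_1+m_2)$, but in fact your own computation gives $\beta-\alpha=T-2m_2-2d_1^0>T-2m_2-2m_1\ge 0$ whenever $d_1^0<m_1$, so the weak hypothesis $T\ge 2(m_1+m_2)$ already suffices throughout.
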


\begin{lemma}\label{lem:intersect_col_coop}
If $\Rca_{\mathrm{col}} \nsubseteq \Rca_{\mathrm{coop}}$, then 
any intersecting point of $R_1+R_2=k(T-k)\log_2 q$ with the 
boundary of $\Rca_{\mathrm{col}}$ is a point of the form  
$(R_1(d_1,d_2), R_2(d_1,d_2))$, where 
\begin{align*}
(d_1,d_2) \in \widetilde{\mathcal{D}} \cup \{(m_1-1,0),\dots,(0,0),(0,1),\dots,(0,m_2-1)\}.
\end{align*}
That is, the boundaries of $\Rca_{\mathrm{col}}$ 
and $\Rca_{\mathrm{coop}}$ can only intersect on either the 
corner points of $\Rca_{\mathrm{col}}$ or the $R_1-R_2$ axes. 
\end{lemma}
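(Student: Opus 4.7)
The plan is to exploit the explicit Pareto structure of $\Rca_{\mathrm{col}}$ supplied by Lemma~\ref{lem:corner_col} together with a simple integrality argument based on the strict monotonicity of $s \mapsto s(T-s)$. Since $\Rca_{\mathrm{col}}$ is the convex hull of axis-aligned rectangles rooted at the origin, its boundary decomposes cleanly into (i) a piecewise-linear \emph{Pareto path} joining consecutive images of the ordered corners in $\widetilde{\mathcal{D}}$, and (ii) two axis segments running from the origin up to the extreme Pareto corners $(R_1(m_1,0),0)$ and $(0,R_2(0,m_2))$. I will label the list as $P_0,P_1,\dots,P_{m_1+m_2}$ and, for $P_j=(d_1^{(j)},d_2^{(j)})$, set $s_j \triangleq d_1^{(j)}+d_2^{(j)}$, so that
\[
R_1(P_j)+R_2(P_j) \;=\; s_j(T-s_j)\log_2 q,
\]
where consecutive $s_j$'s differ by exactly $1$ (increasing along the first leg from $m_2$ to $m_1+m_2$, decreasing along the second leg back down to $m_1$).

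Next I will fix any intersection of $L:\,R_1+R_2=k(T-k)\log_2 q$ with a Pareto edge joining $P_j$ and $P_{j+1}$. Along that straight segment in the $(R_1,R_2)$-plane the sum $R_1+R_2$ is the linear interpolation of the two endpoint values $s_j(T-s_j)\log_2 q$ and $s_{j+1}(T-s_{j+1})\log_2 q$. Because $T>2(m_1+m_2)$, every relevant $s_j$ lies strictly below $T/2$, and hence $s\mapsto s(T-s)$ is strictly increasing on the range of interest. If $L$ met the interior of this edge, then since the integer $k$ satisfies $k\leq m_1+m_2<T/2$ we would need $k$ to lie strictly between the consecutive integers $s_j$ and $s_{j+1}$, which is impossible. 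Consequently every Pareto-part intersection occurs exactly at some $P_j\in\widetilde{\mathcal{D}}$.

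The two axis segments are handled separately but analogously. If $L$ meets the $R_1$-axis portion $\{(R_1,0):0\leq R_1\leq R_1(m_1,0)\}$ of the boundary, the intersection point is $(k(T-k)\log_2 q,0)=(R_1(k,0),R_2(k,0))$; lying on that segment forces $k(T-k)\leq m_1(T-m_1)$ which, combined with $k,m_1<T/2$, gives $k\leq m_1$. Therefore the parameter pair $(k,0)$ belongs to $\widetilde{\mathcal{D}}\cup\{(m_1-1,0),\dots,(0,0)\}$. The $R_2$-axis case is symmetric and produces a point $(0,k)$ with $k\leq m_2$, also belonging to the extended set. Finally, the hypothesis $\Rca_{\mathrm{col}}\nsubseteq\Rca_{\mathrm{coop}}$ rules out the degenerate situation $k=m_1+m_2$ (in which $L$ would be tangent to $\Rca_{\mathrm{col}}$ at the single corner $(m_1,m_2)$), so there is no case to worry about there.

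I expect the only real obstacle to be the integrality step in paragraph two: one must verify that the parameter $s$ takes consecutive integer values at adjacent Pareto corners, and that the interpolation happens genuinely in the $(R_1,R_2)$-plane rather than in a reparameterization that could place the value $k(T-k)\log_2 q$ strictly between two consecutive corner sums. Once this is pinned down, the remaining bookkeeping for the axis segments and for checking that each admissible pair falls in the announced list $\widetilde{\mathcal{D}}\cup\{(m_1-1,0),\dots,(0,0),(0,1),\dots,(0,m_2-1)\}$ is a straightforward case check that uses only $k,m_1,m_2<T/2$.
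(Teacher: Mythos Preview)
Your proposal is correct but takes a genuinely different route from the paper. The paper's proof is much shorter and more direct: it first notes that the hypothesis $\Rca_{\mathrm{col}}\nsubseteq\Rca_{\mathrm{coop}}$ forces $n<m_1+m_2$ (so $k=n$), then invokes convexity of $\Rca_{\mathrm{col}}$ to conclude that the line $R_1+R_2=n(T-n)\log_2 q$ meets its boundary in \emph{exactly two} points, and finally simply exhibits two explicit parameter pairs, $((n-m_2)^+,\min[m_2,n])$ and $(\min[m_1,n],(n-m_1)^+)$, verifies that their images lie simultaneously on the boundary of $\Rca_{\mathrm{col}}$ and on the line, and concludes these must be the two intersections. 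Your argument instead dissects the boundary into the Pareto path of Lemma~\ref{lem:corner_col} plus the two axis segments, and uses the integrality observation that consecutive corner sums $s_j=d_1^{(j)}+d_2^{(j)}$ differ by exactly one, so under the strictly increasing map $s\mapsto s(T-s)$ on $[0,T/2)$ the integer $k$ cannot land strictly between adjacent corner values. The paper's approach is slicker and, as a bonus, names the two intersection points explicitly (which is exactly what the downstream corner-counting Corollary~\ref{cor:NumCornerPointsMAC} needs); your approach is more structural, explains \emph{why} no intermediate point can occur, and would adapt more readily to variants where one cannot guess the intersection points in advance.
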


\begin{proof}[Proof of Theorem~\ref{thm:MAC-OB}]
Note that $ \Rca_{\mathrm{coop}} \cap \Rca_{\mathrm{col}}$ 
is a convex polytope, formed as intersection of a polytope 
and the convex hull of a finite number of polytopes. 
Therefore, it suffices to prove the theorem only for its 
corner points. Let $(R_1,R_2)\in \Rca_{\mathrm{coop}} \cap \Rca_{\mathrm{col}}$ 
be a corner point. It is clear that one of the followings occurs. 
\begin{description}
\item{(i)} $(R_1,R_2)$ is a corner point of $\Rca_{\mathrm{col}}$ 
and interior point of $\Rca_{\mathrm{coop}}$;
\item{(ii)}  $(R_1,R_2)$ is an intersecting point of the 
boundaries of $\Rca_{\mathrm{col}}$ and $\Rca_{\mathrm{coop}}$. 
\end{description}
In the former case, Lemma~\ref{lem:corner_col} which characterizes 
the set of corner points of $\Rca_{\mathrm{col}}$, implies there 
exists a pair $(d_1,d_2)\in\wt{\mathcal{D}}$ such that 
$(R_1,R_2)=(R_1(d_1,d_2),R_2(d_1,d_2))$.  Also 
$(R_1,R_2)\in\Rca_{\mathrm{coop}}$ implies
\[(d_1+d_2)(T-(d_1+d_2))\log_2 q = R_1+R_2 \leq k(T-k) \log_2 q.\]
Note that the function $f(x)\triangleq x(T-x)$ is an 
increasing function of $x$ for $x\in(0,T/2)$. Therefore, 
$d_1+d_2\leq k = \min\{m_1+m_2, n\}$, and hence 
$(d_1,d_2)\in\mathcal{D}^*$, which implies that $(R_1,R_2)\in\Rca^*$. 

In the latter case, it follows from Lemma~\ref{lem:intersect_col_coop} 
that $(R_1,R_2)$ should be either a corner point of $\Rca_{\mathrm{col}}$ 
for which the above argument holds, or of the form 
$(R_1,R_2)=(R_1(d_1,d_2),R_2(d_1,d_2))$ with $d_1d_2=0$. 
Again $(R_1,R_2)\in\Rca_{\mathrm{coop}}$, which implies that 
$d_1+d_2\leq k= \min\{m_1,m_2,n\}$, and $(R_1,R_2)\in\Rca^*$. 
This completes the proof.
\end{proof}

\begin{corollary}\label{cor:NumCornerPointsMAC}
The number of corner points of the rate region $\Rca^*$ 
excluding the point $(0,0)$ is equal to
\[
\min\left[m_1, (n-m_2)^+\right] + \min\left[m_2, (n-m_1)^+\right] + 2 - \1_{\{n\ge m_1+m_2\}}.
\]
\end{corollary}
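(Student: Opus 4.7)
The plan is to identify corner points of $\mathcal{R}^*$ with the upper-right corners $P(d_1,d_2)\triangleq(R_1(d_1,d_2),R_2(d_1,d_2))$ of the constituent rectangles, and then count them via a case analysis on the relative sizes of $m_1,m_2,n$. Since each $\mathcal{R}(d_1,d_2)$ is an axis-aligned rectangle with a corner at the origin, $\mathcal{R}^*$ is down-closed in the positive quadrant, so every extreme point other than $(0,0)$ must coincide with some $P(d_1,d_2)$ with $(d_1,d_2)\in\mathcal{D}^*$.

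Next I would establish two structural observations that do the real work. (a) For any fixed $s=d_1+d_2$, the points $P(d_1,d_2)$ all lie on the line $R_1+R_2=s(T-s)\log_2 q$; hence when $n<m_1+m_2$ and the constraint $d_1+d_2\le n$ is active, all $P(d_1,d_2)$ with $d_1+d_2=n$ are collinear, so only the two endpoints of that edge can be corners of $\mathcal{R}^*$. (b) Along an edge $d_1=m_1$ (present exactly when $m_1\le n$), a direct computation shows that for each admissible $d_2$ the midpoint of $P(m_1,d_2-1)$ and $P(m_1,d_2+1)$ shares the $R_1$-coordinate of $P(m_1,d_2)$ but is lower by $\log_2 q$ in the $R_2$-direction; thus each $P(m_1,d_2)$ is extreme, and the symmetric statement holds for $d_2=m_2$. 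The hypothesis $T/2>m_1+m_2$ is then used to verify that the slopes of the successive boundary segments of the Pareto frontier form a nondecreasing sequence, so these extreme points do lie on the boundary of $\mathcal{R}^*$.

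Combining (a) and (b), the corners are counted case by case on the shape of $\mathcal{D}^*$. If $n\ge m_1+m_2$, $\mathcal{D}^*$ is the rectangle $[0,m_1]\times[0,m_2]$ and the corners are the $m_1+m_2+1$ points on its upper-right boundary $\{(m_1,d_2):0\le d_2\le m_2\}\cup\{(d_1,m_2):0\le d_1<m_1\}$. If $\max[m_1,m_2]\le n<m_1+m_2$, the $d_1=m_1$ and $d_2=m_2$ edges contribute $n-m_1+1$ and $n-m_2+1$ disjoint corners, for a total of $2n-m_1-m_2+2$. If $m_2\le n<m_1$ (and symmetrically $m_1\le n<m_2$), only the $d_2=m_2$ edge survives, giving $n-m_2+1$ corners, plus the isolated point $P(n,0)$, for $n-m_2+2$. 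Finally, if $n<\min[m_1,m_2]$, all Pareto-optimal pairs lie on $d_1+d_2=n$ and only the two endpoints $P(n,0),P(0,n)$ remain, giving $2$. A short check shows that in each case the count equals $\min[m_1,(n-m_2)^+]+\min[m_2,(n-m_1)^+]+2-\1_{\{n\ge m_1+m_2\}}$. The delicate part will be the bookkeeping at the junctions between the $d_1=m_1$ edge, the diagonal $d_1+d_2=n$, and the $d_2=m_2$ edge in the intermediate cases, where shared endpoints must be counted exactly once and the slope transitions must be verified to preserve convexity.
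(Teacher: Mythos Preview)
Your proposal is correct and the underlying geometry is the same as the paper's, but the organization differs. The paper does not work directly from the definition of $\Rca^*$ as a convex hull over $\mathcal{D}^*$; instead it invokes the already-proved identity $\Rca^*=\Rca_{\mathrm{coop}}\cap\Rca_{\mathrm{col}}$ together with Lemma~\ref{lem:corner_col} (which identifies the corners of $\Rca_{\mathrm{col}}$ as the points $P(d_1,d_2)$ with $(d_1,d_2)\in\wt{\mathcal{D}}=\{(0,m_2),\ldots,(m_1,m_2),\ldots,(m_1,0)\}$) and Lemma~\ref{lem:intersect_col_coop} (which says the cooperative line $R_1+R_2=k(T-k)\log_2 q$ can meet the boundary of $\Rca_{\mathrm{col}}$ only at such corner points or on the axes). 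The count then drops out immediately: the surviving corners are $\{(0,m_2),\ldots,(\alpha,m_2),(m_1,\beta),\ldots,(m_1,0)\}$ with $\alpha=\min[m_1,(n-m_2)^+]$ and $\beta=\min[m_2,(n-m_1)^+]$, giving $\alpha+\beta+2-\1_{\{n\ge m_1+m_2\}}$ with no case split needed.

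Your observations (a) and (b) are precisely the content of those two lemmas, re-derived inline: the collinearity on $d_1+d_2=s$ is what makes the intersection with $\Rca_{\mathrm{coop}}$ land on corner points, and the midpoint/slope computation along $d_1=m_1$ is exactly the convexity check in the proof of Lemma~\ref{lem:corner_col}. So your route is self-contained but longer, while the paper's is short because the heavy lifting was already done for the converse; either way the argument is the same.
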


\begin{proof}
By Lemma~\ref{lem:corner_col} the set of corner points of 
region $\Rca_{\mathrm{col}}$ correspond to the pairs $(d_1,d_2)$ 
which belong to the set $\{(0,m_2)...(m_1,m_2)...(m_1,0)\}$. 
In this case the number of corner points excluding 
$(R_1,R_2)=(0,0)$ is $m_1+m_2+1$.

However the final rate region is the intersection of $\Rca_{\mathrm{col}}$ 
and $\Rca_{\mathrm{coop}}$, where the later one includes 
all the rate pairs with sum smaller 
than $k(T-k)\log_2{q}$, $k=\min[m_1+m_2,n]$, see 
Proposition~\ref{pre:OuterBoundCoop}.

Lemma~\ref{lem:intersect_col_coop} explains how these 
two regions intersect with each other. In this 
case, the corner points correspond to the pairs $(d_1,d_2)$ 
which belong to the set $\{(0,m_2),\ldots,(\alpha,m_2),(m_1,\beta),\ldots,(m_1,0) \}$ 
where $\alpha = \min[m_1, (n-m_2)^+]$ and $\beta = \min[m_2, (n-m_1)^+]$.
So the number of corner points excluding $(0,0)$ is
\[
\alpha + \beta + 2 - \1_{\{n\ge m_1+m_2\}},
\]
where $\1_{\{n\ge m_1+m_2\}}$ takes into account the case where 
two points $(\alpha,m_2)$ and $(m_1,\beta)$ overlap with each other.
\end{proof}

%--------------------------------------------------------------------
% Section -----------------------------------------------------------
%\input{Conclusion}
\section{Conclusions}\label{sec:Conclusion}

In this paper, we used a random matrix channel to model the problem of
multicasting over a packet network that employs randomized network
coding. We calculated the capacity of this channel for the case where
the finite field of operation $\mathbb{F}_q$ is large, but showed
through simulation results fast convergence for small values of $q$.
We prove that use of subspace coding, proposed for algebraic coding in
\cite{KoetKsch-IT08-erasure,SKK08}, is optimal for this channel. Moreover, we showed that the
capacity achieving distribution for very small packet lengths uses
subspaces of all dimensions, while as the packet length increases, the
number of required dimensions in the optimal distribution decreases.
In particular, the choice of the subspace dimension used in the
seminal work of Koetter and Kschischang \cite{KoetKsch-IT08-erasure}
is indeed optimal for large enough packet size.  We extended our work
to the case of multiple access with two sources, where we used a
coloring argument to derive an outer bound for the capacity that we
believe is interesting in itself. We showed that in all the cases we examined,
the throughput benefits subspace coding offers  as compared to the use of
coding vectors go to zero as the alphabet size $q$ increases,
and thus use of coding vectors is (asymptotically) optimal.

\section*{Acknowledgements}
The work of S.~Mohajer and C.~Fragouli was supported in part by the
ERC Starting Investigator grant \# 240317. The work of
M.~Jafari~Siavoshani and C. Fragouli was supported in part by the
Swiss National Science Foundation through the grant \#
PP002-110483. We would like to thank the anonymous reviewers for
detailed comments that greatly enhanced the paper. In particular, one
of the reviewers suggested an alternate proof for Theorem
\ref{thm:Main_Result_Single_Src}, which we have included in the paper
in \S\ref{subsec:Capacity-UpperLowerBound}. Our original proof of the
result is used in the proof of Theorem
\ref{thm:Main_Result_Single_Src_Dist} which gives a non-asymptotic
characterization.

\appendices

% Section -----------------------------------------------------------
%\input{Apendix}
%--------------------------------------------------------------------
%--------------------------------------------------------------------
\section{Proofs}\label{sec:apndx1}

%--------------------------------------------------------------------
\begin{proof}[Proof of Theorem~\ref{thm:channel_equivalence_P2P}]
To prove the theorem, we start with $I(X;Y)$ for the channel $\mathrm{Ch}_{\textsl{m}}$, 
stated in \eqref{eq:I_1}, where the
channel transition probability is given in \eqref{eq:P2P_channel_transfer_prob_1}. 
We will show that for 
each input distribution $P_X(x)$ there exists an input distribution 
$P_{\Pi_X}(\pi_x)$ for the channel $\mathrm{Ch}_{\textsl{s}}$ such that 
$I(X;Y)=I(\Pi_Y;\Pi_X)$ and vice versa.
%\begin{align*}
%I(X;Y) = \sum_{x\in\Xca,\ y\in\Yca} P_X(x) P_{Y|X}(y|x)\log_2 \left( \frac{P_{Y|X}(y|x)}{P_Y(y)} \right).
%\end{align*}

We know that $P_{Y|X}(y|x)=P_{Y|X}(y|x')$ if $\sspan{x}=\sspan{x'}$. 
So we can write
\begin{align*}
I(X;Y) = \sum_{\pi_x\in\wt{\Xca},\ y\in\Yca} P_{\Pi_X}(\pi_x) P_{Y|\Pi_X}(y|\pi_x)\log_2 \left( \frac{P_{Y|\Pi_X}(y|\pi_x)}{P_Y(y)} \right),
\end{align*}
where we choose $P_{\Pi_X}(\pi_x) = \sum_{x\in\Xca:\sspan{x}=\pi_x} P_X(x)$ 
and define
\[
P_{Y|\Pi_X}(y|\pi_x) \triangleq \left\{ \begin{array}{ll} q^{-n\dim(\pi_x)} & \sspan{y}\sqsubseteq \pi_x,\\ 
0 & \text{otherwise}.
\end{array} \right.
\]
Then expanding $I(X;Y)$ we have
\[
I(X;Y) = \sum_{\pi_x\in\wt{\Xca}} P_{\Pi_X}(\pi_x) \sum_{\pi_y\in\wt{\Yca}} \sum_{\begin{subarray}{c} y\in\Yca,\\ \sspan{y}=\pi_y \end{subarray}}  P_{Y|\Pi_X}(y|\pi_x)\log_2 \left( \frac{P_{Y|\Pi_X}(y|\pi_x)}{P_Y(y)} \right).
\]
Now using the symmetry properties of $P_{Y|\Pi_X}(y|\pi_x)$ we can 
simplify $I(X;Y)$. In fact $P_{Y|\Pi_X}(y_1|\pi_x)=P_{Y|\Pi_X}(y_2|\pi_x)$ 
and $P_Y(y_1)=P_Y(y_2)$ if $\sspan{y_1}=\sspan{y_2}$. So we can 
remove the summation over $y$ and write
\[
I(X;Y) = \sum_{\pi_x\in\wt{\Xca}} P_{\Pi_X}(\pi_x) \sum_{\pi_y\in\wt{\Yca}} \psi(T,n,\pi_y) P_{Y|\Pi_X}(y|\pi_x)\log_2 \left( \frac{P_{Y|\Pi_X}(y|\pi_x)}{P_Y(y)} \right),
\]
for some matrix $y$ such that $\sspan{y}=\pi_y$. Remember that 
$\psi(T,n,\pi_y)$ is defined in Definition~\ref{def:psi}, \S\ref{sec:ChannelModel-Notations}. 
Defining $P_{\Pi_Y|\Pi_X}(\pi_y|\pi_x)\triangleq \psi(T,n,\pi_y) \left. P_{Y|\Pi_X}(y|\pi_x)\right|_{\textrm{for some } y:\sspan{y}=\pi_y}$, 
we can write
\[
I(X;Y) = \sum_{\pi_x\in\wt{\Xca},\pi_y\in\wt{\Yca}} P_{\Pi_X}(\pi_x) P_{\Pi_Y|\Pi_X}(\pi_y|\pi_x) \log_2 \frac{P_{\Pi_Y|\Pi_X}(\pi_y|\pi_x)}{P_{\Pi_Y}(\pi_y)} = I(\Pi_X;\Pi_Y).
\]
Based on the above discussion going back from the channel 
$\mathrm{Ch}_{\textsl{s}}$ to $\mathrm{Ch}_{\textsl{m}}$ is very 
easy. It is sufficient to choose 
\[
P_X(x) = \frac{P_{\Pi_X}(\pi_x)}{\psi(T,m,\pi_x)},\quad \forall x:\ \sspan{x}=\pi_x,
\]
for all $\pi_x\in\wt\Xca$. This completes the proof.
\end{proof}

%--------------------------------------------------------------------
\begin{proof}[Proof of Lemma~\ref{lem:psi_value}]
We want to count the number of different matrices $\mathbf{X}\in\mathbb{F}_q^{n\times T}$ such that $\sspan{\mathbf{X}}=\pi_d$ where $\pi_d$ is an specific $d$ dimensional subspace of $\mathbb{F}_q^T$. 

We know that we can decompose $\mathbf{X}$ as
\begin{equation*}
\mathbf{X}=\mathbf{A}\mathbf{B},\quad \mathbf{A}\in\mathbb{F}_q^{n\times d}, \mathbf{B}\in\mathbb{F}_q^{d\times T},
\end{equation*}
where $\mathbf{A}$ and $\mathbf{B}$ are full rank matrices. Let us fix $\mathbf{B}$ such that $\sspan{\mathbf{B}}=\pi_d$. Now for every two different full rank matrices $\mathbf{A}$ and $\mathbf{A}'$ we would obtain different matrices $\mathbf{X}=\mathbf{A}\mathbf{B}$ and $\mathbf{X}'=\mathbf{A}'\mathbf{B}$ such that $\mathbf{X}\neq \mathbf{X}'$ and $\sspan{\mathbf{X}}=\sspan{\mathbf{X}'}=\pi_d$. So the number of different $\mathbf{X}$ where $\sspan{\mathbf{X}}=\pi_d$ is equal to the number of full rank $n\times d$ matrices over $\mathbb{F}$ which is equal to
%\begin{equation*}
$\prod_{i=0}^{d-1} (q^n-q^i)$,
%\end{equation*}
and we are done.
\end{proof}

%--------------------------------------------------------------------
% \begin{proof}[Proof of Lemma~\ref{lem:psi_Recursive}]
% Let $\Mca_d$ be the set of different $n\times T$ matrices over $\Fbb_q$ such that their rows span a $d$-dimensional subspace of $\Fbb_q^T$. As mentioned before in \S\ref{sec:PrelimLemma} we have $|\Mca_d|=\psi(n,d) \gaussnum{T}{d}$. The sets $\Mca_d$ are disjoint for different values of $d$ and we have
% $\bigcup_{d=0}^{\min[n,T]} \Mca_d= \Fbb_q^{n\times T}$,
% so we conclude the lemma assertion.
% \end{proof}

%--------------------------------------------------------------------
\begin{proof}[Proof of Lemma~\ref{lem:uniform_dist}]
Let $P_{\Pi_X}(\pi_x)$ be the optimal input distribution of the channel $\mathrm{Ch}_{\textsl{s}}$ with transition probabilities given in (\ref{eq:P2P_channel_transfer_prob_2}). For a fixed dimension $0\leq d\leq\min[m,T]$, and an arbitrary permutation 
\begin{align*}
\sigma: \left\{1,2,.\dots,\gaussnum{T}{d}\right\}\rightarrow \left\{1,2,.\dots,\gaussnum{T}{d}\right\}
\end{align*}
which acts on subspaces of dimension $d$, define $P_\sigma(\pi_x)$ as
\begin{align*}
P_\sigma(\pi_x)=\left\{
\begin{array}{ll}
P_{\Pi_X}(\sigma(\pi_x)) & \textrm{if $\dim(\pi_x)= d$},\\
P_{\Pi_X}(\pi_x) & \textrm{if $\dim(\pi_x)\neq d$}.
\end{array}
\right.
\end{align*}
Also define $P^*(\pi_x)=\frac{1}{\gaussnum{T}{d} !} \sum_\sigma P_\sigma(\pi_x)$ where the summation is over all possible permutations. Rewriting the mutual information in \eqref{eq:mutual_information_s_1} as a function of the input distribution and the transition probabilities, $I(P_{\Pi_X}(\pi_x) , P_{\Pi_Y|\Pi_X} (\pi_y|\pi_x))$, we have
\begin{align*}
\quad I(P^*(\pi_x), &P_{\Pi_Y|\Pi_X}(\pi_y|\pi_x)) \\
&= I\left(\frac{1}{\gaussnum{T}{d} !} \sum_\sigma P_\sigma(\pi_x), P_{\Pi_Y|\Pi_X}(\pi_y|\pi_x) \right)\\
&\stackrel{(a)}{\geq}  \frac{1}{\gaussnum{T}{d} !} \sum_\sigma I( P_\sigma(\pi_x), P_{\Pi_Y|\Pi_X}(\pi_y|\pi_x))\\
&\stackrel{(b)}{=} I(P_{\Pi_X}(\pi_x), P_{\Pi_Y|\Pi_X}(\pi_y|\pi_x))
\end{align*}
where $(a)$ is due to concavity of the mutual information with respect to the input distribution, and $(b)$ holds because $I( P_\sigma(\pi_x), P_{\Pi_Y|\Pi_X}(\pi_y|\pi_x))=I(P_{\Pi_X}(\pi_x), P_{\Pi_Y|\Pi_X}(\pi_y|\pi_x))$ for all $\sigma$, since the permutation only permutes the terms in a summation in (\ref{eq:mutual_information_s_1}).

Note that $P^*(\pi_x)$ assigns equal probabilities to all subspaces with dimension $d$, and the above-mentioned inequality shows that it is as good as the optimal input distribution. A similar argument holds for all $0\leq d \leq \min[m,T]$. Therefore, a dimensional-uniform distribution achieves the capacity of the channel. 
\end{proof}

%--------------------------------------------------------------------
\begin{proof}[Proof of Lemma~\ref{lem:P2P_MutualInfo_Subspace_Final}]
Assuming an optimal input probability distribution of the form (\ref{eq:optimal_dist_form}), 
the probability of receiving a specific subspace $\Pi_Y=\pi_y$ at the 
receiver can be written as
\begin{align*}
P_{\Pi_Y}(\pi_y) &= \sum_{\pi_x\in\wt{\Xca}} P_{\Pi_Y|\Pi_X}(\pi_y|\pi_x)P_{\Pi_X}(\pi_x) \nonumber\\
&= \sum_{\begin{subarray}{c} \pi_x\in\wt{\Xca},\\ \pi_y\sqsubseteq\pi_x \end{subarray}} \psi(T,n,\pi_y) q^{-nd_x} \frac{\alpha_{d_x}}{\gaussnum{T}{d_x}}.
\end{align*}
Splitting the summation into two, we can write
\begin{align}\label{eq:P_subY_middle1}
P_{\Pi_Y}(\pi_y) &= \psi(T,n,\pi_y) \sum_{d_x=d_y}^{\min[m,T]} \sum_{\begin{subarray}{c} \pi_x\in\wt{\Xca},\\ \dim(\pi_x)=d_x,\\ \pi_y\sqsubseteq\pi_x \end{subarray}} \frac{q^{-nd_x} \alpha_{d_x}}{\gaussnum{T}{d_x}},
\end{align}
where $d_y=\dim(\pi_y)$. Using the following result, 
Lemma~\ref{lem:num_pi_contain_pi0}, we can replace the second 
summation in \eqref{eq:P_subY_middle1}.

\begin{lemma}\label{lem:num_pi_contain_pi0}
Let $\pi_y$ be a fixed subspace of $\mathbb{F}_q^T$ with dimension $d_y$. Then 
the number of different subspaces $\pi_x\in\mathbb{F}_q^T$ with dimension 
$d_x$, $d_y\le d_x\le T$, that contain $\pi_y$ is equal to 
$\gaussnum{T-d_y}{d_x-d_y}$.
\end{lemma}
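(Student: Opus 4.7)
The plan is to prove the lemma by exhibiting a bijection between the set we want to count and the set of $(d_x - d_y)$-dimensional subspaces of a $(T-d_y)$-dimensional space, whose size is the Gaussian coefficient $\gaussnum{T-d_y}{d_x-d_y}$ by definition.

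First, I would form the quotient space $V \triangleq \Fbb_q^T / \pi_y$, which is a vector space over $\Fbb_q$ of dimension $T - d_y$, and consider the canonical projection $\phi : \Fbb_q^T \to V$. The key observation is that a subspace $\pi_x \sqsubseteq \Fbb_q^T$ contains $\pi_y$ if and only if $\pi_x = \phi^{-1}(W)$ for some subspace $W \sqsubseteq V$, and in that case $\dim(\pi_x) = \dim(W) + d_y$. Hence the map $\pi_x \mapsto \phi(\pi_x) = \pi_x / \pi_y$ is a bijection between $\{\pi_x \sqsubseteq \Fbb_q^T : \pi_y \sqsubseteq \pi_x,\ \dim(\pi_x) = d_x\}$ and $\{W \sqsubseteq V : \dim(W) = d_x - d_y\}$.

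Therefore the number we want equals $|\mathrm{Gr}(T - d_y, d_x - d_y)_q| = \gaussnum{T-d_y}{d_x-d_y}$ by the definition of the Grassmannian and the Gaussian coefficient in \eqref{eq:GaussianNumber}. There is no real obstacle here; the only step requiring care is the verification that the correspondence $\pi_x \leftrightarrow \pi_x/\pi_y$ preserves dimension according to the formula $\dim(\pi_x/\pi_y) = \dim(\pi_x) - \dim(\pi_y)$, which is a standard consequence of the first isomorphism theorem for vector spaces applied to the restricted projection $\phi|_{\pi_x}$ whose kernel is exactly $\pi_y$.

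As a sanity check, one can note the symmetry $\gaussnum{T-d_y}{d_x-d_y} = \gaussnum{T-d_y}{T-d_x}$, which reflects the dual statement that specifying a $d_x$-dimensional subspace containing a fixed $d_y$-dimensional subspace is the same as specifying a $(T-d_x)$-dimensional complementary direction in the $(T-d_y)$-dimensional quotient. This proof is short and purely combinatorial; it requires no asymptotic approximation in $q$ and the result holds for all prime powers $q \geq 2$.
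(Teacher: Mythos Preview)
Your proof is correct. The quotient-space bijection you describe is the standard way to establish this classical identity: the correspondence theorem (lattice isomorphism theorem) for vector spaces says precisely that subspaces of $\Fbb_q^T$ containing $\pi_y$ are in dimension-shifting bijection with subspaces of $\Fbb_q^T/\pi_y$, and the count then follows from the definition of the Gaussian coefficient.

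The paper itself does not give an argument at all; it simply cites \cite[Lemma~2]{GadYan-IT10} and says the result follows by a proper choice of parameters. Your approach is therefore strictly more self-contained and more informative than what appears in the paper, while remaining just as short. The only difference is one of presentation: the paper defers to an external reference, whereas you supply the elementary proof directly.
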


\begin{proof}
This lemma can be proved by applying \cite[Lemma~2]{GadYan-IT10} with proper choice
of the parameters.
\end{proof}

Using Lemma~\ref{lem:num_pi_contain_pi0} we can rewrite \eqref{eq:P_subY_middle1} as
\begin{align}
P_{\Pi_Y}(\pi_y) &= \psi(T,n,\pi_y) \sum_{d_x=d_y}^{\min[m,T]} \gaussnum{T-d_y}{d_x-d_y} \frac{q^{-nd_x} \alpha_{d_x}}{\gaussnum{T}{d_x}} \nonumber\\
&\stackrel{(a)}{=} \frac{\psi(T,n,\pi_y)}{\gaussnum{T}{d_y}} \sum_{d_x=d_y}^{\min[m,T]} \gaussnum{d_x}{d_y} q^{-nd_x} \alpha_{d_x} \nonumber \\
&= \frac{\psi(n,d_y)}{\gaussnum{T}{d_y}} \sum_{d_x=d_y}^{\min[m,T]} \gaussnum{d_x}{d_y} q^{-nd_x} \alpha_{d_x},
\label{eq:P_subY_simplified}
\end{align}
where $(a)$ follows from the following result, Lemma~\ref{lem:GaussianNum_Relation1}.

\begin{lemma}\label{lem:GaussianNum_Relation1}
The following relation for the Gaussian number holds \cite{Gab-PIT85,And-Book76}
\[
\gaussnum{T-d_y}{d_x-d_y} \gaussnum{T}{d_y} = \gaussnum{T}{d_x} \gaussnum{d_x}{d_y}.
\]
\end{lemma}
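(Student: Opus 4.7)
The identity is a finite-field analogue of the classical binomial identity $\binom{T-d_y}{d_x-d_y}\binom{T}{d_y}=\binom{T}{d_x}\binom{d_x}{d_y}$, so two proof routes are natural: a direct algebraic one from the product formula \eqref{eq:GaussianNumber}, or a combinatorial double-counting argument. I would present the combinatorial proof as the main route, with the algebraic computation mentioned as an alternative verification.

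The plan is to count, in two ways, the number of nested pairs $(\pi_y, \pi_x)$ of subspaces of $\Fbb_q^T$ with $\pi_y\sqsubseteq\pi_x$, $\dim(\pi_y)=d_y$, and $\dim(\pi_x)=d_x$. First I would choose $\pi_x$ inside $\Fbb_q^T$ (which can be done in $\gaussnum{T}{d_x}$ ways by definition) and then choose $\pi_y$ as a $d_y$-dimensional subspace of the ambient $d_x$-dimensional space $\pi_x$ (yielding $\gaussnum{d_x}{d_y}$ choices). The total count is therefore $\gaussnum{T}{d_x}\gaussnum{d_x}{d_y}$, which is the right-hand side.

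For the other count, I would fix $\pi_y$ first (in $\gaussnum{T}{d_y}$ ways) and then enumerate the $d_x$-dimensional subspaces $\pi_x$ containing $\pi_y$. Passing to the quotient $\Fbb_q^T/\pi_y\cong \Fbb_q^{T-d_y}$, such $\pi_x$ correspond bijectively to $(d_x-d_y)$-dimensional subspaces of $\Fbb_q^{T-d_y}$ via $\pi_x\mapsto \pi_x/\pi_y$, so there are $\gaussnum{T-d_y}{d_x-d_y}$ of them. This is exactly the statement invoked in Lemma~\ref{lem:num_pi_contain_pi0}, giving the left-hand side $\gaussnum{T}{d_y}\gaussnum{T-d_y}{d_x-d_y}$. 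Equating the two counts yields the claim.

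There is no real obstacle: both counts are immediate, and the only subtlety is the bijection between $d_x$-dimensional subspaces of $\Fbb_q^T$ containing $\pi_y$ and $(d_x-d_y)$-dimensional subspaces of the quotient, which is a standard linear-algebra fact. As a sanity check, I would expand both sides using the $q$-factorial notation $[n]_q! \triangleq \prod_{j=1}^{n}(q^{j}-1)$ and observe that both reduce to $\frac{[T]_q!}{[d_y]_q!\,[d_x-d_y]_q!\,[T-d_x]_q!}$, confirming the identity algebraically. Since the lemma is an elementary $q$-analogue identity, I would keep the write-up brief and simply cite \cite{VanLintWilson-Combinatorics,And-Book76,Gab-PIT85} as the authors already do.
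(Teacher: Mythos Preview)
Your proposal is correct: the double-counting of nested subspace pairs $(\pi_y,\pi_x)$ is the standard combinatorial proof of this $q$-analogue identity, and the algebraic check via $q$-factorials is a valid alternative. The paper itself does not supply a proof at all---it simply states the identity and cites \cite{Gab-PIT85,And-Book76}---so your write-up actually goes beyond what the authors provide; keeping it brief and citing the same references, as you suggest, is entirely appropriate.
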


Now we can simplify the mutual information $I(\Pi_X;\Pi_Y)$ in 
\eqref{eq:mutual_information_s_1} as follows. Using \eqref{eq:P2P_channel_transfer_prob_2}, 
\eqref{eq:optimal_dist_form}, and \eqref{eq:P_subY_simplified} for $I(\Pi_X;\Pi_Y)$ 
we can write 
\begin{align*}
I(\Pi_X;\Pi_Y) &= \sum_{\pi_x\in\wt{\Xca},\pi_y\in\wt{\Yca}}  P_{\Pi_X}(\pi_x) P_{\Pi_Y|\Pi_X}(\pi_y|\pi_x)\log_2\left(\frac{P_{\Pi_Y|\Pi_X}(\pi_y|\pi_x)}{P_{\Pi_Y}(\pi_y)}\right) \\
&= \sum_{d_x=0}^{\min[m,T]} \sum_{d_y=0}^{\min[n,d_x]} \sum_{\begin{subarray}{c} \pi_x\in\wt{\Xca},\\ \dim(\pi_x)=d_x \end{subarray}} \sum_{\begin{subarray}{c} \pi_y\in\wt{\Yca},\\ \dim(\pi_y)=d_y,\\ \pi_y\sqsubseteq\pi_x \end{subarray}} \frac{\alpha_{d_x} \psi(n,d_y) q^{-nd_x}}{\gaussnum{T}{d_x}} \log_2\left(\frac{q^{-nd_x}}{f(d_y)} \right),
\end{align*}
where 
\begin{equation}%\label{eq:f(d_y)}
f(d_y)\triangleq\frac{P_{\Pi_Y}(\pi_y)}{\psi(n,d_y)}=\frac{1}{\gaussnum{T}{d_y}} \sum_{d_x=d_y}^{\min[m,T]} \gaussnum{d_x}{d_y} q^{-nd_x} \alpha_{d_x},
\end{equation}
because $P_{\Pi_Y}(\pi_y)$ only depends on $d_y$. Now observe that the two 
inner most summations depend on $\pi_x$ and $\pi_y$ only through their dimensions.
So we can write
\begin{align*}
I(\Pi_X;\Pi_Y) =& \sum_{d_x=0}^{\min[m,T]} \alpha_{d_x}q^{-nd_x} \sum_{d_y=0}^{\min[n,d_x]} \psi(n,d_y)\gaussnum{d_x}{d_y} \log_2\left(\frac{q^{-nd_x}}{f(d_y)} \right) .
\end{align*}
Then using Lemma~\ref{lem:psi_Recursive} in \S\ref{sec:PrelimLemma} we can 
further simplify the mutual information and write
\begin{align}%\label{eq:P2P_MutualInfo_Subspace_Final}
I(\Pi_X;\Pi_Y) =& -\sum_{d_x=0}^{\min[m,T]} \alpha_{d_x} nd_x  \log_2{q} \nonumber\\
& - \sum_{d_x=0}^{\min[m,T]} \alpha_{d_x} q^{-nd_x} \sum_{d_y=0}^{\min[n,d_x]} \psi(n,d_y)\gaussnum{d_x}{d_y} \log_2(f(d_y)), 
\end{align}
that is the assertion of Lemma~\ref{lem:P2P_MutualInfo_Subspace_Final}.
\end{proof}

%--------------------------------------------------------------------
\begin{proof}[Proof of Lemma~\ref{lem:I_Derivative_Simplified}]
By taking the partial derivative of the mutual information with 
respect to $\alpha_k$, we have that
\begin{align*}
I'_k \triangleq& \frac{\partial I(\Pi_X;\Pi_Y)}{\partial \alpha_k} \nonumber\\
=& -nk\log_2{q} - \sum_{d_y=0}^{\min[n,k]} \psi(n,d_y) \gaussnum{k}{d_y} q^{-nk} \log_2\left(f(d_y) \right)\nonumber\\
& - \sum_{d_x=0}^{\min[m,T]} \alpha_{d_x} \sum_{d_y=0}^{\min[n,d_x,k]} \psi(n,d_y) \gaussnum{d_x}{d_y} q^{-nd_x} \frac{\gaussnum{k}{d_y} q^{-nk} \log_2e}{\gaussnum{T}{d_y} f(d_y)}. \nonumber\\
\end{align*}
\begin{align*}
I'_k =& -nk\log_2{q} - \sum_{d_y=0}^{\min[n,k]} \psi(n,d_y) \gaussnum{k}{d_y} q^{-nk} \log_2\left(f(d_y) \right)\nonumber\\
& - \sum_{d_y=0}^{\min[n,k]} \frac{\gaussnum{k}{d_y} \psi(n,d_y) q^{-nk}}{f(d_y)} \underbrace{\sum_{d_x=d_y}^{\min[m,T]} \alpha_{d_x} \frac{\gaussnum{d_x}{d_y}}{\gaussnum{T}{d_y}} q^{-nd_x} }_{f(d_y)} \log_2e \nonumber\\
\stackrel{(a)}{=} & -nk\log_2{q} - \sum_{d_y=0}^{\min[n,k]} \psi(n,d_y) \gaussnum{k}{d_y} q^{-nk} \log_2\left(f(d_y) \right) -\log_2{e},
\end{align*}
where to derive $(a)$ we use Lemma~\ref{lem:psi_Recursive} in \S\ref{sec:PrelimLemma}. 
\end{proof}

\begin{proof}[Proof of Lemma~\ref{lem:DominTerm_DerivativeI}]
For convenience we rewrite \eqref{eq:f_Order_Approx} again
\begin{equation}\label{eq:f_Order_Approx_apndx}
\log_2\left(f(d_y)\right) = -d_yT\log_2{q} + O(q^{-1}) + \log_2\left(\sum_{d_x=d_y}^{\min[m,T]} q^{-(n-d_y)d_x} \alpha_{d_x} \right).
\end{equation}
We prove the assertion in two steps for every $k$. First, let us assume that the $\alpha_i$'s are such that we have $\log_2{\left(f(\min[n,k])\right)}=o(q)$. Then using \eqref{eq:f_Order_Approx_apndx} one can conclude that
\[
\sum_{d_x=\min[n,k]}^{\min[m,T]} q^{-(n-d_y)d_x} \alpha_{d_x} = 2^{-o(q)},
\]
so we should have $\alpha_i=2^{-o(q)}$ for $\min[n,k]\le i\le\min[m,T]$. We know that $0\le\alpha_i\le 1$, and $\sum_{i=0}^{\min[m,T]} \alpha_i=1$, so $\exists j:\ \alpha_j=\Omega(1)$. So we can deduce that
\[
\log_2(f(d_y)) = \left\{\begin{array}{ll}
o(q) & j< d_y\le\min[n,k],\\
\Theta(\log{q}) & 0\le d_y\le j,
\end{array} \right.
\]
where $j$, $0\le j\le\min[n,k]$, is the largest index such that $\alpha_j=\Omega(1)$. So in this case the dominating term in the summation of \eqref{eq:I_Derivative_midl1} is the one obtained for $d_y=\min[n,k]$ because the order difference between each term inside the summation of \eqref{eq:I_Derivative_midl1} is at least of order $\Theta(q)$.

Now, for the second case, let us assume that the $\alpha_i$'s are such that we have $\log_2{\left(f(\min[n,k])\right)}=\Omega(q)$. We will show that this assumption leads to a contradiction. Using \eqref{eq:f_Order_Approx_apndx} we can write
\[
\sum_{d_x=\min[n,k]}^{\min[m,T]} q^{-(n-d_y)d_x} \alpha_{d_x} = 2^{-\Omega(q)},
\]
so we should have $\alpha_i=2^{-\Omega(q)}$ for $\min[n,k]\le i\le\min[m,T]$.  As before, we find the asymptotic behavior of $\log_2(f(d_y))$ for different values of $d_y$ but in this case we should make finer regimes for $\log_2(f(d_y))$. The asymptotic behavior of $\alpha_i$, $0\le i\le\min[n,k]$, is either $2^{-\Omega(q)}$ or $2^{-o(q)}$. So we can write
\[
\log_2(f(d_y)) = \left\{\begin{array}{ll}
\Omega(q) & l< d_y \le\min[n,k],\\
o(q) & j< d_y\le l,\\
\Theta(\log{q}) & 0\le d_y\le j,
\end{array} \right.
\]
where $l$, $0\le l\le\min[n,k]$, is the largest index such that $\alpha_i=2^{-o(q)}$ which means that $\alpha_i=2^{-\Omega(q)}$ for $l<i\le\min[m,T]$. As before $j$, $0\le j\le\min[n,k]$, is the largest index such that $\alpha_j=\Omega(1)$.
%we know that $\exists j:\ \alpha_j=\Omega(1)$.
Now we check the Kuhn-Tucker conditions, \eqref{eq:KuhnTuckerCond_I}, for $I'_k$ and $I'_j$. From the above argument we have that $I'_k=\Omega(q)$ and $I'_j=\Theta(\log{q})$. We know that $\alpha_j=\Omega(1)>0$, so we have
$I'_j=\Theta(\log{q}) = \lambda$.
On the other hand, we have
$I'_k=\Omega(q)\le\lambda$,
which is a contradiction implying the second case cannot occur. This completes the proof.
\end{proof}

\begin{proof}[Proof of Lemma~\ref{lem:N}]
There are $\gaussnum{d_1}{d_{12}}\=q^{d_{12}(d_1-d_{12})}$ different choices for the intersection of $\pi_1$ and $\pi_2$. We have to choose $d_2-d_{12}$ basis vectors for the rest of the subspace. This can be done in 
\begin{align*}
&\frac{\left(q^T-q^{d_1}\right)\left(q^T-q^{d_1+1}\right) \dots \left(q^T-q^{d_1+d_2-d_{12}-1}\right) }
{\left(q^{d_2}-q^{d_{12}}\right) \left(q^{d_2}-q^{d_{12}+1}\right) \dots \left(q^{d_2}-q^{d_2-1}\right)}
\= q^{(d_2-d_{12})(T-d_2)}
\end{align*}
ways. So we have $n(d_1,d_2,d_{12}) \= q^{d_{12}(d_1-d_{12}) +
  (d_2-d_{12})(T-d_2)}$. 
The proof follows from the
results in \cite[Lemma~2]{GadYan-IT10}, by proper choice of parameters.  Independently, an alternate proof of this lemma
  appeared in our paper \cite{MoJaDiFr-ITW09}.
\end{proof}

%--------------------------------------------------------------------
\begin{proof}[Proof of Lemma~\ref{lem:A}]
Define $\pi=\pi_1 + \pi_2$, where $\dim(\pi)=\dim(\pi_1) +\dim(\pi_2) - \dim(\pi_1\cap \pi_2)=d_1+d_2-d_{12}\triangleq d$. The proof of this lemma is similar to that of Lemma~\ref{lem:N}, unless we can only choose the last $d_2-d_{12}$ basis vectors from $\pi$ instead of $\Fbb_q^T$. Therefore replacing $T$ in Lemma~\ref{lem:N} with $d$, we have
$a(\pi_1,\pi_2)\=q^{d_{12}(d_1-d_{12}) + (d_2-d_{12})(d-d_2)}=q^{d_2(d_1-d_{12})}$.
\end{proof}

%--------------------------------------------------------------------
\begin{proof}[Proof of Lemma~\ref{lem:corner_col}]
%The proof for the elements of the second set is straight forward. We only focus on the first one. 
%Define $R_i(d_1,d_2)=d_i(T-d_1-d_2)$, for $i=1,2$. 
Let $(R_1,R_2)$ be a corner point of the region $\mathcal{R}_{\mathrm{col}}$. Since $\mathcal{R}_{\mathrm{col}}$ is the convex hull of a set of primitive regions, there should exist a primitive region $\mathcal{R}(d_1,d_2)$ which contains $(R_1,R_2)$ as a corner point, \emph{i.e.,} 
\begin{align*}
\exists (d_1,d_2) \in \mathcal{D}_{\mathrm{col}}, \quad  (R_1,R_2)=(R_1(d_1,d_2),R_2(d_1,d_2)).
\end{align*}
We will show that any point $(R_1(d_1,d_2),R_2(d_1,d_2))$ is dominated by the segment connecting $(R_1(d_1+1,d_2),R_2(d_1+1,d_2))$ and $(R_1(d_1,d_2+1),R_2(d_1,d_2+1))$. In order to show that, we have to prove that there exists some $\lambda \in [0,1]$, such that 
\begin{align} \nonumber
R_1(d_1,d_2)&< \lambda R_1(d_1+1,d_2) + (1-\lambda) R_1(d_1,d_2+1),\\
R_2(d_1,d_2)&< \lambda R_2(d_1+1,d_2) + (1-\lambda) R_2(d_1,d_2+1).\label{con2}
\end{align}
After a little simplification, \eqref{con2} can be rewritten as 
\begin{align*}
\lambda [T-d_1-d_2-1] &< d_1,\\
(1-\lambda) [T-d_1-d_2-1] &< d_2,
\end{align*} 
\begin{align*}
\mbox{or } \quad \frac{d_1}{T-1-d_1-d_2}< \lambda < \frac{T-1-d_1-2d_2}{T-1-d_1-d_2}.
\end{align*}
The last two inequalities can be satisfied for some choice of $\lambda$ if and only if $d_1+d_2< (T-1)/2$.  Therefore, if we have $d_1<m_1$, $d_2< m_2$, and $d_1+d_2< (T-1)/2$ for some  $(d_1,d_2)\in\mathcal{D}_{\mathrm{col}}$, then  $(d_1+1,d_2)$ and $(d_1,d_2+1)$ also belong to $\mathcal{D}_{\mathrm{col}}$, and hence, $(R_1(d_1,d_2),R_2(d_1,d_2))$ is an interior point, and cannot be on the boundary of the region. Eliminating such $(d_1,d_2)$ from $\mathcal{D}_{\mathrm{col}}$, we get $\widetilde{\mathcal{D}}$.

It is also easy to show that all of the rate pairs corresponding to $(d_1,d_2)\in\widetilde{\mathcal{D}}$ are on the boundary of $\Rca_{\mathrm{col}}$. This can be done by comparing the slope of the connecting segment for two consecutive points (according to the order they are appeared in $\widetilde{\mathcal{D}}$). The slopes are
\begin{align*}
&\mathcal{S} \{(R_1(t, m_2),R_2(t,m_2)); (R_1(t+1,m_2),R_2(t+1,m_2))\} \\
& \hspace{1cm}= -\frac{m_2}{T-2t-m_2-1}\qquad \textrm{for $0\leq t \leq m_1$}\\
%&\textrm{slope of segment between $(R_1(m_1-\alpha+t, m_2-t),R_2(m_1-\alpha+t,m_2-t))$ and}\\
%&\hspace{1cm} (R_1(m_1-\alpha+t+1,m_2-t-1),R_2(m_1-\alpha+t+1,m_2-t-1)  = -1 \qquad \textrm{for $0\leq t \leq  \alpha-1$}\\
&\mathcal{S} \{(R_1(m_1, t),R_2(m_1,t)); (R_1(m_1,t-1),R_2(m_1,t-1)) \}\\
& \hspace{1cm}= -\frac{T-2t-m_1-1}{m_1} \qquad \textrm{for $1\leq t \leq m_2$}.
\end{align*}
It is easy to check that all the slopes are negative and they are in a decreasing order. Therefore, no point in the set $\widetilde{\mathcal{D}}$ can be an interior point. 
\end{proof}

%--------------------------------------------------------------------
\begin{proof}[Proof of Lemma~\ref{lem:intersect_col_coop}]
Note that $\Rca_{\mathrm{col}} \nsubseteq \Rca_{\mathrm{coop}}$ implies $m_1+m_2>n$. Since $\Rca_{\mathrm{col}}$ is a convex region, its boundary intersects with the line $R_1+R_2=n(T-n)\log_2{q}$ in exactly two points (it cannot be only one point, otherwise it would be inside of $\Rca_{\mathrm{coop}}$). It is easy to verify that the rate points corresponding to  $(d_1,d_2)=((n-m_2)^+, \min[m_2,n])$ and $(d_1,d_2)=(\min[m_1,n],(n-m_1)^+)$ lie on both the boundary of $\Rca_{\mathrm{col}}$ and the line $R_1+R_2=n(T-n)\log_2{q}$. Therefore this line cannot intersect with the boundary of $\Rca_{\mathrm{col}}$ in any other point.
\end{proof}

%--------------------------------------------------------------------
%--------------------------------------------------------------------
\section{Extension to Packet Erasure Networks}\label{sec:apndx2}
Let us write the capacity for the erasure case as follows
\begin{align*}
C_e &= \max_{P_X} I(X;Y,N)\nonumber\\
&= \max_{P_X} \left[ I(X;N) + I(X;Y|N) \right]\nonumber\\
&\stackrel{(a)}{=} \max_{P_X} I(X;Y|N)\nonumber\\
&= \max_{P_X} \Expc{N}{I(X;Y)},
\end{align*}
where (a) follows from the independence of input distribution $P_X$ 
and the distribution of the number of received packets $P_N$. 

\noindent\textbf{The Upper Bound:}\\
We can write an upper bound for $C_e$ as follows
\begin{align*}
C_e &= \max_{P_X} \Expc{N}{[I(X;Y)]} \nonumber\\
&\le \Expc{N}{\max_{P_X} I(X;Y)} \nonumber\\
&= \Expc{N}{i^*(T-i^*)\log_2{q}},
\end{align*}
where $i^*=\min[m,N,\lfloor T/2 \rfloor]$. From here on let us assume that 
$m\le\lfloor T/2 \rfloor$. We thus have that $i^*=N$ and we can write
\begin{align*}
C_e &\le \Expc{N}{ N(T-N)\log_2{q}}.
\end{align*}
Let us define $\mu_1\triangleq\Expc{N}{N}$ and $\mu_2\triangleq\Expc{N}{N^2}$ so we can write
\begin{equation*}
C_e \le \left(\mu_1 T-\mu_2\right)\log_2{q}.
\end{equation*}
% A reasonable modeling assumption for the erasures is to assume that
% $N$, the number of erasures, is distributed according to a binomial
% distribution with parameter $p$, namely
% \begin{equation*}
% \Prob{N=i}=\left\{\begin{array}{cl} 
% \binom{n}{i} p^i(1-p)^{n-i} & i\in\{0,\ldots,n\},\\
% 0 & \text{otherwise},
% \end{array} \right.
% \end{equation*}
% where $p$ is the probability that a packet is not erased. Then we have $\mu_1=np$ and $\mu_2=np[(n-1)p+1]$ so for this erasure model the upper bound is:
% \begin{equation*}
% C_e \le np\left[T-(n-1)p-1\right]\log_2{q}.
% \end{equation*}
% Note that for $p=1$ we recover the previous error-free result which is $n(T-n)\log_2{q}$.

\noindent\textbf{The Lower Bound:}\\
For the lower bound we can write
\begin{align*}
C_e &= \max_{P_X} \Expc{N}{[I(X;Y)]} \nonumber\\
&\ge  \Expc{N}{I(X;Y)}_{\text{for some $P_X$}} \nonumber\\
&=  \Expc{N}{I(\Pi_X;\Pi_Y)}_{\text{for some $P_{\Pi_X}$}}.
\end{align*}

From \eqref{eq:P2P_MutualInfo_Subspace_Final} we know that we can write
\begin{align*}
I(\Pi_X;\Pi_Y) =& -\sum_{d_x=0}^{\min[m,T]} \alpha_{d_x} Nd_x  \log_2{q} \nonumber\\
& - \sum_{d_x=0}^{\min[m,T]} \alpha_{d_x} q^{-Nd_x} \sum_{d_y=0}^{\min[N,d_x]} \psi(N,d_y)\gaussnum{d_x}{d_y} \log_2(f(d_y)),
\end{align*}
where 
\begin{equation*}
f(d_y)\triangleq \frac{1}{\gaussnum{T}{d_y}} \sum_{d_x=d_y}^{\min[m,T]} \gaussnum{d_x}{d_y} q^{-Nd_x} \alpha_{d_x}.
\end{equation*}

Now assume that $m\le \lfloor T/2\rfloor$ and choose the input 
distribution to be $\alpha_k=1$ for some $0\le k\le m$ and 
$\alpha_i=0$ for all $i\neq k$. Then for this input distribution 
we have
\begin{align*}
I(\Pi_X;\Pi_Y) =& -kN\log_2{q} - q^{-kN} \sum_{d_y=0}^{\min[N,k]} \psi(N,d_y)\gaussnum{k}{d_y} \log_2(f(d_y))\nonumber\\
=& -kN\log_2{q} - q^{-kN} \sum_{d_y=0}^{\min[N,k]} \psi(N,d_y)\gaussnum{k}{d_y} \log_2(f(d_y)).
\end{align*}
Then assuming $q$ is large we may approximate the above mutual 
information as follows
\begin{align*}
I(\Pi_X;\Pi_Y) &\approx -kN\log_2{q} - \sum_{d_y=0}^{\min[N,k]} q^{-(N-d_y)(k-d_y)} \log_2(f(d_y)).
\end{align*}
The term $(N-d_y)(k-d_y)$ in the summation is maximized for $d_y=\min[N,k]$ 
and because we had shown before in Lemma~\ref{lem:DominTerm_DerivativeI} 
that $\log_2(f(d_y))=\Theta(\log{q})$, we can write
\begin{align*}
I(\Pi_X;\Pi_Y) &\approx -kN\log_2{q} - \log_2(f(\min[N,k])) \nonumber\\
&\approx -kN\log_2{q} - \log_2\left( q^{\min[N,k](k-T)-Nk}  \right)\nonumber\\
&= \min[N,k](T-k)\log_2{q}.
\end{align*}

So by choosing $k=m$ we can write the lower bound for $C_e$ as follows
\begin{align*}
C_e &\ge \Expc{N}{I(\Pi_X;\Pi_Y)}_{\text{for some $P_{\Pi_X}$}}\nonumber\\
&\approx  \Expc{N}{N(T-m)\log_2{q}}\nonumber\\
&= \mu_1\left(T-m\right)\log_2{q}.
\end{align*}
% Then for a binomial distribution model for the number of erasures, $N$
% we have
% \begin{align*}
% C_e \ge np\left(T-n\right)\log_2{q}.
% \end{align*}
% 
% Summarizing both the upper and lower bound
% \begin{align*}
% np\left(T-n\right)\log_2{q} \le C_e \le np\left[T-(n-1)p-1\right]\log_2{q},
% \end{align*}
% where the gap between the upper and lower bound is
% \begin{align*}
% [(np) (n-1)(1-p)]\log_2{q} \le \frac{n(n-1)}{4} \log_2{q}.
% \end{align*}
% %which can be made as small as possible by increasing the packet length $T$.
% 
% It is worth noting that by choosing the dimension of transmitted
% subspaces to be the median of the binomial distribution (which is
% roughly $\lfloor np\rfloor$) instead of $n$ in deriving the lower
% bound we would obtain a better gap between the upper and lower bound.

%--------------------------------------------------------------------
%--------------------------------------------------------------------

\begin{biographynophoto}{Mahdi Jafari Siavoshani} received the Bachelor degree
in Communication Systems with a minor in Applied Physics at Sharif
University of Technology, Tehran, Iran, in 2005.  He was awarded an
Excellency scholarship from EPFL, Switzerland, to study a master
degree in Communication System finished in 2007.  He is currently a
PhD student at the same university.  His research interests include
network coding, coding and information theory, wireless
communications, and signal processing.
\end{biographynophoto}

\begin{biographynophoto}{Soheil Mohajer} received the B.S. degree in electrical engineering 
from the Sharif University of Technology, Tehran, Iran, in 2004, and
the M.S. degrees in communication systems from Ecole Polytechnique
Fédérale de Lausanne (EPFL), Lausanne, Switzerland, in 2005.  He
completed his Ph.D. at EPFL in September 2010, and since October 2010
is a post-doctoral researcher at Princeton University. His fields of
interests are multiuser information theory, network coding theory, and
wireless communication.
\end{biographynophoto}

\begin{biographynophoto}{Christina Fragouli} is a
tenure-track Assistant Professor in the School of Computer and
Communication Sciences, EPFL, Switzerland. She received the
B.S. degree in Electrical Engineering from the National Technical
University of Athens, Athens, Greece, in 1996, and the M.Sc. and
Ph.D. degrees in electrical engineering from the University of
California, Los Angeles, in 1998 and 2000, respectively. She has
worked at the Information Sciences Center, AT\&T Shannon Labs, Florham Park New
Jersey, and the National University of Athens.  She also visited Bell
Laboratories, Murray Hill, NJ, and DIMACS, Rutgers University. From
2006 to 2007, she was an FNS Assistant Professor in the School of
Computer and Communication Sciences, EPFL, Switzerland. She served as
an editor for IEEE Communications Letters. She is currently serving as
an editor for IEEE Transactions on Information Theory, IEEE
Transactions on Communications, Elsevier Computer Communications and
IEEE Transactions on Mobile Computing.  She was the technical co-chair
for the 2009 Network coding symposium in Lausanne and has served on
program commmittees of several conferences.  She received the
Fulbright Fellowship for her graduate studies, the Outstanding
Ph.D. Student Award 2000-2001, UCLA, Electrical Engineering
Department, the Zonta award 2008 in Switzerland, and the Young
Investigator ERC starting grant in 2009. Her research interests are in
network information flow theory and algorithms, network coding, and
connections between communications and computer science.
\end{biographynophoto}

\begin{biographynophoto}{Suhas N. Diggavi} received the B. Tech. degree in electrical
engineering from the Indian Institute of Technology, Delhi, India, and
the Ph.D. degree in electrical engineering from Stanford University,
Stanford, CA, in 1998.

After completing his Ph.D., he was a Principal Member Technical Staff
in the Information Sciences Center, AT\&T Shannon Laboratories, Florham
Park, NJ. After that he was on the faculty at the School of Computer
and Communication Sciences, EPFL, where he directed the Laboratory for
Information and Communication Systems (LICOS).  He is currently a
Professor, in the Department of Electrical Engineering, at the
University of California, Los Angeles.  His research interests include
wireless communications networks, information theory, network data
compression and network algorithms.

He is a recipient of the 2006 IEEE Donald Fink prize paper award, 2005
IEEE Vehicular Technology Conference best paper award and the Okawa
foundation research award.  He is currently an editor for ACM/IEEE
Transactions on Networking and IEEE Transactions on Information
Theory. He has 8 issued patents.
\end{biographynophoto}

\end{document}